\newcommand{\nat}{\mathbb{N}}
\newcommand{\real}{\mathbb{R}}
\newcommand{\complex}{\mathbb{C}}
\newcommand{\integer}{\mathbb{Z}}
\newcommand{\hilbert}{\mathscr{H}}
\newcommand{\cstar}{$\mathrm{C}^{\star}$-algebra }
\newcommand{\czero}{$\mathcal{C}_{0}$-semigroup }
\newcommand{\bh}{\mathscr{B}(\hilbert)}
\newcommand{\bs}{\mathscr{X}}
\newcommand{\bx}{\mathscr{B}(\bs)}
\newcommand{\kh}{\mathscr{K}(\hilbert)}
\newcommand{\traceclass}{\mathscr{B}_{1}(\hilbert)}
\newcommand{\enmor}[1]{\mathrm{Lin} (#1)}
\newcommand{\kernel}[1]{\mathrm{Ker} (#1)}
\newcommand{\range}[1]{\mathrm{Ran} (#1)}
\newcommand{\domain}[1]{\mathrm{Dom} (#1)}
\newcommand{\dimension}[1]{\mathrm{dim} \, #1}
\newcommand{\iprod}[2]{\langle #1, #2 \rangle}
\newcommand{\tr}[1]{\mathrm{tr}\left\{ #1 \right\}}
\newcommand{\ptr}[2]{\mathrm{Tr}_{#1}\left\{#2\right\}}
\newcommand{\id}[1]{\mathbb{\mathrm{I}}_{#1}}
\newcommand{\comm}[2]{\left[#1,#2\right]}
\newcommand{\acomm}[2]{\left\{#1,#2\right\}}
\newcommand{\slim}[1]{s - \lim_{#1}}
\newcommand{\hc}{\mathrm{h.c.}}
\newcommand{\adj}{\star}
\newcommand{\dual}{\prime}
\newcommand{\hav}{\bar{H}}
\newcommand{\schr}{Schr\"{o}dinger }
\newcommand{\dm}[1]{\rho_{#1}}
\newcommand{\dmr}{\omega_{\mathrm{R}}}
\newcommand{\lgen}[2]{\mathcal{L}_{#1}(#2)}
\newcommand{\algen}[2]{\mathcal{L}_{#1}^{\adj} (#2)}
\newcommand{\lgens}[1]{\mathcal{L}_{#1}}
\newcommand{\algens}[1]{\mathcal{L}_{#1}^{\adj}}
\newcommand{\hsup}[2]{-i \comm{H_{\mathrm{S}}(#1)}{#2}}
\newcommand{\usup}[3]{\mathcal{U}_{#1, #2}(#3)}
\newcommand{\usups}[2]{\mathcal{U}_{#1, #2}}
\newcommand{\usupgen}[4]{\mathcal{U}^{#1}_{#2,#3}(#4)}
\newcommand{\ausupgen}[4]{(\mathcal{U}^{#1}_{#2,#3})^{-1}(#4)}
\newcommand{\usupsgen}[3]{\mathcal{U}^{#1}_{#2,#3}}
\newcommand{\ausupsgen}[3]{(\mathcal{U}^{#1}_{#2,#3})^{-1}}
\newcommand{\ausup}[3]{\mathcal{U}_{#1, #2}^{-1}(#3)}
\newcommand{\ausups}[2]{\mathcal{U}_{#1, #2}^{-1}}
\newcommand{\diss}[2]{\mathcal{D}_{#1}(#2)}
\newcommand{\disss}[1]{\mathcal{D}_{#1}}
\newcommand{\qdm}[3]{\Lambda_{#1, #2}(#3)}
\newcommand{\aqdm}[3]{\Lambda_{#1, #2}^{\adj}(#3)}
\newcommand{\qdms}[2]{\Lambda_{#1, #2}}
\newcommand{\aqdms}[2]{\Lambda_{#1, #2}^{\adj}}
\newcommand{\hilberts}{\hilbert_{\mathrm{S}}}
\newcommand{\hilbertr}{\hilbert_{\mathrm{R}}}
\newcommand{\hilbertsr}{\hilberts \otimes \hilbertr}
\newcommand{\hams}{H_{\mathrm{S}}}
\newcommand{\hamr}{H_{\mathrm{R}}}
\newcommand{\hami}{H_{\mathrm{int.}}}
\newcommand{\tcsystem}{\mathscr{B}_{1}(\hilberts)}
\newcommand{\tcenv}{\mathscr{B}_{1}(\hilbertr)}
\newcommand{\tcjoint}{\mathscr{B}_{1}(\hilbertsr)}
\newcommand{\iplgen}[1]{\tilde{\mathcal{L}}(#1)}
\newcommand{\aiplgen}[1]{\tilde{\mathcal{L}}^{\adj}(#1)}
\newcommand{\iplgens}{\tilde{\mathcal{L}}}
\newcommand{\aiplgens}{\tilde{\mathcal{L}}^{\adj}}
\newcommand{\ipdissprime}[1]{\tilde{\mathcal{D}}^{\prime}(#1)}
\newcommand{\ipqdm}[3]{\tilde{\Lambda}_{#1,#2}(#3)}
\newcommand{\ipqdms}[2]{\tilde{\Lambda}_{#1,#2}}
\newcommand{\ipdm}[1]{\tilde{\rho}_{#1}}
\newcommand{\ips}[2]{\tilde{S}_{#1}(#2)}
\newcommand{\ipr}[2]{\tilde{R}_{#1}(#2)}
\newcommand{\iphami}[1]{\tilde{H}_{\mathrm{int.}}(#1)}
\begin{document}

\theoremstyle{plain}
\newtheorem{theorem}{Theorem}
\newtheorem{lemma}{Lemma}
\newtheorem{proposition}{Proposition}

\theoremstyle{definition}
\newtheorem{definition}{Definition}
\newtheorem{corollary}{Corollary}
\newtheorem{fact}{Fact}
\newtheorem{example}{Example}

\theoremstyle{remark}
\newtheorem{remark}{Remark}

\title{On the application of Floquet theorem in development of time-dependent Lindbladians}

\author{Krzysztof Szczygielski}
\email[e-mail: ]{fizksz@ug.edu.pl}
\homepage[homepage: ]{www.fizksz.strony.ug.edu.pl}

\affiliation{Institute of Theoretical Physics and Astrophysics, University of Gdansk, Wita Stwosza 57, 80-952 Gdansk, Poland}

\date{\today}

\begin{abstract}

In this paper, the mathematical framework providing a description of completely positive and trace preserving dynamics of open quantum systems is addressed. Special case of time-dependent Lindbladian governed by periodic Hamiltonian is concerned. It is proven, that appropriate trace preserving dynamical map of such periodically driven system may be constructed by application of Floquet theory. Appropriate Lindbladian and Markovian master equation in weak coupling regime and resulting dynamical map are constructed. Some examples of application of developed technique are given.

\end{abstract}

\pacs{02.30.Hq, 02.30.Px, 02.30.Sa, 02.30.Tb, 03.65.Aa, 03.65.Db, 03.65.Yz}
\keywords{Floquet theorem, open quantum systems, completely positive maps, quantum dynamical semigroups, time-dependent generators}

\maketitle

\section{Introduction}
\label{Introduction}

A traditional mathematical apparatus used in a theory of open quantum systems was derived in 1970's and 1980's by Lindblad\cite{Lindblad76}, Kossakowski, Sudarshan and Gorini\cite{GKS76}, with further results provided by Davies\cite{Davies74,Davies1976} and others. In this framework, the evolution of density operator $\dm{t}$ of some physical system is considered in terms of contraction semigroup (with unity) of completely positive and trace preserving maps $\qdms{t}{t_0}$, with dense domain, acting on Banach space $\traceclass$ of trace-class operators in a way such that $\dm{t} = \qdm{t}{t_0}{\dm{t_0}}$. If a system's Hamiltonian is constant in time, as was originally considered, the structure of such semigroup is known to be of a form $\qdms{t}{t_0} = e^{(t-t_0)\mathcal{L}}$ with $\mathcal{L}$ known as Lindblad-Gorini-Kossakowski-Sudarshan generator\cite{Lindblad76,GKS76}. However, in general case of time-dependent Hamiltonian $H(t)$, obtaining an analytically rigorous expression for $\qdms{t}{t_0}$ may be even impossible and only formal solution can be proposed.
In this paper we focus on a much smaller class of systems with \emph{periodic} Hamiltonians and show that in such case one can actually provide a mathematically exact expression for $\qdms{t}{t_0}$. Its particular form was proposed \emph{ad hoc} in 2006 by Alicki, Lidar and Zanardi (see ref. \onlinecite{AlickiLidarZanardi2006}), however lacking a proper derivation. Here we derive $\lgens{t}$ from microscopic model on a background of \emph{Floquet theorem}, applicable to systems of differential equations with periodic coefficients. As a result, we show that this proposed intuitive approach is indeed mathematically strict and self-consistent.
\par
The article is structured as follows: in section \ref{sec:FloquetTheory} a general theory of ordinary differential equations is presented; we provide some basic notions and concepts. Next, the Floquet formalism is introduced, both in case of general, abstract setting of equations in Banach space, and in special case of \schr equation in Hilbert space (in subsect. \ref{subsec:QMperiodicH}). In section \ref{sec:OQS}, a short introduction to open quantum systems is given. The main result comes in subsection \ref{subsec:MainResult}, where a microscopic derivation of Markovian master equation in weak coupling regime under periodic Hamiltonian is presented (some technical details are available in appendix). Some various properties of obtained Lindbladian and dynamical map are elaborated. Finally, in section \ref{sec:Applications} some examples are given.
\par
With letters $\bs$ and $\hilbert$ we denote general Banach and Hilbert spaces, respectively. $\mathscr{B}(X)$ traditionally denotes an algebra of all bounded linear operators over space $X$, while $\enmor{X}$ will denote a space off all linear maps over $X$. When $\dimension{X} = k < \infty$, $\enmor{X} = \mathscr{B}(X) = \mathscr{M}_{k}(\mathbb{F})$, an algebra of all $k$-by-$k$ matrices with coefficients from field $\mathbb{F}$ ($\complex$ in general). $\id{X}$ stands for identity operator over $X$. Banach space of all trace-class operators over $\hilbert$ is marked with $\traceclass$. For given operator $A$, the domain, kernel, range and hermitian adjoint of $A$ are respectively denoted as $\domain{A}$, $\kernel{A}$, $\range{A}$ and $A^{\adj}$. With regular, uppercase letters ($A$, $B$, ...) we denote linear operators over Hilbert space. Calligraphic font ($\mathcal{A}$, $\mathcal{B}$, ...) denotes linear maps over $\traceclass$ and $\bh$.

\section{Ordinary differential equations}
\label{sec:FloquetTheory}

\subsection{Basic theorems and notation}

Here we present some framework of theory of ordinary differential equations. A deeper insight into this domain can be easily found in numerous literature\cite{Krein72,Chicone99}, therefore we omit some of technical details. We consider a general problem of homogeneous, ordinary differential equation (ODE) with periodic coefficient in some Banach space $\bs$,
\begin{equation}\label{ODEs}
	\frac{d}{dt}f(t) = A_{t}f(t), \qquad f(t_0) = f_{0} \in \bs,
\end{equation}
where the mapping $t \longmapsto A_{t} \in \bx$ is at least piecewise-continuous and periodic with period $T$, $A_{t+T} = A_{t}$ for all $t$. Let $\domain{A_{t}}$ be dense in $\bs$ and $f(t) \in \domain{A_{t}}$.
\vskip\baselineskip
By \emph{fundamental solution} $\Phi$ of such ODE \cite{Krein72,Chicone99} we will understand a continuously differentiable map $\real \ni t \longrightarrow \Phi_{t} \in \enmor{\bs}$, which satisfies\cite{Tanabe59,Tanabe60} the corresponding ODE, $\frac{d}{dt}\Phi_{t} = A_{t} \Phi_{t}$, and which is nonsingular, i.e. $\Phi_{t}^{-1}$ exists for all $t$. If, additionally, $\Phi_{t_0} = \id{\bs}$ for some $t_0 \in \real$ then $\Phi$ will be called a \emph{principal fundamental solution}. When $\dimension{\bs}$ is finite this condition implies that it has non-zero \emph{Wro\'{n}skian} $W_{\Phi}(t) = \det{\Phi_{t}}$ and $\Phi$ is commonly called a fundamental matrix. It can be proven that if $\Phi_{t}$ and $\Phi^{\prime}_{t}$ are both fundamental solutions of ODE \eqref{ODEs}, then there exists a constant operator $C$ such that $\Phi^{\prime}_{t} = \Phi_{t} C$. By virtue of that statement one can introduce a somehow more convenient object, a continuously differentiable mapping $(t,t_0) \longmapsto U_{t,t_0} = \Phi_{t} \Phi_{t_0}^{-1}$ known as \emph{state transition operator} which itself is a particular solution of the ODE,
\begin{equation}\label{eq:ODEstateTransitionOperator}
	\frac{d}{dt} U_{t,t_0} = A_{t} U_{t,t_0}, \qquad U_{t_0,t_0} = \id{\bs}
\end{equation}
so it is also a fundamental solution and, since it satisfies the boundary condition $U_{t_0, t_0} = \id{\bs}$, even principal. It satisfies \emph{Chapman -- Kolmogorov properties}\cite{Krein72,Chicone99}, namely $U_{t,t_0}^{-1} = U_{t_0, t}$, $U_{t,t_0} = U_{t,s}U_{s,t_0}$ for $t_0 \leqslant s \leqslant t$. It is known, that if $f(t)$ represents any general solution of \eqref{ODEs}, then there exist $\phi \in \bs$ and $t_0 \in \real$ such that $f(t) = \Phi_{t} \phi$ and $f(t) = U_{t,t_0} f(t_0)$ are particular solutions, and by proper choice of $t_0$ one can always get $\phi = f(t_0)$. Therefore the theory which is going to be outlined later, will be formulated entirely through $U_{t,t_0}$.
\vskip\baselineskip
\noindent For our purpose it is convenient to express a solution $U_{t,t_0}$ in form of ordered exponential\cite{Tanabe60,HuelgaRivas2012},
\begin{equation}\label{eq:OrderedExponential}
	U_{t,t_0} = \mathcal{T} \exp{\int\limits_{t_0}^{t} A_{t^{\prime}} dt^{\prime}} = \sum_{k=0}^{\infty} \frac{1}{k!} \int\limits_{t_0}^{t} dt_{1} \int\limits_{t_0}^{t} dt_{2} \, ... \int\limits_{t_0}^{t} dt_{k} \, \mathcal{T} \left\{ A_{t_1} A_{t_2} \, ... \, A_{t_k} \right\} ,
\end{equation}
where $\mathcal{T}$ stands for ordering operator.

\subsection{ODE with periodic coefficient. Floquet theory}

In case of finite-dimensional $\bs$, in order to find a solution of ODE \eqref{eq:ODEstateTransitionOperator} one can apply the celebrated \emph{Floquet theorem}\cite{Floquet1883} (we formulate it in language of $U_{t,t_0}$):

\begin{theorem}[\textbf{Floquet}]

Let be given a mapping $\real \ni t \longrightarrow A_{t} \in \bx$, at least piecewise continuous and periodic with period $T$ such that $A_{t+T} = A_{t}$ for every $t\in\real$. Then, if $U_{t,t_0}$ is a \emph{fundamental solution} of ordinary differential equation $\frac{d}{dt}f(t) = A_{t} f(t)$, then $U^{\prime}_{t,t_0} = U_{t+T,t_0}$ is also a fundamental solution.
\end{theorem}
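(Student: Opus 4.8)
The plan is to show that the shifted map $U'_{t,t_0} = U_{t+T,t_0}$ inherits all the defining properties of a fundamental solution — namely, that it satisfies the same ODE and is nonsingular — by exploiting the periodicity of the coefficient $A_t$.

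First I would verify that $U'_{t,t_0}$ solves the ODE $\frac{d}{dt}U'_{t,t_0} = A_t U'_{t,t_0}$. Differentiating with respect to $t$ and applying the chain rule gives
\begin{equation}\label{eq:FloquetProofDeriv}
	\frac{d}{dt} U'_{t,t_0} = \frac{d}{dt} U_{t+T,t_0} = \left. \frac{d}{ds} U_{s,t_0} \right|_{s=t+T} = A_{t+T}\, U_{t+T,t_0} = A_{t+T}\, U'_{t,t_0},
\end{equation}
where the third equality uses that $U_{t,t_0}$ is itself a fundamental solution, i.e. that it satisfies \eqref{eq:ODEstateTransitionOperator}. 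The decisive step is now to invoke periodicity: since $A_{t+T} = A_t$ for all $t$ by hypothesis, the right-hand side equals $A_t\, U'_{t,t_0}$, so $U'_{t,t_0}$ satisfies exactly the same differential equation as $U_{t,t_0}$.

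Next I would check nonsingularity and smoothness. Continuous differentiability of $U'$ is immediate, being a composition of the continuously differentiable map $U$ with the translation $t \mapsto t+T$. For invertibility, since $U_{s,t_0}^{-1}$ exists for every $s \in \real$ by the definition of a fundamental solution, in particular $(U'_{t,t_0})^{-1} = U_{t+T,t_0}^{-1}$ exists for every $t$; one may write it explicitly as $U_{t_0,t+T}$ via the Chapman--Kolmogorov property $U_{t,t_0}^{-1} = U_{t_0,t}$. These two facts together establish that $U'_{t,t_0}$ is a fundamental solution of \eqref{eq:ODEstateTransitionOperator}.

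I expect the argument to be essentially routine, with no serious obstacle; the only point requiring care is the chain-rule bookkeeping in \eqref{eq:FloquetProofDeriv}, ensuring that the shift $t \mapsto t+T$ does not introduce a spurious factor and that the derivative is evaluated at the shifted argument before periodicity is applied. It is worth remarking that the theorem as stated asserts only that $U'$ is \emph{a} fundamental solution, not that it equals $U$; the genuinely useful consequence — existence of a constant \emph{monodromy operator} $C$ with $U_{t+T,t_0} = U_{t,t_0}\,C$ — follows immediately afterward by combining this result with the earlier-quoted fact that any two fundamental solutions of \eqref{ODEs} differ by multiplication on the right by a constant operator.
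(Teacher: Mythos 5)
Your proof is correct and follows essentially the same route as the paper's: differentiate $U_{t+T,t_0}$ via the chain rule, use that $U$ solves the ODE to get $A_{t+T}U_{t+T,t_0}$, then invoke periodicity of $A_t$. The extra remarks on nonsingularity and the monodromy operator are sound additions that the paper handles separately in the surrounding text.
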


\begin{proof}
Set $\xi (t) = t+T$. Differentiating directly with respect to $t$, we obtain
\begin{equation}
	\frac{d}{dt}U_{\xi(t),t_0} = \frac{d\xi(t)}{dt} \frac{dU_{\xi(t),t_0}}{d\xi(t)} = A_{\xi(t)} U_{\xi(t),t_0} = A_{t+T} U_{t+T,t_0} = A_{t} U_{t+T,t_0}
\end{equation}
so $U_{t+T,t_0}$ is a fundamental solution.
\end{proof}

One can also see that if $A_{t}$ is periodic, then $U_{t,t_0}$ is invariant with respect to translation by $nT$, i.e. $U_{t,t_0} = U_{t+nT, t_0 + nT}$, $n\in\integer$. 
To prove this, write $U_{t+nT,t_0 +nT}$ as the ordered exponential \eqref{eq:OrderedExponential} and put $t_{k}^{\prime} = t_{k} - nT$ instead of $t_{k}$; then it easily follows from periodicity of $A_{t}$ that $\mathcal{T}\left\{ A_{t_{1}^{\prime}+nT} \, ... \, A_{t_{k}^{\prime}+nT} \right\} = \mathcal{T}\left\{ A_{t_{1}^{\prime}} \, ... \, A_{t_{k}^{\prime}} \right\}$ which yields $U_{t+nT,t_0 +nT} = U_{t,t_0}$. One also has $U_{t+T,t_0} = U_{t,t_0} U_{t_0 + T,t_0}$ where $U_{t_0 + T,t_0}$ is called \emph{a monodromy operator}. Again, proof here is simple; from Chapman -- Kolmogorov properties of $U_{t,t_0}$ as fundamental solution it easily follows that $U_{t+T,t_0} = U_{t,t_0 - T} = U_{t,t_0}U_{t_0,t_0-T} = U_{t,t_0} U_{t_0+T,t_0}$.

\begin{definition}[\textbf{Floquet representation}]
By \emph{Floquet representation of order $m$} we mean such a triple $(P,B,m) \in \bx \times \bx \times \nat$, that $P$ is periodic with period $mT$, $P_{t+mT,t_0} = P_{t,t_0}$ and $U_{t,t_0}$ may be expressed as $U_{t,t_0} = P_{t,t_0} e^{B(t-t_0)}$.
\end{definition}

\begin{remark}\label{remark:FloquetRepresentationOrder}
In traditional framework of finite-dimensional Banach space, one can prove that a sufficient condition for Floquet representation of order $m$ of $U_{t,t_0}$ to exist is the existence of logarithm of $m$-th power of monodromy operator\cite{MasseraSchaffer59,Schaffer1964}, i.e. if there exists a constant operator $B \in \bx$ such that $U_{t_0+T,t_0} = e^{mBT}$. Moreover, if $\bs$ is complex, there exists a Floquet representation of $U_{t,t_0}$ of order at most 1, and if $\bs$ is real, of order at most 2.
\end{remark}

\begin{corollary}
If $\bs$ is a finite-dimensional, complex Banach space and monodromy operator has a logarithm such that $U_{t_0+T,t_0} = e^{BT}$, there exists a Floquet representation (of order 1)  such that $U_{t,t_0} = P_{t,t_0} e^{B(t-t_0)}$ and $P$ is periodic with period $T$.
\end{corollary}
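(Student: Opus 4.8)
The plan is to construct the Floquet factor $P$ explicitly and verify directly that it carries the required periodicity, rather than appeal abstractly to Remark~\ref{remark:FloquetRepresentationOrder}. Since the factorization $U_{t,t_0} = P_{t,t_0}\, e^{B(t-t_0)}$ must hold with $B$ the prescribed logarithm of the monodromy operator, the only natural candidate is
\begin{equation}
	P_{t,t_0} := U_{t,t_0}\, e^{-B(t-t_0)}.
\end{equation}
First I would note that this $P$ automatically inherits the regularity and invertibility we need: $U_{t,t_0}$ is continuously differentiable and nonsingular as a (principal) fundamental solution, and $t \longmapsto e^{-B(t-t_0)}$ is smooth and everywhere invertible, so $P_{t,t_0} \in \bx$ is continuously differentiable in $t$ and nonsingular. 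The factorization itself is immediate by construction, so the entire content of the statement reduces to establishing $P_{t+T,t_0} = P_{t,t_0}$.

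The heart of the argument is a short computation using the monodromy relation already derived above, $U_{t+T,t_0} = U_{t,t_0}\, U_{t_0+T,t_0}$, together with the hypothesis $U_{t_0+T,t_0} = e^{BT}$. Inserting these into the definition and splitting $e^{-B(t+T-t_0)} = e^{-B(t-t_0)}\, e^{-BT}$ — which is legitimate because the exponents are scalar multiples of the same operator $B$ and hence commute — yields
\begin{equation}
	P_{t+T,t_0} = U_{t,t_0}\, e^{BT}\, e^{-B(t-t_0)}\, e^{-BT}.
\end{equation}
The same commutativity lets me move $e^{-B(t-t_0)}$ past $e^{BT}$, after which $e^{BT}$ and $e^{-BT}$ cancel, leaving $P_{t+T,t_0} = U_{t,t_0}\, e^{-B(t-t_0)} = P_{t,t_0}$. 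This proves periodicity with period $T$ and hence that $(P,B,1)$ is a Floquet representation.

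I do not expect any serious obstacle; the statement is essentially the $m=1$ specialization of Remark~\ref{remark:FloquetRepresentationOrder}, and the only point deserving care is the commutativity of the exponentials of $B$, which is exactly what forces the cancellation and would be unavailable for an arbitrary pair of noncommuting operators. It is also worth recording why the hypothesis can always be arranged over $\complex$: as a fundamental solution the monodromy operator $U_{t_0+T,t_0}$ is nonsingular, and every invertible operator on a finite-dimensional complex space admits a logarithm, so a suitable $B$ with $U_{t_0+T,t_0} = e^{BT}$ exists and the order-$1$ representation is guaranteed.
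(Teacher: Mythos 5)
Your proposal is correct and follows essentially the same route as the paper: the paper also defines $P_{t,t_0} = U_{t,t_0}\,e^{-B(t-t_0)}$ and verifies $P_{t+T,t_0}=P_{t,t_0}$ by inserting $U_{t+T,t_0}=U_{t,t_0}e^{BT}$ and cancelling the commuting exponentials of $B$. Your added remarks on the commutativity of the exponentials and on the existence of the logarithm over $\complex$ are consistent with the paper's Remark~\ref{remark:FloquetRepresentationOrder} but not needed beyond what the paper already records.
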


\begin{proof}
This claim is easy to prove by construction. Define $P_{t,t_0} = U_{t,t_0} e^{-B(t-t_0)}$. It follows directly from remark \ref{remark:FloquetRepresentationOrder}, that
\begin{equation}
	P_{t+T,t_0} = U_{t+T,t_0} e^{-B(t+T-t_0)} = U_{t,t_0} e^{BT} e^{-BT} e^{-B(t-t_0)} = U_{t,t_0} e^{-B(t-t_0)} = P_{t,t_0}.
\end{equation}
Therefore $P_{t,t_0}$ is periodic and $U_{t,t_0} = P_{t,t_0} e^{B(t-t_0)}$.
\end{proof}
\noindent Let us now assume that $\bs$ is a Hilbert space $\hilbert$. The crucial requirement that must be fulfilled in order to justify the Floquet approach in setting of general Hilbert space, is the existence of logarithm of monodromy operator\cite{Schaffer1964}. When $\hilbert$ is complex and finite-dimensional, the sufficient and necessary condition for existence of logarithm of operator $A$, represented by matrix from algebra $\mathcal{M}_{\dimension{\hilbert}}(\complex)$, is that $A^{-1}$ exists. In case of infinite-dimensional $\hilbert$, there may be no Floquet representation at all because logarithm of $A$ may not exist even if $A$ is invertible\cite{MasseraSchaffer59}. Therefore we will make few simplifying assumptions in order to proceed with derivation of appropriate dynamical semigroup at the very end. Namely, we assume, that:
\begin{enumerate}
\item logarithm of $U_{t_0+T,t_0}$ exists, i.e. there is a constant, bounded $B\in\bh$ such that $U_{t_0+T,t_0} = e^{BT}$, which implies, that $U_{t+T,t_0} = U_{t,t_0} e^{BT}$,
\item $B$ is a normal operator with spectrum $\sigma(B)$ being pure-point, i.e. it satisfies an eigenequation $B \phi_{k} = \mu_{k} \phi_{k}$ where $\{\phi_{k}\}$ is an orthonormal basis in $\hilbert$ and numbers $\mu_k \in \sigma(B)$ are called \emph{Floquet exponents} and, in principle, can be complex.  
\end{enumerate}
Obviously, monodromy operator $e^{BT}$ is diagonalized by the same set of vectors and $\sigma(e^{BT}) = \{e^{\mu_{k}T}, k \in \nat\}$.

\begin{proposition}\label{theorem:psiBasis}
Define new functions $\psi_{k}(t) = U_{t,t_0} \phi_k$ and $\phi_k (t) = e^{-\mu_k t}\psi_k (t)$. Then, set $\{\psi_{k}(t) \}$ is a basis in $\hilbert$ and functions $\phi_k (t)$ are periodic.
\end{proposition}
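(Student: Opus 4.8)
The plan is to treat the two assertions separately: the periodicity of $\phi_k(t)$ follows from a direct computation exploiting the eigenequation for $B$, while the claim that $\{\psi_k(t)\}$ is a basis rests on the invertibility of the state transition operator. Throughout I fix $t$ and rely only on the structural assumptions (1) and (2) together with the properties of $U_{t,t_0}$ already established as a principal fundamental solution.

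First I would establish periodicity. Since $B\phi_k = \mu_k \phi_k$ with $B$ normal and pure-point, the monodromy operator acts diagonally on the eigenbasis, $e^{BT}\phi_k = e^{\mu_k T}\phi_k$. Invoking assumption (1) in the form $U_{t+T,t_0} = U_{t,t_0} e^{BT}$, I would compute
\begin{equation}
	\phi_k(t+T) = e^{-\mu_k(t+T)} U_{t+T,t_0}\phi_k = e^{-\mu_k(t+T)} U_{t,t_0} e^{BT}\phi_k = e^{-\mu_k(t+T)} e^{\mu_k T} U_{t,t_0}\phi_k = e^{-\mu_k t} U_{t,t_0}\phi_k = \phi_k(t),
\end{equation}
so that $\phi_k$ is periodic with period $T$. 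This step is essentially mechanical once the eigenvalue relation and the translation rule for $U_{t,t_0}$ are in place.

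Next I would address the basis claim. As a principal fundamental solution, $U_{t,t_0}$ is nonsingular, and by the Chapman--Kolmogorov property its inverse is $U_{t,t_0}^{-1} = U_{t_0,t}$, itself a bounded operator (boundedness of both following from $A_t \in \bx$ and, in the Hilbert setting, $B \in \bh$). Hence for each fixed $t$ the map $U_{t,t_0}$ is a bounded bijection of $\hilbert$ onto itself with bounded inverse. Applying such a boundedly invertible operator to the orthonormal basis $\{\phi_k\}$ yields a \emph{Riesz basis} $\{\psi_k(t)\} = \{U_{t,t_0}\phi_k\}$: completeness is preserved because $U_{t,t_0}$ is continuous and surjective, so $\overline{\mathrm{span}}\,\{\psi_k(t)\} = U_{t,t_0}\,\overline{\mathrm{span}}\,\{\phi_k\} = U_{t,t_0}\hilbert = \hilbert$, while the two-sided Riesz bounds follow from the operator-norm estimates on $U_{t,t_0}$ and $U_{t,t_0}^{-1}$. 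Consequently every $x \in \hilbert$ admits a unique expansion in $\{\psi_k(t)\}$.

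The main point to be careful about is the precise sense of the word \emph{basis}. Because $U_{t,t_0}$ need not be unitary, the family $\{\psi_k(t)\}$ is in general not orthonormal, so the assertion should be read as: it is a Riesz basis, equivalent to the orthonormal basis $\{\phi_k\}$ through the bounded invertible operator $U_{t,t_0}$. The only hypothesis I genuinely exploit beyond elementary linear algebra is the simultaneous boundedness of $U_{t,t_0}$ and its inverse; once that is secured, I expect no substantial obstacle and the remaining content is the standard functional-analytic fact that boundedly invertible images of orthonormal bases are Riesz bases.
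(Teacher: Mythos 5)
Your proof is correct, and for the periodicity claim it is essentially the paper's own computation, merely collapsed into a single chain rather than passing through the intermediate identity $\psi_k(t+T)=e^{\mu_k T}\psi_k(t)$. Where you genuinely diverge is in the basis claim, and there your argument is the stronger one. The paper observes that $U_{t,t_0}$ is a bijection, declares that it therefore suffices to check linear independence, and verifies only that $\sum_k\alpha_k\psi_k(t)=0$ forces $\alpha_k=0$. In an infinite-dimensional $\hilbert$ this is not enough: a linearly independent family need not be complete, nor need it admit unique convergent expansions. Your route --- noting that $U_{t,t_0}$ and its inverse $U_{t_0,t}$ are both bounded (since $A_t\in\bx$), so that $\{U_{t,t_0}\phi_k\}$ is the image of an orthonormal basis under a boundedly invertible operator and hence a Riesz basis --- supplies exactly the missing completeness statement $\overline{\mathrm{span}}\{\psi_k(t)\}=U_{t,t_0}\hilbert=\hilbert$ together with the two-sided frame bounds. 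You are also right to flag that at this level of generality $U_{t,t_0}$ need not be unitary, so ``basis'' should be read as ``Riesz basis'' rather than ``orthonormal basis''; the paper only specializes to unitary propagators later, when $A_t=-iH(t)$. In short: same mechanism (invertibility of the state transition operator), but your version closes a genuine gap in the infinite-dimensional case that the paper's linear-independence check leaves open.
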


\begin{proof}

We assume that $\{\phi_k\}$ is a basis in $\hilbert$. Since $U_{t,t_0}^{-1}$ exists for all $t\in\mathbb{R}$, it is a bijection and set $\{\phi_k\}$ is isomorphic to $\{\psi_{k}(t)\}$. Therefore it is enough to prove the linear independence of $\{\psi_{k}(t)\}$. We obtain $0 = \sum_{k} \alpha_k \psi_{k}(t) = \sum_{k} \alpha_k U_{t,t_0} \phi_k = U_{t,t_0} \left( \sum_k \alpha_k \phi_k\right)$, which gives $\sum_k \alpha_k \phi_k = 0$, since it must hold for any $t$. But it immediately implies $\alpha_k = 0$ for all $k$ since $\{\phi_k\}$ is a basis and therefore linearly independent set. The conclusion is that $\{\psi_{k}(t)\}$ is also linearly independent.
\par
One can check that equality $\psi_k (t+T) = e^{\mu_k T} \psi_k (t)$ holds. Recall that $U_{t+T,t_0} = U_{t,t_0} e^{BT}$. For any $\psi_k (t) = U_{t,t_0} \phi_k$ we have
\begin{align}
	\psi_{k} (t+T) &= U_{t+T,t_0} \phi_k = U_{t,t_0} e^{BT} \phi_k = e^{\mu_k T} U_{t,t_0} \phi_k = e^{\mu_k T} \psi_k(t)
\end{align}
since $e^{BT} \phi_k = e^{\mu_k T} \phi_k$. Hence, calculating directly, we get
\begin{align}
	\phi_k (t+T) &= \psi_k(t+T) e^{-\mu_k (t+T)} = e^{\mu_k T} \psi_k (t) e^{-\mu_k (t+T)} = e^{-\mu_k t} \psi_k (t) = \phi_k (t)
\end{align}
which completes the proof.
\end{proof}

\section{Open Quantum Systems}
\label{sec:OQS}

\subsection{Preliminaries: periodic Hamiltonians and unitary propagator}
\label{subsec:QMperiodicH}

The starting point of our approach which will lead eventually to dynamical semigroups, is the well-known \schr equation, $\dot{\psi}(t) = -i H(t) \psi(t)$, where $\psi (t) \in \hilbert$, $\| \psi(t) \| = 1$ and $H(t) = H(t)^{\adj}$ is a self-adjoint, time-dependent Hamiltonian .
\par
We make a crucial assumption of periodicity of $H(t)$, namely $H(t+nT) = H(t)$ for $n \in \integer$, so the solution of \schr equation may be found by applying the Floquet theorem as outlined in previous sections, by putting $A_{t} = -i H(t)$ as stated in \eqref{ODEs}. State transition operator $U_{t,t_0}$ is now understood as a quantum-mechanical \emph{unitary propagator},
\begin{equation}\label{eq:UnitaryPropagatorDysonSeries}
	U_{t,t_0} = \mathcal{T} \exp{\left\{ -i \int\limits_{t_0}^{t} H(t^{\prime}) dt^{\prime} \right\}},
\end{equation}
which obviously satisfies all the properties of state transition operator, namely $U_{t,s}U_{s,t_0} = U_{t,t_0}$ for $t_0 \leqslant s \leqslant t$, $U_{t_0, t_0} = \id{\hilbert}$ and, naturally, $\frac{d}{dt}U_{t,t_0} = -i H(t) U_{t,t_0}$. Under the periodicity assumption of $H(t)$, we have a nice following proposition:

\begin{proposition}\label{prop:EvolutionFloquetTheorem}

There exist a constant, self-adjoint operator $\hav \in \bh$ and periodic $P_{t,t_0} \in \bh$ such that the evolution operator $U_{t,t_0}$ corresponding to $H(t)$ satisfies:
\begin{enumerate}
\item $U_{t+T,t_0} = U_{t,t_0}e^{-i\hav T}$,
\item $U_{t,t_0} = P_{t,t_0} e^{-i\hav (t-t_0)}$.
\end{enumerate}
\end{proposition}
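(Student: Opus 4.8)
The approach rests on a single feature available in the quantum setting that is absent from the general Banach-space theory: because $H(t)$ is self-adjoint, the propagator $U_{t,t_0}$ is unitary, and hence so is the monodromy operator $U_{t_0+T,t_0}$. The plan is therefore to take the constant operator $B\in\bh$ furnished by assumption~1, use unitarity to show that $B$ is skew-adjoint, set $\hav := iB$ (which is then self-adjoint), and finally reproduce the construction of the earlier corollary with $B$ replaced by $-i\hav$. Part~1 of the claim will be essentially immediate from assumption~1, and part~2 will follow by defining $P_{t,t_0}$ as the appropriately ``de-rotated'' propagator.

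First I would establish the self-adjointness of $\hav$. By assumption~2, $B$ is normal, $\comm{B}{B^{\adj}}=0$, with eigenequation $B\phi_k = \mu_k \phi_k$; by assumption~1 the monodromy operator equals $e^{BT}$. Using unitarity, $U_{t_0+T,t_0}^{\adj}U_{t_0+T,t_0} = \id{\hilbert}$, together with normality (so that the exponentials combine),
\begin{equation}
	\id{\hilbert} = e^{B^{\adj}T}\,e^{BT} = e^{(B^{\adj}+B)T}.
\end{equation}
Now $B^{\adj}+B$ is self-adjoint, hence has real spectrum; the relation $e^{(B^{\adj}+B)T}=\id{\hilbert}$ confines that spectrum to $\tfrac{2\pi i}{T}\integer$, so it equals $\{0\}$ and therefore $B^{\adj} = -B$. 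Equivalently, the eigenvalues $e^{\mu_k T}$ of the unitary monodromy operator lie on the unit circle, so $\mathrm{Re}\,\mu_k = 0$ and $\mu_k = -i\omega_k$ with $\omega_k\in\real$. Setting $\hav := iB = \sum_k \omega_k \proj{\phi_k}$ yields a self-adjoint, bounded operator with real, pure-point spectrum satisfying $e^{-i\hav T} = e^{BT} = U_{t_0+T,t_0}$.

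With $\hav$ in hand, part~1 is immediate from the implication noted in assumption~1, since $U_{t+T,t_0} = U_{t,t_0}e^{BT} = U_{t,t_0}e^{-i\hav T}$. For part~2, I would define $P_{t,t_0} := U_{t,t_0}\,e^{i\hav(t-t_0)}$ and verify $T$-periodicity by a direct computation mirroring the proof of the corollary:
\begin{equation}
	P_{t+T,t_0} = U_{t+T,t_0}\,e^{i\hav(t+T-t_0)} = U_{t,t_0}e^{-i\hav T}e^{i\hav T}e^{i\hav(t-t_0)} = U_{t,t_0}e^{i\hav(t-t_0)} = P_{t,t_0}.
\end{equation}
Rearranging the definition of $P_{t,t_0}$ then gives $U_{t,t_0} = P_{t,t_0}e^{-i\hav(t-t_0)}$, which is part~2, and boundedness of $P_{t,t_0}$ follows from that of $U_{t,t_0}$ and $\hav$.

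The main substantive step---the only one that goes beyond transcribing the finite-dimensional corollary---is showing that the given logarithm $B$ is skew-adjoint, equivalently that $\hav$ is self-adjoint rather than merely normal. This is precisely where unitarity of $U_{t,t_0}$, inherited from self-adjointness of $H(t)$, is indispensable: without it the Floquet exponents $\mu_k$ need not be purely imaginary and $\hav$ would fail to be self-adjoint. The normality and pure-point spectrum assumed in assumption~2 reduce the verification to the spectral theorem, so once skew-adjointness of $B$ is in place the remainder is routine. A minor point to confirm is the boundedness and $T$-periodicity of the operator-valued map $t\mapsto P_{t,t_0}$, but both are inherited directly from the corresponding properties of $U_{t,t_0}$ together with the constancy of $\hav$.
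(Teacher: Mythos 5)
Your proposal is correct and follows essentially the same route as the paper: unitarity of the monodromy operator forces the logarithm $B$ to be skew-adjoint, so $\hav = iB$ is self-adjoint, and part 2 is the explicit construction $P_{t,t_0} = U_{t,t_0}e^{i\hav(t-t_0)}$ from the earlier corollary. If anything, your argument is more careful than the paper's, since you make explicit that the normality of $B$ (assumption 2) is needed to combine $e^{B^{\adj}T}e^{BT}$ into $e^{(B^{\adj}+B)T}$ and thereby conclude skew-adjointness, a step the paper asserts without justification.
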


\begin{proof}
\hfill
\vskip\baselineskip
\noindent\emph{Ad 1.} We assumed previously, that logarithm of monodromy operator exists, i.e. $U_{t_0+T,t_0} = e^{BT}$. Since $U_{t,t_0}$ is unitary for all $[t_0, t] \subset \real$, unitarity of $e^{BT}$ is assured; from this we conclude that $B$ is skew-adjoint, i.e. it is of a form $B = -i\hav$, where $\hav = \hav^{\adj}$.
\vskip\baselineskip
\noindent\emph{Ad 2.} This is actually a straightforward corollary coming from remark \ref{remark:FloquetRepresentationOrder}, which defines a Floquet representation of $U_{t,t_0}$ such that $P_{t,t_0}$ is periodic and $B = -i\hav$. We have $\|e^{-i\hav(t-t_0)}\| = 1$, and from unitarity of $U_{t,t_0}$ comes also the unitarity, and hence boundedness, of $P_{t,t_0}$.
\end{proof}

\noindent This newly introduced, self-adjoint operator $\hav$ will be named \emph{averaged Hamiltonian} and is of crucial importance for our approach. $\hav$ is diagonalized by vectors $\phi_{k}$ such that
\begin{equation}\label{eq:HbarSpectralDecomp}
	\hav \phi_{k} = \epsilon_{k} \phi_{k}
\end{equation}
where eigenvalues $\epsilon_{k} \in \real$ will be called \emph{Bohr-Floquet quasienergies}. Set of its eigenvectors provides a basis, called \emph{Floquet basis}. In fact, actual values of quasienergies (or, more precisely, differences between them) will give us the opportunity to construct an appropriate Markovian master equation, leading to derivation of quantum dynamical semigroup.

\subsection{General mathematical framework}

For the last few decades, a very rich and extensive theory has been under development to make possible the description of open systems, described in a language of \emph{density operator} $\dm{t} \in \traceclass \subset \kh$ where $\traceclass$ and $\kh$ respectively denote spaces of all trace-class and compact operators over $\hilbert$. It satisfies $\dm{t}^{\adj} = \dm{t}$ (self-adjointness), $\iprod{\psi}{\dm{t}\psi} > 0$ for all non-zero $\psi \in \hilbert$ (positivity) and $\| \dm{t} \|_{1} = \tr{\sqrt{\dm{t}^{\adj}\dm{t}}} = \tr{\dm{t}} = 1$ (trace norm one). General approach to this theory had been studied extensively since 1960's and today may be found in many sources \cite{AlickiLendi,BreuerPetruccione2002,Davies1976,AttalJoyePillet2006}. We focus on equation of motion for $\dm{t}$, expressed in general form as \cite{AlickiLendi,Alicki79}
\begin{equation}\label{eq:GeneralMasterEquation}
	\frac{d}{dt}\dm{t} = \lgen{t}{\dm{t}},
\end{equation}
where map $\lgens{t}$ on $\traceclass$ is called a \emph{Lindbladian} and is continuous in strong operator topology. We emphasize, that we only take linear $\lgens{t}$ into account. For each value of $t$, $\lgens{t}$ is, by Hille-Yosida theorem, considered as closed, infinitesimal generator of \czero $\Psi_{\tau}^{(t)} : \tau \geqslant 0$ of contracting maps on $\traceclass$ of a form $\Psi_{\tau}^{(t)} = e^{\tau\lgens{t}}$ (see e.g. ref. \onlinecite{AlickiLendi,Davies1976,GKS76} for further details). We have $\Psi_{0}^{(t)} = \id{\traceclass}$ and this semigroup is closed under binary operation $\circ$ understood as composition, fulfilling a general composition rule (semigroup property) $\Psi_{\tau_1}^{(t)} \circ \Psi_{\tau_2}^{(t)} = \Psi_{\tau_1 + \tau_2}^{(t)}$.
\par
A map $T$ is called $k$-positive if $T\otimes \id{\complex^{k}}$ acting on $\traceclass \otimes \mathscr{M}_{k}(\complex)$ is also positive for all $f \in \traceclass$ and $\varphi \in \mathscr{M}_{k}(\complex)$. If additionally $T$ is $k$-positive for any natural $k$, $T$ is called completely positive (CP). It is then required that any $\Psi_{\tau}^{(t)}$ is CP and moreover, that it preserves a trace norm, $\| \Psi_{\tau}^{(t)}(f)\|_{1} = \| f \|_{1}$ for any $f \in \traceclass$. Maps satisfying those two conditions are called completely positive and trace preserving (CPTP). The last property implies, that $\tr{\lgen{t}{\dm{t}}} = 0$.
\par
\czero of such strongly differentiable maps is called a \emph{quantum dynamical semigroup}, and its members $\Psi_{\tau}^{(t)}$ are commonly referred as \emph{quantum dynamical maps}. If $\lgens{t}$ is bounded, every $\Psi_{\tau}^{(t)}$ can be defined by a familiar power series expansion, converging in uniform operator topology since $\| e^{\tau\lgens{t}} \| \leqslant e^{\tau \| \lgens{t} \|} < \infty$. In this case $\Psi_{\tau}^{(t)}$ is even uniformly continuous.
\par
In general situation, when $\lgens{t}$ is not constant throughout a whole evolution, which is also of most interest for us, the mathematical structure standing behind reduced dynamics is much less regular than in time-independent case. However, there are still some interesting features, worth pointing out. Consider for simplicity, that $\| \lgens{t} \| < \infty$. Then, integrating \eqref{eq:GeneralMasterEquation} with initial condition $\dm{t_0}$ brings up a time-ordered exponential formula,
\begin{align}
	\dm{t} &= \left( \sum_{n=0}^{\infty} \frac{1}{n!} \int\limits_{t_0}^{t} dt_{1} ... \int\limits_{t_0}^{t} dt_{n} \, \mathcal{T} \left\{\lgens{t_1} \circ ... \circ \lgens{t_n}\right\} \right)(\dm{t_0}) \\
	&= \left(\mathcal{T} \exp{\int\limits_{t_0}^{t} \lgens{t^{\prime}} \, dt^{\prime}} \right) (\rho_{t_0}) = \qdm{t}{t_0}{\dm{t_0}}, \nonumber
\end{align}
where $\mathcal{T}$ denotes time ordering. Here we have introduced a quantum dynamical map $\qdms{t}{t_0}$,
\begin{equation}\label{eq:LambdaGeneral}
	\qdms{t}{t_0} = \mathcal{T} \exp{\int\limits_{t_0}^{t} \lgens{t^{\prime}} \, dt^{\prime}},
\end{equation}
which is required to be CPTP. On the other hand, consider again equation \eqref{eq:GeneralMasterEquation} but now with different initial condition $\dm{s}$ such that $\dm{s} = \qdm{s}{t_0}{\dm{t_0}}$ for some $t_0 \leqslant s$. Then we have $\dm{t} = \qdm{t}{s}{\dm{s}} = \qdm{t}{s}{\qdm{s}{t_0}{\dm{t_0}}}$, so in general, map $\qdms{t}{t_0}$ must fulfill a composition rule
\begin{equation}\label{eq:PropagatorCompRule}
	\qdms{t}{t_0} = \qdms{t}{\epsilon} \circ \qdms{\epsilon}{t_0}
\end{equation}
for any partition $t_0 \leqslant \epsilon \leqslant t$. A mapping $(t,s) \longmapsto \qdms{t}{s}$ is strongly continuous and differentiable with respect to $t$ and $s$ such that
\begin{subequations}
\begin{align}
	&\frac{d\qdms{t}{s}}{dt}  = \slim{h\searrow 0} \frac{\qdms{t+h}{t} - \id{\traceclass}}{h} \circ \qdms{t}{s}  = \lgens{t} \circ \qdms{t}{s}, \\
	&\frac{d \qdms{t}{s}}{ds} = \slim{h\searrow 0} \left(-\qdms{t}{s+h} \circ \frac{\qdms{s+h}{s} - \id{\traceclass}}{h}\right) = -\qdms{t}{s} \circ \lgens{s},
\end{align}
\end{subequations}
where $s-\mathrm{lim}$ denotes a limit in a sense of strong operator topology. By convention and for simplicity, one usually puts $t_0 = 0$ and provides a somehow simplified notation such that quantum dynamical map $\qdms{t}{0}$ gets replaced by $\Lambda_{t}$. In such a case, a map $\qdms{t}{s}$ for $s\neq 0$ is referred rather as \emph{propagator}, since it takes dynamical map at time $s$ to another dynamical map at time $t \geqslant s$, $\Lambda_{t} = \qdms{t}{s} \circ\Lambda_{s}$ which is a consequence of composition rule. We will occasionally use both names, propagator and dynamical map, interchangeably.
\vskip\baselineskip
For general, time-dependent Lindbladians, finding a corresponding propagator $\qdms{t}{t_0}$ may be a difficult task and \eqref{eq:LambdaGeneral} has purely formal meaning. Of course in case of constant $\mathcal{L}$, we have $\qdms{t}{t_0} = e^{(t-t_0)\mathcal{L}}$. A widely celebrated result by G. Lindblad, V. Gorini, A. Kossakowski and E. C. G. Sudarshan (see ref. \onlinecite{Lindblad76,GKS76}) presents a most general form of semigroup generator in case of constant, bounded $\mathcal{L}$ and separable $\hilbert$,
\begin{equation}\label{eq:MME}
	\mathcal{L}(\dm{t}) = -i\comm{\hams}{\dm{t}} + \sum_{j \in I} \left( V_{j} \dm{t} V_{j}^{\adj} - \frac{1}{2} \acomm{V_{j}^{\adj}V_{j}}{\dm{t}} \right)
\end{equation}
where $I$ is countable, $\hams = \hams^{\adj} \in \bh$ and $V_{j}, \sum_{j} V_{j}^{\adj}V_{j} \in \bh$.

\subsection{Time-dependent Lindbladians. Periodic Hamiltonians}
\label{subsec:MainResult}

\subsubsection{General framework}\label{subsect:GeneralFramework}

A mathematically rigorous approach by Davies \cite{Davies1976,Davies74} allows to find the exact form of $\lgen{}{\dm{t}}$ in a regime of \emph{weak coupling limit}, in terms of what is now known as Markovian master equation. Usually, one starts with some isolated, composite system $\Sigma$, which may be divided into two subsystems, $\mathrm{S}$ and $\mathrm{R}$, denoting system of interest and surrounding environment. They are described by Hilbert spaces $\hilberts$ and $\hilbertr$, respectively, such that $\mathscr{H}_{\Sigma} = \hilbertsr$.
It is often assumed that $\mathrm{S}$ is ``small'', i.e. either $\dim{\hilberts} < \infty$ or spectrum of its Hamiltonian $\sigma(H_{\mathrm{S}})$ is pure point, and that R is ``large'', i.e. $\dim{\hilbertr}$ is countably infinite. One considers a joint density operator $\sigma_{t} \in \mathscr{B}_{1}(\hilbertsr)$ which undergoes unitary evolution with respect to composite Hamiltonian $H_{\Sigma} = \hams\otimes\id{\hilbertr} + \id{\hilberts}\otimes\hamr + \hami$, where $\hams,\hamr,\hami \in \bh$ are self-adjoint Hamiltonians of system of interest, environment and interaction between S and R, respectively. Furthermore, it is assumed that $\hami$ takes a general form
\begin{equation}\label{eq:HIntGeneralForm}
	\hami = \lambda \sum_{\alpha \in I} S_{\alpha} \otimes R_{\alpha}
\end{equation}
with $I$ being countable, $S_{\alpha} : \hilberts \longrightarrow \hilberts$, $R_{\alpha} : \hilbertr \longrightarrow \hilbertr$, $S_{\alpha} = S_{\alpha}^{\adj}$, $R_{\alpha} = R_{\alpha}^{\adj}$ and $\lambda \in \real$ considered as ``small'' (up to energy scale) coupling constant. A usual derivation of time-independent semigroup generator makes use of interaction picture dynamics with respect to $\hams\otimes\id{\hilbertr} + \id{\hilberts}\otimes\hamr$. A detailed derivation for constant $\hams$, which is a core aspect of Davies' formulation, may be found in literature (see a detailed description in ref. \onlinecite{AlickiLendi,BreuerPetruccione2002,Davies74,Davies1976,GKS76,Lindblad76}). Here we are extending this formalism to the case of periodic $\hams(t)$.

\subsubsection{Time-dependent Lindbladian. Sketch of derivation}

\noindent In the following, we denote by $\dm{t} \in \tcsystem$ the reduced density operator of S, given by partial trace over $\hilbertr$, $\dm{t} = \mathrm{tr}_{\hilbertr}\{\sigma_{t}\}$. Let a mapping $t \longmapsto \dm{t}$ to be differentiable for all $t$ such that $\dm{t}$ satisfies a generalized, time-dependent Markovian master equation (with initial condition $\dm{t_0}$)
\begin{equation}\label{eq:TheoremMME}
	\frac{d}{dt}{\rho}_{t} = \lgen{t}{\dm{t}}.
\end{equation}
A specific form of $\lgens{t}$ was proposed by Alicki, Lidar and Zanardi in ref. \onlinecite{AlickiLidarZanardi2006} and mathematical validity of that approach will be proven. For consistency, we will accept some naming conventions used in the aforementioned paper. We formulate a main result as a following theorem:

\begin{theorem}
Let $\mathrm{S}$ be the open quantum system, weakly coupled to external reservoir $\mathrm{R}$ as elaborated in subsection \ref{subsect:GeneralFramework}. Let $\hams(t)$ denote a time-dependent, self-adjoint Hamiltonian of system of interest, which is periodic with period $T$, i.e. $\hams(t+nT) = \hams(t)$ for all $t\in\real$ and $n\in\integer$. Then, there exists a time-independent \czero generator $\iplgens$ and unitary map $\mathcal{U}_{t,t_0}$ such that the Markovian master equation \eqref{eq:TheoremMME} can be, under weak coupling limit, expressed by
	\begin{subequations}
	\begin{align}
		&\lgens{t} = -i \comm{\hams(t)}{\,\cdot\,} + \disss{t}, \label{eq:TheoremClaim1}\\
		&\disss{t} = \mathcal{U}_{t,t_0} \circ \iplgens \circ \mathcal{U}_{t,t_0}^{-1}, \label{eq:TheoremClaim2}
	\end{align}
	\end{subequations}
and a quantum dynamical map $\qdms{t}{t_0}$ generated by $\lgens{t}$ has a form
\begin{equation}\label{eq:QDMunderFloquetTheorem}
	\qdms{t}{t_0} = \mathcal{U}_{t,t_0}\circ e^{(t-t_0) \iplgens}.
\end{equation}
\end{theorem}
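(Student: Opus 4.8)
The plan is to pass to the interaction picture generated by the unitary Floquet propagator $U_{t,t_0}$, exploit the normal form of Proposition~\ref{prop:EvolutionFloquetTheorem} to expand the coupling operators into a \emph{discrete} family of frequency components, carry out the Davies weak-coupling limit to obtain a \emph{time-independent} generator $\iplgens$ in that picture, and finally rotate back. First I would introduce the conjugation superoperator $\usup{t}{t_0}{\,\cdot\,} = U_{t,t_0}(\,\cdot\,)U_{t,t_0}^{\adj}$, a $\adj$-preserving, trace-preserving automorphism (unitary conjugation in the sense meant in the statement) with inverse $\ausups{t}{t_0} = \usups{t_0}{t}$, and set $\ipdm{t} = \ausups{t}{t_0}(\dm{t})$, $\iphami{t} = \ausups{t}{t_0}(\hami)$. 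Since $\frac{d}{dt}U_{t,t_0} = -i\hams(t)U_{t,t_0}$, the free part $-i\comm{\hams(t)}{\,\cdot\,}$ is absorbed into the transformation and the reduced dynamics in this picture is driven purely by $\iphami{t}$.

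The crucial step is the decomposition of the interaction-picture coupling operators $\ips{\alpha}{t} = U_{t,t_0}^{\adj} S_{\alpha} U_{t,t_0}$. Writing $U_{t,t_0} = P_{t,t_0}\,e^{-i\hav(t-t_0)}$ and using self-adjointness of $\hav$ with its eigenprojections onto the Floquet basis $\{\phi_k\}$, together with the $T$-periodicity of $P_{t,t_0}$ (so that $P_{t,t_0}^{\adj} S_{\alpha} P_{t,t_0}$ admits a Fourier series with fundamental frequency $\Omega = 2\pi/T$), I would obtain the discrete expansion
\begin{equation}
	\ips{\alpha}{t} = \sum_{\omega,q} e^{-i(\omega + q\Omega)(t-t_0)}\, S_{\alpha}(\omega,q),
\end{equation}
where $\omega$ ranges over the differences $\epsilon_k - \epsilon_l$ of Bohr--Floquet quasienergies and $q \in \integer$. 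The resulting set of \emph{generalized Bohr frequencies} $\{\omega + q\Omega\}$ is discrete, which is precisely the structural feature that legitimizes the weak-coupling averaging and is the payoff of the Floquet normal form.

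With this expansion in hand the remainder follows the Davies scheme. Substituting into the second-order (Born--Markov) generator, rescaling time by $\lambda^{2}$ and passing to the van Hove limit $\lambda \to 0$, the non-secular contributions — those carrying a nonzero residual phase $e^{-i(\omega+q\Omega-\omega'-q'\Omega)\tau}$ — average to zero, leaving only the resonant terms. What survives is a time-independent generator
\begin{equation}
	\iplgen{\,\cdot\,} = -i\comm{H_{\mathrm{LS}}}{\,\cdot\,} + \sum_{\alpha,\beta,\omega,q} \gamma_{\alpha\beta}(\omega+q\Omega)\left( S_{\beta}(\omega,q)(\,\cdot\,)S_{\alpha}(\omega,q)^{\adj} - \tfrac{1}{2}\acomm{S_{\alpha}(\omega,q)^{\adj} S_{\beta}(\omega,q)}{\,\cdot\,} \right),
\end{equation}
where the $\gamma_{\alpha\beta}(\,\cdot\,)$ are one-sided Fourier transforms of the reservoir autocorrelation functions. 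Complete positivity of $\iplgens$ follows, as in the time-independent case, from positive semidefiniteness of the matrices $[\gamma_{\alpha\beta}(\nu)]$ guaranteed by Bochner's theorem, so $\iplgens$ is a bounded GKS generator and $e^{(t-t_0)\iplgens}$ is a CPTP semigroup.

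Finally I would undo the rotation. In the interaction picture the master equation \eqref{eq:TheoremMME} becomes $\frac{d}{dt}\ipdm{t} = \iplgen{\ipdm{t}}$ with time-independent $\iplgens$, hence $\ipdm{t} = e^{(t-t_0)\iplgens}(\ipdm{t_0})$ and, returning via $\dm{t} = \usup{t}{t_0}{\ipdm{t}}$,
\begin{equation}
	\qdms{t}{t_0} = \usups{t}{t_0} \circ e^{(t-t_0)\iplgens},
\end{equation}
which is \eqref{eq:QDMunderFloquetTheorem}. Differentiating and using $\frac{d}{dt}\usup{t}{t_0}{\,\cdot\,} = -i\comm{\hams(t)}{\usup{t}{t_0}{\,\cdot\,}}$ gives $\frac{d}{dt}\qdms{t}{t_0} = \left(-i\comm{\hams(t)}{\,\cdot\,} + \usups{t}{t_0}\circ\iplgens\circ\ausups{t}{t_0}\right)\circ\qdms{t}{t_0}$, identifying $\lgens{t}$ as in \eqref{eq:TheoremClaim1}--\eqref{eq:TheoremClaim2} with $\disss{t} = \usups{t}{t_0}\circ\iplgens\circ\ausups{t}{t_0}$. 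The main obstacle is the analytic control of the weak-coupling limit itself: proving that the generalized Bohr spectrum $\{\omega+q\Omega\}$ is discrete (so the secular/ergodic averaging is justified) and that the limiting rate matrices remain positive — i.e. that $\iplgens$ is genuinely of GKS form. Everything downstream is bookkeeping built on the unitarity of $\usups{t}{t_0}$ and the boundedness of $\iplgens$.
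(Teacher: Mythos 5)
Your proposal follows essentially the same route as the paper: pass to the interaction picture via conjugation by the full propagator $U_{t,t_0}$, use the Floquet normal form $U_{t,t_0}=P_{t,t_0}e^{-i\hav(t-t_0)}$ to expand $\ips{\alpha}{t}$ over the discrete set of shifted quasifrequencies $\omega+q\Omega$, run the Davies weak-coupling/secular argument to obtain the time-independent GKS generator $\iplgens$, and rotate back, identifying $\disss{t}=\usups{t}{t_0}\circ\iplgens\circ\ausups{t}{t_0}$ by differentiating $\qdms{t}{t_0}=\usups{t}{t_0}\circ e^{(t-t_0)\iplgens}$. The only cosmetic differences are a sign convention in the frequency expansion and your phrasing of the limit as a van Hove rescaling where the paper uses the equivalent coarse-graining over times $t-t_0\gg\max|\omega-\omega'+m\Omega|^{-1}$; these do not change the argument.
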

\noindent In the proof, we will present a derivation of appropriate Lindbladian $\lgens{t}$. We will show, by construction, that claimed maps $\iplgens$ and $\mathcal{U}_{t,t_0}$ exist and both the time-dependent Lindbladian $\lgens{t}$ and dynamical map $\qdms{t}{t_0}$ in the \schr picture are of proposed forms. It will be shown that $\lgens{t}$ has a similar, Lindblad-like structure, hence produces CP and trace preserving dynamical maps. Some various properties of $\lgens{t}$ and $\qdms{t}{t_0}$ will be shown.

\begin{proof}

We consider a composite dynamics of $\sigma_{t} \in \tcjoint$. It is convenient to switch to the interaction picture with respect to $H_{\Sigma}(t)= \hams(t) \otimes \id{\hilbertr} + \id{\hilberts}\otimes\hamr$. To achieve this, let us define a unitary map $\usupsgen{\Sigma}{t}{t_0}$ on $\mathscr{B}(\hilbertsr)$ such that for any $A$ it is given by equalities
\begin{align}
	\usupgen{\Sigma}{t}{t_0}{A} = U^{\Sigma}_{t,t_0} A (U^{\Sigma}_{t,t_0})^{-1},\qquad \ausupgen{\Sigma}{t}{t_0}{A} = (U^{\Sigma}_{t,t_0})^{-1} A U^{\Sigma}_{t,t_0}
\end{align}
where we define $U^{\Sigma}_{t,t_0}$ as
\begin{align}
	U^{\Sigma}_{t,t_0} = U_{t,t_0} \otimes U^{R}_{t,t_0}, \qquad (U^{\Sigma}_{t,t_0})^{-1} = U_{t,t_0}^{-1} \otimes (U^{R}_{t,t_0})^{-1},
\end{align}
$U^{R}_{t,t_0} = e^{-i\hamr (t-t_0)}$ is a free evolution of R and $U_{t,t_0}$ stands for a time-ordered analogue of unitary evolution operator of S from conventional approach,
\begin{equation}\label{eq:UnitaryPropagatorTimeOrdered}
	U_{t,t_0} = \mathcal{T} \exp{\left\{-i\int\limits_{t_0}^{t} H_{S}(t^{\prime}) \, dt^{\prime}\right\}}.
\end{equation}
For general Banach spaces $X$ and $Y$, we can define a tensor product of two maps $\mathcal{A} \in \mathscr{B}(X)$, $\mathcal{B} \in \mathscr{B}(Y)$ to be such a map $\mathcal{A} \otimes \mathcal{B}$, that $(\mathcal{A} \otimes \mathcal{B})(x \otimes y) = \mathcal{A}(x) \otimes \mathcal{B}(y)$ for $x \otimes y \in X \otimes Y$. Then it is evident, that for $x\otimes y \in \hilbertsr$ one can write
\begin{align}
	\usupgen{\Sigma}{t}{t_0}{x\otimes y} = \usupgen{\mathrm{S}}{t}{t_0}{x} \otimes \usupgen{\mathrm{R}}{t}{t_0}{y},
\end{align}
where two new unitary maps $\usupsgen{\mathrm{S}}{t}{t_0}$ and $\usupsgen{\mathrm{R}}{t}{t_0}$ are defined as follows,
\begin{equation}
	\usupgen{\mathrm{S}}{t}{t_0}{A} = U_{t,t_0} A U_{t,t_0}^{-1}, \label{eq:USsuperoperator}, \qquad \usupgen{\mathrm{R}}{t}{t_0}{A} = U_{t,t_0}^{\mathrm{R}} A (U_{t,t_0}^{\mathrm{R}})^{-1}.
\end{equation}
Corresponding inverse maps are obtained simply by replacing $U_{t,t_0}$ and $U^{\mathrm{R}}_{t,t_0}$ by $U_{t,t_0}^{-1}$ and $(U_{t,t_0}^{\mathrm{R}})^{-1}$. Transformations between \schr and interaction pictures can now be defined elegantly by applying those maps to operators such that one can replace $\sigma_{t}$ and $\hami$ by their interaction picture counterparts $\tilde{\sigma}_{t}$ and $\iphami{t}$,
\begin{align}
	&\tilde{\sigma}_{t} = \ausupgen{\Sigma}{t}{t_0}{\sigma_{t}},\\
	&\iphami{t} = \ausupgen{\Sigma}{t}{t_0}{\hami}.
\end{align}
We will now give a sketch of usual \emph{microscopic} derivation of Markovian master equation, comparable to the one which is present in most modern textbooks. Since it is assumed that $\sigma_{t}$ evolves unitarily in $\mathscr{B}_{1}(\hilbertsr)$, we obtain the interaction picture version of von Neumann equation
\begin{equation}\label{eq:vonNeumannIP}
	\frac{d}{dt} \tilde{\sigma}_{t} = -i \comm{\iphami{t}}{\tilde{\sigma}_{t}},
\end{equation}
which we can integrate to obtain a formal solution
\begin{equation}\label{eq:FormalSolution}
	\tilde{\sigma}_{t} = \tilde{\sigma}_{t_0} - i \int\limits_{t_0}^{t} \comm{\iphami{t^{\prime}}}{\tilde{\sigma}_{t^{\prime}}} \, dt^{\prime}.
\end{equation}
Substituting \eqref{eq:FormalSolution} back into \eqref{eq:vonNeumannIP} and performing a partial trace with respect to $\hilbertr$, one obtains
\begin{equation}
	\frac{d}{dt}\ipdm{t} = \frac{d}{dt}\ptr{\hilbertr}{\tilde{\sigma}_{t}} = - \int\limits_{t_0}^{t} \ptr{\hilbertr}{\comm{\iphami{t}}{\comm{\iphami{t^{\prime}}}{\sigma(t^{\prime})}}} \, dt^{\prime}
\end{equation}
valid, if $\ptr{\hilbertr}{\comm{\iphami{t}}{\sigma(t_0)}} = 0$. In order to proceed with derivation, one applies a series of approximations\cite{AlickiLendi,Davies1976,BreuerPetruccione2002} and, after some algebra, arrives at coarse-grained, Markovian master equation
\begin{equation}\label{eq:PrototypeMME}
	\frac{d}{dt} \ipdm{t} = -\int\limits_{t_0}^{\infty} \ptr{\hilbertr}{\comm{\iphami{t}}{\comm{\iphami{t-t^{\prime}}}{\ipdm{t}\otimes\omega}}} \, dt^{\prime}
\end{equation}
where $\omega = \ptr{\hilberts}{\tilde{\sigma}_{t}} \in \tcenv$ is considered as a constant density operator of the environment, $\comm{\omega}{\hamr}=0$. The obtained formula does not yet generate a CPTP map (as noted by Davies\cite{Davies74}, D\"{u}mke and Spohn \cite{DumkeSpohn1979}), and to achieve that one has to engage some variation of rotating wave approximation, effectively averaging over rapidly oscillating terms in \eqref{eq:PrototypeMME}. Therefore one wants to rewrite $\iphami{t}$ in such a way that it is possible to subtract from it some time-dependent, oscillating terms.
\vskip\baselineskip
In the original approach, system's Hamiltonian was considered constant and one could obtain, at the very end, proper semigroup structure of reduced dynamics. Here we present a different approach; we will rewrite $\usupsgen{\mathrm{S}}{t}{t_0}$, applying some of the results based on Floquet theory from previous section, to cover periodicity of $\hams(t)$. Namely, we will use the Floquet representation of unitary propagator $U_{t,t_0}$ to obtain plausible expression for semigroup generator $\iplgens$ in interaction picture and show that it still has a familiar Lindblad structure.
\par
First, it is easy to notice and was mentioned before, that $U_{t,t_0}$ defining $\usupsgen{\mathrm{S}}{t}{t_0}$ in \eqref{eq:USsuperoperator} satisfies all Chapman-Kolmogorov properties and obviously, constitutes an example of (principal) fundamental solution for ODE given in form of \schr equation. By proposition \ref{prop:EvolutionFloquetTheorem} there exists a self-adjoint, \emph{averaged Hamiltonian} $\hav$ and a periodic, unitary operator $P_{t,t_0}$ such that $U_{t,t_0} = P_{t,t_0} e^{-i \hav (t-t_0)}$. This lets us to conclude, that for $A \in \mathscr{B}(\hilberts)$
\begin{equation}
	\usupgen{\mathrm{S}}{t}{t_0}{A} = (\mathcal{P}_{t,t_0} \circ \bar{\mathcal{U}}_{t,t_0})(A)
\end{equation}
with appropriate maps $\mathcal{P}_{t,t_0}$ and $\bar{\mathcal{U}}_{t,t_0}$ defined by relations
\begin{subequations}
\begin{align}
	&\mathcal{P}_{t,t_0}(A) = P_{t,t_0} A P_{t,t_0}^{-1}, \qquad \mathcal{P}_{t,t_0}^{-1}(A) = P_{t,t_0}^{-1} A P_{t,t_0}, \\
	&\mathcal{\bar{U}}_{t,t_0}(A) = e^{-i \hav (t-t_0)} A e^{i \hav (t-t_0)}, \qquad \mathcal{\bar{U}}_{t,t_0}^{-1}(A) = e^{i \hav (t-t_0)} A e^{-i \hav (t-t_0)}.
\end{align}
\end{subequations}
Interaction picture form of \eqref{eq:HIntGeneralForm} is now defined via $\ausupsgen{\Sigma}{t}{t_0}$, so
\begin{align}\label{eq:IPHIexpansion}
	\iphami{t} &= \ausupgen{\Sigma}{t}{t_0}{\hami} = \lambda \sum_{\alpha} \left(\ausupsgen{\mathrm{S}}{t}{t_0} \otimes \ausupsgen{\mathrm{R}}{t}{t_0}\right)(S_{\alpha} \otimes R_{\alpha}) \\
	&= \lambda \sum_{\alpha} \ausupgen{\mathrm{S}}{t}{t_0}{S_{\alpha}} \otimes \ausupgen{\mathrm{R}}{t}{t_0}{R_{\alpha}} \nonumber \\
	&= \lambda \sum_{\alpha} \ips{\alpha}{t} \otimes \ipr{\alpha}{t} \nonumber
\end{align}
with explicit forms of time-dependent operators
\begin{subequations}
\begin{align}
	\ips{\alpha}{t} &= (\bar{\mathcal{U}}_{t,t_0}^{-1} \circ \mathcal{P}_{t,t_0}^{-1})(S_{\alpha}) = e^{i\hav(t-t_0)} ( P_{t,t_0}^{-1} S_{\alpha} P_{t,t_0} ) e^{-i\hav(t-t_0)}, \\
	\ipr{\alpha}{t} &= e^{i\hamr (t-t_0)} R_{\alpha} e^{-i\hamr(t-t_0)}.\label{eq:IPRdefinition}
\end{align}
\end{subequations}
$P_{t,t_0}$ is periodic with period $T$; therefore $P_{t,t_0}^{-1} S_{\alpha} P_{t,t_0}$ is periodic as well and we can decompose it into a Fourier series, i.e. there exists a net of operators $\{S_{\alpha}(q) : q \in \integer\}$ such that
\begin{equation}
	P_{t,t_0}^{-1} S_{\alpha} P_{t,t_0} = \sum_{q\in\integer} S_{\alpha}(q) e^{iq\Omega (t-t_0)},
\end{equation}
converging in strong operator topology, where $\Omega = 2\pi / T$. Recall a notion of a Floquet basis $\{\phi_{k}\}$, which diagonalized $\hav$ as in \eqref{eq:HbarSpectralDecomp}, i.e. $\hav\phi_{k} = \epsilon_{k}\phi_{k}$. Let $P(\phi_{k})$ denote an orthogonal, self-adjoint projection. Define
\begin{equation}
	S_{\alpha}(\omega,q) = \sum_{\{\epsilon_{k}-\epsilon_{k^{\prime}} = \omega\}} P(\phi_{k}) S_{\alpha}(q) P(\phi_{k^{\prime}}),
\end{equation}
where the sum is performed over these indices $k$ and $k^{\prime}$, such that $\epsilon_{k}-\epsilon_{k^{\prime}} = \omega$. Here we introduce a new set $\{\omega = \epsilon_{k} - \epsilon_{l}\}$ of differences between eigenvalues of averaged Hamiltonian which we call the \emph{Bohr-Floquet quasifrequencies}. Summing $S_{\alpha}(\omega,q)$ over all $\omega$ is equivalent to taking a sum over all possible pairs $(\phi_{k},\phi_{k^{\prime}})$, since $\omega$ denotes an equivalence class of eigenvectors; therefore we have $\sum_{\omega} S_{\alpha}(\omega,q) = \sum_{k,k^{\prime}} P(\phi_{k}) S_{\alpha}(q) P(\phi_{k^{\prime}}) = S_{\alpha}(q)$ because of completeness. Moreover, by direct computation it is also easy to check that
\begin{equation}\label{eq:CommutSalpha}
	\comm{\hav}{S_{\alpha}(\omega,q)} = \omega S_{\alpha}(\omega,q), \qquad \comm{\hav}{S_{\alpha}(\omega,q)^{\adj}} = -\omega S_{\alpha}(\omega,q)^{\adj}
\end{equation}
which leads to expression
\begin{align}
\bar{\mathcal{U}}^{-1}_{t,t_0}(S_{\alpha}(q)) &= e^{i\hav(t-t_0)} S_{\alpha}(q) e^{-i\hav(t-t_0)} \\
&= \sum_{\{\omega\}} S_{\alpha}(\omega,q) e^{i\omega (t-t_0)} = \sum_{\{\omega\}} S_{\alpha}(\omega,q)^{\adj} e^{-i\omega (t-t_0)}. \nonumber
\end{align}
Putting everything together, we have a following decomposition of $S_{\alpha}$ in the interaction picture
\begin{align}\label{eq:IPSexpansion}
	&\ips{\alpha}{t} = \sum_{q\in\integer}\sum_{\{\omega\}} S_{\alpha}(\omega,q) e^{i(\omega+q\Omega)(t-t_0)}, \\
	&S_{\alpha}(\omega,q) = \frac{1}{T} \int\limits_{-\frac{T}{2}+t_0}^{\frac{T}{2}+t_0} \ips{\alpha}{t} e^{-i(\omega+q\Omega)(t-t_0)} \, dt 
\end{align}
with additional property $S_{\alpha}(\omega, q)^{\adj} = S_{\alpha}(-\omega,-q)$. The expression $\omega + q\Omega$ may be understood as \emph{shifted quasifrequency} of $q$-th higher mode and is simply obtained by extending a set of Bohr-Floquet quasifrequencies. The hallmark of Markovian approximation is explicitly present here in the assumption, that the typical time scale $\tau_{\mathrm{S}}$ of intrinsic evolution of system of interest, given now by estimate $\tau_{\mathrm{S}} \sim \max_{\omega\neq\omega^{\prime},m\in\integer}\{|\omega-\omega^{\prime}+m\Omega|^{-1}\}$, is large, compared to typical relaxation time $\tau_{\mathrm{R}}$ and characteristic time $\tau_{\mathrm{B}}$, during which reservoir correlation functions decay. Once we have a proper decomposition of $\ips{\alpha}{t}$, the remaining derivation then follows the same path as in original Davies approach; by integrating over sufficiently large time, $t - t_0 \gg \max_{\omega\neq\omega^{\prime},m\in\integer}\{|\omega-\omega^{\prime}+m\Omega|^{-1}\}$, markovianity can be justified (further details may be found in appendix \ref{app:MME}). In the end, we arrive at the following expression for semigroup generator $\iplgens$ in the interaction picture,
\begin{subequations}
	\begin{align}
		&\iplgen{\ipdm{t}} = -i\comm{\delta H}{\ipdm{t}} + \ipdissprime{\ipdm{t}},\\
		&\ipdissprime{\ipdm{t}} = \sum_{\alpha\beta}\sum_{\{\omega\}}\sum_{q\in\integer} \gamma_{\alpha\beta}(\omega + q\Omega)\left( S_{\beta}(\omega,q) \, \ipdm{t} \, S_{\alpha}(\omega,q)^{\adj} - \frac{1}{2} \acomm{S_{\alpha}(\omega,q)^{\adj}S_{\beta}(\omega,q)}{\ipdm{t}} \right), \label{eq:LinteractionPicture}
	\end{align}
\end{subequations}
which is time-independent and resembles exactly the one presented by Alicki \emph{et al}. Prime in $\tilde{\mathcal{D}}^{\prime}$ indicates, that all terms expressing a Lamb-shift corrections due to influence of R on S were subtracted from dissipator and included into $\delta H$, commonly called a Lamb-shift Hamiltonian, given by formula
\begin{equation}
	\delta H = \lambda^{2}\sum_{\alpha\beta}\sum_{\{\omega\}}\sum_{q\in\integer}  \sigma_{\alpha\beta} (\omega+q\Omega) S_{\alpha}(\omega,q)^{\adj}S_{\beta}(\omega,q)
\end{equation}
with $[\sigma_{\alpha\beta}(\omega+q\Omega)]_{\alpha\beta}$ being hermitian matrix. Real-valued functions $\gamma_{\alpha\beta}$ are known as \emph{reservoir spectral density functions}. $\iplgens$ generates a one-parameter \czero $\ipqdms{t}{t_0} : t \geqslant t_0$ of CPTP maps given by
\begin{equation}
	\ipqdm{t}{t_0}{\ipdm{t_0}} = e^{(t-t_0)\iplgens}(\ipdm{t_0}),
\end{equation}
associative semigroup operation $\circ$ is understood as a composition $\ipqdms{t}{t_0} \circ \ipqdms{t^{\prime}}{t_{0}^{\prime}} = \ipqdms{t+t^{\prime}}{t_0 + t_{0}^{\prime}}$ and neutral element is simply $\ipqdms{t}{t} = \id{\traceclass}$.
\vskip\baselineskip
\par
If it is desired that $t_0 = 0$, $\gamma_{\alpha\beta}$ is given as Fourier transform of reservoir autocorrelation function,
\begin{equation}\label{eq:SpectralDensityFunctions_t0}
	\gamma_{\alpha\beta}(x) = \int\limits_{-\infty}^{\infty} e^{-ixt^{\prime}} \tr{\dmr \ipr{\alpha}{t}\ipr{\beta}{t-t^{\prime}}} \, dt^{\prime} = \int\limits_{-\infty}^{\infty} e^{-ixt^{\prime}} \tr{\dmr \ipr{\alpha}{t^{\prime}}\ipr{\beta}{0}} \, dt^{\prime}.
\end{equation}
Refer to appendix \ref{app:MME} for further details. Having a generator in interaction picture, we may return to original, \schr picture by applying $\usupsgen{\mathrm{S}}{t}{t_0}$ such that
\begin{equation}
	\dm{t} = \usupgen{\mathrm{S}}{t}{t_0}{\ipdm{t}}
\end{equation}
and $\ipdm{t} = \ipqdm{t}{t_0}{\ipdm{t_0}}$. Since we have $\dm{t_0} = \ipdm{t_0}$, we have constructed a proper propagator of a form $\qdms{t}{t_0} = \usupsgen{\mathrm{S}}{t}{t_0} \circ e^{(t-t_0)\iplgens}$ such that
\begin{equation}
	\dm{t} = \qdm{t}{t_0}{\dm{t_0}} = \usupgen{\mathrm{S}}{t}{t_0}{e^{(t-t_0)\iplgens}(\dm{t_0})}, \qquad \dm{t_0} = \ipdm{t_0}
\end{equation}
which proves claim \eqref{eq:QDMunderFloquetTheorem}. Of course we put explicitly $\usupsgen{\mathrm{S}}{t}{t_0} = \usups{t}{t_0}$. Differentiating $\dm{t}$ we get
\begin{align}\label{eq:derivative}
	\frac{d}{dt} \dm{t} &= \slim{h\searrow 0} \frac{\usup{t+h}{t_0}{\ipdm{t+h}} - \usup{t}{t_0}{\ipdm{t}}}{h} = \slim{h\searrow 0} \frac{\usup{t+h}{t_0}{\ipdm{t+h}-\ipdm{t} + \ipdm{t}} - \usup{t}{t_0}{\ipdm{t}}}{h} \\
	&= \slim{h\searrow 0} \frac{\usup{t+h}{t_0}{\ipdm{t}} - \usup{t}{t_0}{\ipdm{t}}}{h} \, + \, \slim{h\searrow 0} \, \usups{t+h}{t_0}\left( \frac{\ipdm{t+h}-\ipdm{t}}{h} \right) \nonumber \\
	&= \frac{\partial \usups{t}{t_0}}{\partial t}(\ipdm{t}) + \usups{t}{t_0}\left( \frac{d\ipdm{t}}{dt}\right), \nonumber
\end{align}
which easily comes from theorem concerning limits of composite continuous functions from analysis. The partial derivative of $\usups{t}{t_0}$ is a map defined by
\begin{equation}
	\frac{\partial \usups{t}{t_0}}{\partial t}(A) = \slim{h\searrow 0} \frac{\usup{t+h}{t_0}{A} - \usup{t}{t_0}{A}}{h} = \slim{h\searrow 0} \frac{U_{t+h,t_0}AU_{t+h,t_0}^{-1} - U_{t,t_0}A U_{t,t_0}^{-1}}{h}.
\end{equation}
Since $\usup{t+h}{t_0}{A}$ is smooth with respect to $h$ for $A \in \bh$, we can expand it into Maclaurin series near $h = 0$,
\begin{equation}
	\usup{t+h}{t_0}{A} = U_{t,t_0} A U_{t,t_0}^{-1} + h\frac{\partial}{\partial t} (U_{t,t_0} A U_{t,t_0}^{-1}) + \mathcal{O}(h^{2}),
\end{equation}
converging uniformly, and this yields
\begin{align}
	\frac{\partial \usups{t}{t_0}}{\partial t}(A) &= \frac{\partial}{\partial t} (U_{t,t_0} A U_{t,t_0}^{-1}) + \slim{h\searrow 0} \frac{\mathcal{O}(h^2)}{h} = \frac{\partial}{\partial t} (U_{t,t_0} A U_{t,t_0}^{-1})
\end{align}
as $\mathcal{O}(h^{2})/h \to 0 $ as $h \to 0$. Performing the differentiation, this naturally yields, as $\hams(t)$ commutes with $U_{t,t_0}$,
\begin{equation}
	\frac{\partial \usups{t}{t_0}}{\partial t}(A) = -i \hams(t) U_{t,t_0} A U_{t,t_0}^{-1} + i U_{t,t_0} A U_{t,t_0}^{-1} \hams(t) = -i \comm{\hams(t)}{\usup{t}{t_0}{A}}.
\end{equation}
The last term in \eqref{eq:derivative} is
\begin{equation}
	\usups{t}{t_0}\left( \frac{d\ipdm{t}}{dt}\right) = \usup{t}{t_0}{\iplgen{\ausup{t}{t_0}{\dm{t}}}}
\end{equation}
which, after putting $A = \ausup{t}{t_0}{\dm{t}}$, gives the proposed equation and one has
\begin{equation}\label{eq:RhoTderivative}
	\frac{d}{dt} \dm{t} = \lgen{t}{\dm{t}} = -i \comm{\hams(t)}{\dm{t}} + (\usups{t}{t_0}\circ\iplgens\circ\ausups{t}{t_0})(\dm{t})
\end{equation}
which proves remaining claims \eqref{eq:TheoremClaim1} and \eqref{eq:TheoremClaim2} and completes the construction.
\end{proof}

\noindent Matrix $[\gamma_{\alpha\beta}(x)]_{\alpha\beta}$ may be diagonalized by suitable unitary transformation and by doing so, one can obtain a diagonal form of dissipator,
\begin{equation}
	\ipdissprime{\ipdm{t}} = \sum_{\alpha}\sum_{\{\omega\}}\sum_{q\in\integer} \gamma_{\alpha}(\omega + q\Omega)\left( S_{\alpha}(\omega,q) \, \ipdm{t} \, S_{\alpha}(\omega,q)^{\adj} - \frac{1}{2} \acomm{S_{\alpha}(\omega,q)^{\adj}S_{\alpha}(\omega,q)}{\ipdm{t}} \right)
\end{equation}
with $\gamma_{\alpha}(x)$ being an abbreviation of $\gamma_{\alpha\alpha}(x)$. Moreover, one can put the obtained equation into a form
\begin{align}
	\lgen{t}{\dm{t}} &= -i \comm{H_{\mathrm{phys.}}(t)}{\dm{t}} + (\usups{t}{t_0}\circ\ \tilde{\mathcal{D}}^{\prime} \circ\ausups{t}{t_0})(\dm{t}) = \\
	&= -i \comm{H_{\mathrm{phys.}}(t)}{\dm{t}} + \mathcal{D}^{\prime}_{t}(\dm{t}), \nonumber
\end{align}
where $\mathcal{D}^{\prime}_{t} = \usups{t}{t_0}\circ\ \tilde{\mathcal{D}}^{\prime} \circ\ausups{t}{t_0}$ and $H_{\mathrm{phys.}}(t) = \hams(t)+\delta H$ may be understood as ``physical'' Hamiltonian, containing all Lamb-like corrections due to environmental influence. By suitable renormalization procedure sometimes one replaces $\hams(t)$ with $H_{\mathrm{phys.}}(t)$ at the very beginning of calculations such that $\delta H$ can be removed and $\mathcal{D}^{\prime}_{t}$ is identified with $\disss{t}$.
\vskip\baselineskip
\noindent For practical reasons, let us define another map $\mathcal{F}$ on $\bh$ by
\begin{subequations}
\begin{align}
	&\mathcal{F} (A) = U_{t_0+T,t_0} \, A \, U_{t_0+T,t_0}^{-1} = e^{-i\hav T} \, A \, e^{i \hav T}, \\
	&\mathcal{F}^{-1} (A) = U_{t_0+T,t_0}^{-1} \, A \, U_{t_0+T,t_0} = e^{i\hav T} \, A \, e^{-i \hav T}.
\end{align}
\end{subequations}

\begin{proposition}[\textbf{covariance property}]\label{prop:CovarianceProperty}
It holds, that $\comm{\mathcal{F}}{\iplgens} = \comm{\mathcal{F}^{-1}}{\iplgens} = 0$, where Lie bracket of maps $\mathcal{A}, \mathcal{B}$ on $\bh$ is defined naturally as a commutator, i.e. \\ $\comm{\mathcal{A}}{\mathcal{B}} = \mathcal{A}\circ\mathcal{B} - \mathcal{B}\circ\mathcal{A}$.
\end{proposition}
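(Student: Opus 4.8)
The plan is to prove the stronger statement $\mathcal{F}\circ\iplgens = \iplgens\circ\mathcal{F}$ directly, from which $\comm{\mathcal{F}^{-1}}{\iplgens}=0$ follows by conjugating both sides with $\mathcal{F}^{-1}$. The essential observation is that $\mathcal{F}$, being conjugation by the unitary $e^{-i\hav T}$, is a $\adj$-algebra homomorphism on $\bh$: it satisfies $\mathcal{F}(AB) = \mathcal{F}(A)\mathcal{F}(B)$ and $\mathcal{F}(A^{\adj}) = \mathcal{F}(A)^{\adj}$. Hence it suffices to understand how $\mathcal{F}$ acts on the individual Floquet operators $S_{\alpha}(\omega,q)$ that are the building blocks of every term in $\iplgens$, and then check that the accumulated phases cancel term by term.

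First I would exponentiate the adjoint action encoded in \eqref{eq:CommutSalpha}. Writing $\mathrm{ad}_{X}(Y) = \comm{X}{Y}$ and applying the Hadamard lemma to $\mathcal{F}(A) = e^{-i\hav T}Ae^{i\hav T}$, the eigenrelation $\comm{\hav}{S_{\alpha}(\omega,q)} = \omega S_{\alpha}(\omega,q)$ gives $\mathrm{ad}_{-i\hav T}^{n}(S_{\alpha}(\omega,q)) = (-i\omega T)^{n}S_{\alpha}(\omega,q)$, so that
\begin{equation}
	\mathcal{F}(S_{\alpha}(\omega,q)) = e^{-i\omega T}S_{\alpha}(\omega,q), \qquad \mathcal{F}(S_{\alpha}(\omega,q)^{\adj}) = e^{i\omega T}S_{\alpha}(\omega,q)^{\adj},
\end{equation}
the second relation following from the companion identity $\comm{\hav}{S_{\alpha}(\omega,q)^{\adj}} = -\omega S_{\alpha}(\omega,q)^{\adj}$. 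Boundedness of $\hav$ guarantees convergence of the series, so the exponentiation is rigorous.

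With these two facts the phase cancellation is immediate. For each jump term of $\ipdissprime{\,\cdot\,}$, multiplicativity of $\mathcal{F}$ yields $\mathcal{F}(S_{\beta}(\omega,q)\,A\,S_{\alpha}(\omega,q)^{\adj}) = e^{-i\omega T}e^{i\omega T}S_{\beta}(\omega,q)\,\mathcal{F}(A)\,S_{\alpha}(\omega,q)^{\adj} = S_{\beta}(\omega,q)\,\mathcal{F}(A)\,S_{\alpha}(\omega,q)^{\adj}$, the two phases cancelling because both operators carry the same $\omega$. For the same reason the products $S_{\alpha}(\omega,q)^{\adj}S_{\beta}(\omega,q)$ appearing both in the anticommutator part of $\ipdissprime{\,\cdot\,}$ and in the Lamb-shift Hamiltonian $\delta H$ are $\mathcal{F}$-invariant, $\mathcal{F}(S_{\alpha}(\omega,q)^{\adj}S_{\beta}(\omega,q)) = S_{\alpha}(\omega,q)^{\adj}S_{\beta}(\omega,q)$; in particular $\mathcal{F}(\delta H) = \delta H$, i.e. $\comm{\delta H}{e^{-i\hav T}} = 0$. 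Consequently $\mathcal{F}$ commutes with the map $A\mapsto\acomm{S_{\alpha}(\omega,q)^{\adj}S_{\beta}(\omega,q)}{A}$ and, since $\mathcal{F}(\comm{\delta H}{A}) = \comm{\delta H}{\mathcal{F}(A)}$, with the Hamiltonian part $-i\comm{\delta H}{\,\cdot\,}$ as well.

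Assembling these, every summand of $\iplgens$ commutes with $\mathcal{F}$, and since the defining series \eqref{eq:LinteractionPicture} converges so that $\mathcal{F}$ may be exchanged with the (strongly convergent) sum, I conclude $\mathcal{F}\circ\iplgens = \iplgens\circ\mathcal{F}$, hence $\comm{\mathcal{F}}{\iplgens}=0$; applying $\mathcal{F}^{-1}$ from both sides gives $\comm{\mathcal{F}^{-1}}{\iplgens}=0$. The only genuinely delicate point is this last one — justifying that the bounded map $\mathcal{F}$ may be interchanged with the infinite sum defining $\iplgens$ in the strong operator topology; everything else reduces to the bookkeeping of the $e^{\pm i\omega T}$ phases, which cancel by construction.
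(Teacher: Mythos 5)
Your proposal is correct and follows essentially the same route as the paper's own proof: both rest on the phase relations $\mathcal{F}(S_{\alpha}(\omega,q)) = e^{-i\omega T}S_{\alpha}(\omega,q)$, $\mathcal{F}(S_{\alpha}(\omega,q)^{\adj}) = e^{i\omega T}S_{\alpha}(\omega,q)^{\adj}$ derived from \eqref{eq:CommutSalpha}, together with multiplicativity of conjugation by $e^{-i\hav T}$, to get term-by-term phase cancellation in $\iplgens$. You add some welcome detail the paper elides (the Hadamard-lemma exponentiation, the explicit $\mathcal{F}$-invariance of $\delta H$, and the interchange of $\mathcal{F}$ with the strongly convergent sum), but the argument is the same.
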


\begin{proof}
One needs to show that $\mathcal{F}(\iplgen{A}) = \iplgen{\mathcal{F}(A)}$ for $A \in \bh$. Since we have
\begin{equation}
	\mathcal{F}(S_{\alpha}(\omega,q)) = e^{-i\omega T} S_{\alpha}(\omega,q),\qquad \mathcal{F}(S_{\alpha}(\omega,q)^{\adj}) = e^{i\omega T} S_{\alpha}(\omega,q)^{\adj},
\end{equation}
which comes from \eqref{eq:CommutSalpha}, and obviously
\begin{equation}
	e^{-i\hav T} \, AB \, e^{i\hav T} = e^{-i\hav T} \, A \, e^{i\hav T} \, e^{-i\hav T} \, B \, e^{i\hav T},
\end{equation}
one obtains
\begin{equation}
	\mathcal{F}_{t_0}(S_{\alpha}(\omega,q)^{\adj}S_{\beta}(\omega,q)) = S_{\alpha}(\omega,q)^{\adj}S_{\beta}(\omega,q).
\end{equation}
Moreover, it easily follows that $\mathcal{F}(\comm{A}{B}) = \comm{\mathcal{F}(A)}{\mathcal{F}(B)}$ and $\mathcal{F}(\acomm{A}{B}) = \acomm{\mathcal{F}(A)}{\mathcal{F}(B)}$, which yields
\begin{subequations}
	\begin{align}
		&\mathcal{F}(\comm{S_{\alpha}(\omega,q)^{\adj}S_{\beta}(\omega,q)}{A}) = \comm{S_{\alpha}(\omega,q)^{\adj}S_{\beta}(\omega,q)}{\mathcal{F}(A)}, \\
		&\mathcal{F}(\acomm{S_{\alpha}(\omega,q)^{\adj}S_{\beta}(\omega,q)}{A}) = \acomm{S_{\alpha}(\omega,q)^{\adj}S_{\beta}(\omega,q)}{\mathcal{F}(A)}, \\
		&\mathcal{F}(S_{\alpha}(\omega,q) A S_{\beta}(\omega,q)^{\adj}) = S_{\alpha}(\omega,q) \mathcal{F}(A) S_{\beta}(\omega,q)^{\adj} .
	\end{align}
\end{subequations}
Schematically, $\iplgens$ takes a form $\iplgen{\ipdm{t}} = \sum S_a \ipdm{t} S_{b}^{\adj} - \frac{1}{2} \acomm{S_{b}^{\adj}S_{a}}{\ipdm{t}}$. Applying $\mathcal{F}$ to $\iplgen{\ipdm{t}}$ is, by above identities, equivalent to $\sum S_{a} \mathcal{F}(\ipdm{t}) S_{b}^{\adj} - \frac{1}{2} \acomm{S_{b}^{\adj}S_{a}}{\mathcal{F}(\ipdm{t})} = \iplgen{\mathcal{F}(\ipdm{t})}$ as can be easily checked by direct computation; therefore $\mathcal{F}\circ\iplgens - \iplgens\circ\mathcal{F} = 0$. Applying $\mathcal{F}^{-1}$ on both sides of this equality, from left and from right, also yields that $\mathcal{F}^{-1} \circ \iplgens - \iplgens\circ\mathcal{F}^{-1} = 0$, so finally we have $\comm{\mathcal{F}}{\iplgens}=\comm{\mathcal{F}^{-1}}{\iplgens} = 0$.
\end{proof}

The joint property of $\iplgens$ and $\mathcal{F}$ proven in above proposition is sometimes called \emph{covariance property}\cite{AlickiLidarZanardi2006}, which lets us to split the whole generator into Hamiltonian and dissipative parts, $\lgens{t} = \hsup{t}{\,\cdot\,} + \disss{t}$. It is also of basic importance for proving the existence of periodic limit cycle of $\qdms{t}{t_0}$ in \schr picture (see below).

\begin{proposition}[\textbf{properties of Lindbladian and propagator}]

We have, that
\begin{enumerate}
\item $\lgens{t}$ is periodic, $\lgens{t+T} = \lgens{t}$;
\item $\qdms{t}{t_0}$ is invariant with respect to translation by $nT$, i.e. $\qdms{t+nT}{t_0+nT} = \qdms{t}{t_0}$, $n \in \integer$.
\end{enumerate}
\end{proposition}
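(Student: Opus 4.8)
The plan is to treat the two claims separately, exploiting the explicit forms $\lgens{t} = \hsup{t}{\,\cdot\,} + \disss{t}$ with $\disss{t} = \usups{t}{t_0}\circ\iplgens\circ\ausups{t}{t_0}$ and $\qdms{t}{t_0} = \usups{t}{t_0}\circ e^{(t-t_0)\iplgens}$ already established in the main theorem. For claim 1, I would split $\lgens{t}$ into its Hamiltonian and dissipative parts and verify periodicity of each. The commutator part $\hsup{t}{\,\cdot\,}$ is immediately $T$-periodic, since $\hams(t+T) = \hams(t)$ by hypothesis, so the only real work lies in the dissipator.

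For the dissipator, the central observation is that shifting the dressing superoperator $\usups{t}{t_0}$ by one period factors through the map $\mathcal{F}$. Indeed, from the Floquet relation $U_{t+T,t_0} = U_{t,t_0}e^{-i\hav T}$ of Proposition \ref{prop:EvolutionFloquetTheorem}, conjugation gives
\begin{equation}
	\usups{t+T}{t_0} = \usups{t}{t_0}\circ\mathcal{F}, \qquad \ausups{t+T}{t_0} = \mathcal{F}^{-1}\circ\ausups{t}{t_0},
\end{equation}
so that $\disss{t+T} = \usups{t}{t_0}\circ\mathcal{F}\circ\iplgens\circ\mathcal{F}^{-1}\circ\ausups{t}{t_0}$. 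Invoking the covariance property of Proposition \ref{prop:CovarianceProperty}, namely $\comm{\mathcal{F}}{\iplgens} = 0$, the inner conjugation collapses, $\mathcal{F}\circ\iplgens\circ\mathcal{F}^{-1} = \iplgens$, leaving $\disss{t+T} = \disss{t}$. Combining this with the periodicity of the commutator term proves $\lgens{t+T} = \lgens{t}$, whence periodicity under any shift $nT$ follows by iteration.

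For claim 2, I would note that the semigroup factor $e^{(t-t_0)\iplgens}$ depends only on the difference $t-t_0$, which is invariant under the simultaneous shift $(t,t_0)\mapsto(t+nT,t_0+nT)$. It therefore suffices to show $\usups{t+nT}{t_0+nT} = \usups{t}{t_0}$. This reduces to the translation invariance of the unitary propagator, $U_{t+nT,t_0+nT} = U_{t,t_0}$, already established from the periodicity of $\hams(t)$ in the Floquet discussion; conjugation by equal operators then yields equality of the superoperators, and hence $\qdms{t+nT}{t_0+nT} = \qdms{t}{t_0}$ for all $n\in\integer$.

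The argument is essentially bookkeeping once these two structural inputs are in place, and I expect no genuine obstacle. The only subtle point is the recognition that the period shift of $\usups{t}{t_0}$ is precisely absorbed by $\mathcal{F}$ — which is exactly the map for which covariance was proven — so that claim 1 is not an independent computation but a direct corollary of Propositions \ref{prop:EvolutionFloquetTheorem} and \ref{prop:CovarianceProperty}.
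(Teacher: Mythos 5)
Your proof of claim 1 is correct and is essentially identical to the paper's: both split off the Hamiltonian commutator (periodic by hypothesis on $\hams(t)$), write $\usups{t+T}{t_0} = \usups{t}{t_0}\circ\mathcal{F}$ from $U_{t+T,t_0} = U_{t,t_0}e^{-i\hav T}$, and collapse $\mathcal{F}\circ\iplgens\circ\mathcal{F}^{-1} = \iplgens$ via the covariance property. For claim 2, however, you take a genuinely different route. The paper does not use the explicit form $\qdms{t}{t_0} = \usups{t}{t_0}\circ e^{(t-t_0)\iplgens}$ at all there; instead it writes $\qdms{t+nT}{t_0+nT}$ as the time-ordered exponential of $\lgens{t}$, substitutes $t_k^{\prime} = t_k - nT$ in each iterated integral, and invokes the periodicity of $\lgens{t}$ just proven in claim 1. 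That argument is more general: it shows that \emph{any} propagator generated by a $T$-periodic Lindbladian is invariant under simultaneous translation by $nT$, independently of the Floquet factorization. Your argument is more direct and elementary — it only needs the translation invariance $U_{t+nT,t_0+nT} = U_{t,t_0}$ (established in the ODE section from the ordered-exponential representation) together with the fact that $e^{(t-t_0)\iplgens}$ depends only on $t-t_0$ — but it is tied to the specific constructed form of $\qdms{t}{t_0}$ and does not use claim 1 as input. Both are valid; the paper's version makes claim 2 a corollary of claim 1, while yours makes the two claims logically independent consequences of the Floquet structure.
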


\begin{proof} \hfill
\vskip\baselineskip
\noindent\emph{Ad 1.} Since periodicity of $\hams(t)$ we have $-i \comm{\hams(t)}{\,\cdot\,} = -i \comm{\hams(t+T)}{\,\cdot\,}$ and it is enough to show $\disss{t} = \disss{t+T}$. Recall that $U_{t+T,t_0} = U_{t,t_0}e^{-i\hav T} = U_{t,t_0} F_{t_0}$ due to Floquet theorem. Therefore we have $\usups{t+T}{t_0} = \usups{t}{t_0} \circ \mathcal{F}$. From prop. \ref{prop:CovarianceProperty} we have $\iplgens \circ \mathcal{F} = \mathcal{F} \circ \iplgens$, which implies
\begin{align}
	\diss{t+T}{A} &= (\usups{t}{t_0} \circ (\mathcal{F} \circ \iplgens) \circ \mathcal{F}^{-1} \circ \ausups{t}{t_0})(A) = \\
	&= (\usups{t}{t_0} \circ \iplgens \circ (\mathcal{F} \circ \mathcal{F}^{-1}) \circ \ausups{t}{t_0})(A) = \nonumber\\
	&= (\usups{t}{t_0} \circ \iplgens \circ \ausups{t}{t_0})(A) = \nonumber\\
	&= \diss{t}{A}. \nonumber
\end{align}
\vskip\baselineskip
\noindent \emph{Ad 2.} Using time-ordered formula \eqref{eq:LambdaGeneral} we have
\begin{align}
	\qdms{t+nT}{t_0+nT} &= \sum_{k=0}^{\infty} \frac{1}{k!} \int\limits_{t_0 + nT}^{t+nT} dt_{1} \int\limits_{t_0 + nT}^{t+nT} dt_{2} ... \int\limits_{t_0 + nT}^{t+nT} \mathcal{T} \{ \lgens{t_1} \circ \lgens{t_2} \circ ... \circ \lgens{t_k} \} \, dt_{k},
\end{align}
which, after putting $t_{k}^{\prime} = t_{k} - nT$, yields
\begin{align}
	\qdms{t+nT}{t_0+nT} &= \sum_{k=0}^{\infty} \frac{1}{k!} \int\limits_{t_0 }^{t} dt_{1}^{\prime} \int\limits_{t_0}^{t} dt_{2}^{\prime} ... \int\limits_{t_0}^{t} \mathcal{T} \{ \lgens{t_{1}^{\prime} + nT} \circ \lgens{t_{2}^{\prime} + nT} \circ ... \circ \lgens{t_{k}^{\prime} + nT} \} \, dt_{k}^{\prime} = \\
	&= \qdms{t}{t_0} \nonumber
\end{align}
since $\lgens{t}$ is periodic as was shown above.
\end{proof}

\subsubsection{Dual forms of Lindbladian and dynamical map}

It is a well-known fact that a dual space $\traceclass^{\adj}$ is isometrically isomorphic to $\bh$ and let $i: \traceclass^{\adj} \longrightarrow \bh$ denote an isometry bijection. Then, for every functional $\varphi \in \traceclass^{\adj}$ there exists one and only one bounded operator $\Phi_{\varphi} = i(\varphi)$ such that duality pairing $\varphi (x) = (\varphi, x) = \tr{\Phi_{\varphi} x}$ holds for every $x \in \traceclass$. Let $\mathcal{A} : \traceclass \longrightarrow \traceclass$ be bounded. We define a dual map $\mathcal{A}^{\dual} : \traceclass^{\adj} \longrightarrow \traceclass^{\adj}$ to be such a map that $\varphi \circ \mathcal{A} = \mathcal{A}^{\dual} (\varphi)$ or, via duality pairing, $(\varphi, \mathcal{A}(x)) = (\mathcal{A}^{\dual}(\varphi), x)$. Since $\traceclass^{\adj} \cong \bh$, for every $\mathcal{A}^{\dual}$ there exists a corresponding map $\pi (\mathcal{A}^{\dual})$ on $\bh$ such that a duality pairing can be expressed as $\tr{\Phi_{\varphi} \, \mathcal{A}(x)} = \tr{\pi(\mathcal{A}^{\dual})(\Phi_{\varphi})x}$ where $\pi(\mathcal{A}^{\dual})$ is defined abstractly by $i \circ \mathcal{A}^{\dual} = \pi(\mathcal{A}^{\dual}) \circ i$. Here, $\pi : \enmor{\traceclass^{\adj}} \longrightarrow \enmor{\bh}$ is an operator-valued \emph{representation} of $\traceclass^{\adj}$. Since the isometry between $\mathcal{A}^{\dual}$ and $\pi(\mathcal{A}^{\dual})$, we will refer $\pi(\mathcal{A}^{\dual}) = \mathcal{A}^{\adj}$ as a \emph{map dual to} $\mathcal{A}$ and will define it via
\begin{equation}\label{eq:DualityPairing}
	\tr{y \, \mathcal{A}(x)} = \tr{\mathcal{A}^{\adj}(y) \, x}.
\end{equation}
For $\lgens{t}$ and $\qdms{t}{t_0}$ we establish their duals $\lgens{t}^{\adj}$ and  and $\qdms{t}{t_0}^{\adj}$ (in the sense of representations as above) defined on operator \cstar $\bh$, according to equalities
\begin{equation}
	\tr{A \, \lgen{t}{\rho}} = \tr{\lgens{t}^{\adj}(A) \, \rho}, \qquad \tr{A \, \qdm{t}{t_0}{\rho}} = \tr{\qdms{t}{t_0}^{\adj}(A) \, \rho}
\end{equation}
where $\rho \in \traceclass$, $A \in \bh$. Strong continuity of $\lgens{t}$ and $\qdms{t}{t_0}$  within $\traceclass$ implies that $\algens{t}$ and $\aqdms{t}{t_0}$ are continuous in pointwise weak-$\star$ topology over $\bh$ and they are tied by a relation
\begin{equation}
	\frac{d}{dt} \aqdm{t}{t_0}{A} = \left(\aqdms{t}{t_0} \circ \algens{t} \right)(A).
\end{equation}

\begin{proposition}\label{prop:DualMaps}
Dual maps $\aiplgens$, $\algens{t}$ and $\aqdms{t}{t_0}$ are given via equalities
\begin{subequations}
\begin{align}
	&\aiplgen{A} =  \sum_{\alpha\beta}\sum_{\{\omega\}}\sum_{q\in\integer} \gamma_{\alpha\beta}(\omega+q\Omega) \left(  S_{\beta}(\omega,q)^{\adj} A \, S_{\alpha}(\omega,q) - \frac{1}{2}\acomm{S_{\beta}(\omega,q)^{\adj}S_{\alpha}(\omega,q)}{A} \right),\label{eq:DualLIP} \\
	&\algen{t}{A} = i \comm{\hams(t)}{A} + (\usups{t}{t_0} \circ \aiplgens \circ \ausups{t}{t_0})(A), \label{eq:DualL} \\
	&\aqdm{t}{t_0}{A} = \left( e^{(t-t_0)\aiplgens} \circ \ausups{t}{t_0} \right)(A). \label{eq:DualQDM}
\end{align}
\end{subequations}

\end{proposition}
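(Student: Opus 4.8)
The three dual formulas should follow essentially by transposing the corresponding primal formulas through the duality pairing \eqref{eq:DualityPairing}, using two elementary building blocks: the dual of an elementary Lindblad-type term and the dual of a conjugation (unitary) superoperator. I would first establish these two building blocks and then assemble all three claims.

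First I would compute \eqref{eq:DualLIP} by dualizing the interaction-picture generator $\iplgens$ term by term. The key computation is that the dual of a single term $\rho \longmapsto V \rho W^{\adj} - \tfrac{1}{2}\acomm{W^{\adj}V}{\rho}$ is $A \longmapsto W^{\adj} A V - \tfrac{1}{2}\acomm{W^{\adj}V}{A}$. This is a direct application of cyclicity of the trace: starting from $\tr{A(V\rho W^{\adj})} = \tr{(W^{\adj}AV)\rho}$ and $\tr{A\acomm{W^{\adj}V}{\rho}} = \tr{\acomm{W^{\adj}V}{A}\rho}$ (the anticommutator term being self-dual under the trace), one reads off the dual. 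Applying this with $V = S_{\beta}(\omega,q)$, $W = S_{\alpha}(\omega,q)$ and summing against the real coefficients $\gamma_{\alpha\beta}(\omega+q\Omega)$ yields exactly \eqref{eq:DualLIP}. I would also remark that the Hamiltonian term $-i\comm{\delta H}{\,\cdot\,}$ dualizes to $+i\comm{\delta H}{\,\cdot\,}$ by the same cyclicity argument, so no explicit commutator survives in $\aiplgens$ once the Lamb-shift is absorbed, consistent with the displayed form.

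Next I would treat the conjugation maps. For any unitary $U$, the superoperator $\mathcal{V}(A) = U A U^{-1}$ satisfies $\tr{y\,\mathcal{V}(x)} = \tr{y\,UxU^{-1}} = \tr{(U^{-1}yU)x}$, so $\mathcal{V}^{\adj}(y) = U^{-1}yU = \mathcal{V}^{-1}(y)$; that is, $\usups{t}{t_0}^{\adj} = \ausups{t}{t_0}$. The dual of a composition reverses order, $(\mathcal{A}\circ\mathcal{B})^{\adj} = \mathcal{B}^{\adj}\circ\mathcal{A}^{\adj}$, which follows immediately from \eqref{eq:DualityPairing} applied twice. Combining these with \eqref{eq:TheoremClaim2}, $\disss{t} = \usups{t}{t_0}\circ\iplgens\circ\ausups{t}{t_0}$, gives $\disss{t}^{\adj} = (\ausups{t}{t_0})^{\adj}\circ\aiplgens\circ\usups{t}{t_0}^{\adj} = \usups{t}{t_0}\circ\aiplgens\circ\ausups{t}{t_0}$, and together with the dual of $-i\comm{\hams(t)}{\,\cdot\,}$ being $+i\comm{\hams(t)}{\,\cdot\,}$ this establishes \eqref{eq:DualL}. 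Finally, \eqref{eq:DualQDM} follows by dualizing \eqref{eq:QDMunderFloquetTheorem}: $\qdms{t}{t_0}^{\adj} = (\usups{t}{t_0}\circ e^{(t-t_0)\iplgens})^{\adj} = (e^{(t-t_0)\iplgens})^{\adj}\circ\usups{t}{t_0}^{\adj} = e^{(t-t_0)\aiplgens}\circ\ausups{t}{t_0}$, where I use that dualization commutes with the exponential, $(e^{\tau\iplgens})^{\adj} = e^{\tau\aiplgens}$, since it holds term-by-term for the power series (legitimate because $\iplgens$ is bounded) and the dual is continuous.

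The only genuine subtlety, and the step I would be most careful about, is the interchange of dualization with the infinite series defining $\iplgens$ and with the exponential $e^{(t-t_0)\iplgens}$. For the generator this requires that the sum over $\alpha,\beta,\omega,q$ converge in a topology compatible with taking duals, and for the semigroup it requires $(e^{\tau\iplgens})^{\adj} = e^{\tau\aiplgens}$, i.e. that dualization be continuous and commute with the power series. Both are safe here: $\iplgens$ is a bounded generator (so the exponential converges in uniform operator topology), and the duality pairing \eqref{eq:DualityPairing} defines a contractive, hence continuous, transpose map; the stated weak-$\star$ continuity of $\algens{t}$ and $\aqdms{t}{t_0}$ is exactly what legitimizes passing the dual through the limits. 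Everything else is bookkeeping with the trace cyclicity and order reversal under composition.
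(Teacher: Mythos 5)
Your proposal is correct and follows essentially the same route as the paper's own proof: dualizing the elementary Lindblad terms via cyclicity of the trace, noting that the dual of a unitary conjugation is its inverse and that dualization reverses composition, and passing the dual through the exponential term by term, justified by boundedness of $\iplgens$. The paper merely writes these steps out more concretely (e.g.\ verifying $\tr{A \, (\usups{t}{t_0} \circ \iplgens)(\sigma)} = \tr{\ausup{t}{t_0}{A}\,\iplgen{\sigma}}$ directly and using induction on $\iplgens^{n}$ for the exponential), so no substantive difference remains.
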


\begin{proof}
Computation easily comes from cyclicity and linearity of trace, therefore we will only sketch the proof. Let us provide a simplified notation such that $\gamma_{\alpha\beta}(\omega,q) = \gamma_{\alpha\beta}$ and $S_{\alpha}(\omega,q) = V_{\alpha}$ which gives
\begin{equation}
	\iplgen{\rho} = \sum \gamma_{\alpha\beta} \left(V_{\alpha}\rho V_{\beta}^{\adj} - \frac{1}{2}\acomm{V_{\beta}^{\adj}V_{\alpha}}{\rho}\right)
\end{equation}
where the sum is taken over all indices like in \eqref{eq:DualLIP}. By cyclicity it is easy to check, that $\tr{A \, V_{\alpha}\rho V_{\beta}^{\adj}} = \tr{V_{\beta}^{\adj}AV_{\alpha} \, \rho}$ and $\tr{A \, \acomm{V_{\beta}^{\adj}V_{\alpha}}{\rho}} = \tr{\acomm{V_{\beta}^{\adj}V_{\alpha}}{A} \, \rho}$ which, by linearity, yields \eqref{eq:DualLIP}.
\par
For \eqref{eq:DualL}, first verify that 
\begin{subequations}
	\begin{align}
		&\tr{-i A \comm{\hams(t)}{\rho}} = \tr{i\comm{\hams(t)}{A} \rho}, \\
		&\tr{A \, (\usups{t}{t_0} \circ \iplgens)(\sigma)} = \tr{\ausup{t}{t_0}{A}\,\iplgen{\sigma}}
	\end{align}
\end{subequations}
and put $\sigma = \ausup{t}{t_0}{\dm{t}}$. Then, after some algebra, one arrives at  
\begin{align}
	&\tr{\ausup{t}{t_0}{A} \, \iplgen{\ausup{t}{t_0}{\dm{t}}}} \\
	&= \tr{U_{t,t_0} \left[ \sum \gamma_{\alpha\beta} \left(V_{\beta}^{\adj} U_{t,t_0}^{-1} A U_{t,t_0} V_{\alpha} - \frac{1}{2}\acomm{V_{\beta}^{\adj}V_{\alpha}}{U_{t,t_0}^{-1} A U_{t,t_0}} \right) \right] U_{t,t_0}^{-1} \dm{t}} \nonumber \\
	&= \tr{(\usups{t}{t_0} \circ \aiplgens \circ \ausups{t}{t_0})(A) \, \dm{t}} \nonumber
\end{align}
which implies \eqref{eq:DualL}. By induction, one can check that $\tr{A \, \iplgens^{n}(\rho)} = \tr{(\aiplgens)^{n}(A) \, \rho}$ which naturally leads to
\begin{equation}
	\tr{A \, e^{\tau\iplgens}} = \tr{\sum_{n=0}^{\infty} \frac{\tau^{n}}{n!} (\aiplgens)^{n}(A) \, \rho} = \tr{e^{\tau\aiplgens}(A) \, \rho}
\end{equation}
which is well-defined as long as $\| \iplgens \|<\infty$. Putting $\qdms{t}{t_0} = \usups{t}{t_0} \circ e^{(t-t_0)\iplgens}$, we get
\begin{equation}
	\tr{A \, \qdm{t}{t_0}{\rho}} = \tr{\ausup{t}{t_0}{A} \, e^{(t-t_0)\iplgens}(\rho)} = \tr{\left( e^{(t-t_0)\aiplgens} \circ \ausups{t}{t_0} \right)(A) \, \rho},
\end{equation}
yielding \eqref{eq:DualQDM} and completing the proof.
\end{proof}

\subsubsection{Periodic limit cycles in \schr picture}

\begin{lemma}\label{lemma:HavSigmaCommutativity}
Averaged Hamiltonian $\hav$ commutes with stationary point $\tilde{\sigma} \in \kernel{\iplgens}$ (if exists).
\end{lemma}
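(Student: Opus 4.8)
The plan is to reduce the statement to the vanishing of the commutator superoperator $\mathcal{K}(\,\cdot\,) = \comm{\hav}{\,\cdot\,}$ on $\tilde{\sigma}$, exploiting the covariance of $\iplgens$ already recorded in Proposition \ref{prop:CovarianceProperty}. First I would promote that covariance to its infinitesimal form, $\comm{\hav}{\iplgen{A}} = \iplgen{\comm{\hav}{A}}$ for every $A \in \bh$, i.e. $\comm{\mathcal{K}}{\iplgens} = 0$. This follows directly from \eqref{eq:CommutSalpha}: the operators $S_{\alpha}(\omega,q)$ and $S_{\alpha}(\omega,q)^{\adj}$ are eigenoperators of $\mathcal{K}$ with weights $+\omega$ and $-\omega$, so each bilinear combination $S_{\alpha}(\omega,q)^{\adj}S_{\beta}(\omega,q)$ has weight zero and commutes with $\hav$; in particular $\comm{\hav}{\delta H} = 0$. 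Substituting these facts into the Lindblad form \eqref{eq:LinteractionPicture} and using the Leibniz rule for $\mathcal{K}$ — equivalently, differentiating the one-parameter family $e^{-i\hav s}(\,\cdot\,)e^{i\hav s}$ at $s=0$, for which the computation in the proof of Proposition \ref{prop:CovarianceProperty} applies verbatim with $T$ replaced by $s$ — yields the intertwining relation.

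Second, I would feed a stationary point into this relation. If $\tilde{\sigma} \in \kernel{\iplgens}$, then $\iplgen{\comm{\hav}{\tilde{\sigma}}} = \comm{\hav}{\iplgen{\tilde{\sigma}}} = \comm{\hav}{0} = 0$, so $\comm{\hav}{\tilde{\sigma}}$ again lies in $\kernel{\iplgens}$; moreover cyclicity of the trace gives $\tr{\comm{\hav}{\tilde{\sigma}}} = 0$. Hence $\mathcal{K}$ leaves the kernel invariant and sends $\tilde{\sigma}$ to a trace-zero element of the kernel. To organise the rest, I would decompose $\tilde{\sigma} = \sum_{\nu} \tilde{\sigma}_{\nu}$ into eigenoperators of $\mathcal{K}$, where $\comm{\hav}{\tilde{\sigma}_{\nu}} = \nu\,\tilde{\sigma}_{\nu}$ and $\nu$ ranges over the differences $\epsilon_{k} - \epsilon_{l}$ of Bohr--Floquet quasienergies. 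Since $\iplgens$ commutes with $\mathcal{K}$ it preserves each eigensector, so $\iplgen{\tilde{\sigma}} = 0$ forces $\iplgen{\tilde{\sigma}_{\nu}} = 0$ separately for every $\nu$.

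The hard part is the final step, namely showing that the off-diagonal sectors ($\nu \neq 0$) actually vanish; this genuinely does not follow from the kernel structure alone, since $\kernel{\iplgens}$ may a priori contain operators not commuting with $\hav$. I would close it by invoking the physically relevant ergodicity (non-degeneracy) assumption under which the stationary state is unique, so that $\kernel{\iplgens} = \complex\,\tilde{\sigma}$; then $\comm{\hav}{\tilde{\sigma}} = \lambda\,\tilde{\sigma}$ for some scalar $\lambda$, and combining this with $\tr{\comm{\hav}{\tilde{\sigma}}} = 0$ and $\tr{\tilde{\sigma}} = 1$ gives $\lambda = 0$, whence $\comm{\hav}{\tilde{\sigma}} = 0$. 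I expect this uniqueness input to be the true obstacle: absent a non-resonance or irreducibility hypothesis on the generator, the argument can only be carried as far as the sectorwise stationarity $\iplgen{\tilde{\sigma}_{\nu}} = 0$, and commutativity must be extracted from that additional assumption.
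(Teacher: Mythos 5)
Your proof is correct and rests on the same two pillars as the paper's --- the covariance of $\iplgens$ under conjugation by functions of $\hav$, and the uniqueness of the stationary point --- but you work with the \emph{infinitesimal} covariance $\comm{\comm{\hav}{\,\cdot\,}}{\iplgens}=0$ where the paper works with the finite monodromy conjugation $\mathcal{F}=e^{-i\hav T}(\,\cdot\,)\,e^{i\hav T}$ of Proposition \ref{prop:CovarianceProperty}. The paper argues: $\tilde{\sigma}\in\kernel{\iplgens}$ and $\comm{\mathcal{F}}{\iplgens}=0$ give $\mathcal{F}(\tilde{\sigma})\in\kernel{\iplgens}$; uniqueness gives $\mathcal{F}(\tilde{\sigma})=\tilde{\sigma}$, hence $\comm{\tilde{\sigma}}{e^{i\hav T}}=0$; and from this it asserts $\comm{\tilde{\sigma}}{\hav}=0$. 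Your route is tighter at the two places where the paper is loose. First, uniqueness a priori only gives $\mathcal{F}(\tilde{\sigma})=\lambda\tilde{\sigma}$; one needs $\tr{\mathcal{F}(\tilde{\sigma})}=\tr{\tilde{\sigma}}=1$ to fix $\lambda=1$, and your explicit appeal to $\tr{\comm{\hav}{\tilde{\sigma}}}=0$ is the analogous step made visible. Second, and more substantively, the paper's final implication $\comm{\tilde{\sigma}}{e^{i\hav T}}=0\Rightarrow\comm{\tilde{\sigma}}{\hav}=0$ is not automatic: it can fail when two quasienergies differ by a nonzero integer multiple of $\Omega=2\pi/T$, since then $e^{i\hav T}$ has a degenerate eigenvalue that $\hav$ does not, and an operator may commute with the exponential without commuting with $\hav$. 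Your infinitesimal version bypasses this resonance issue entirely, because $\comm{\hav}{\tilde{\sigma}}=\lambda\tilde{\sigma}$ combined with $\tr{\comm{\hav}{\tilde{\sigma}}}=0$ and $\tr{\tilde{\sigma}}=1$ forces $\lambda=0$ directly (your intermediate eigensector decomposition is then not actually needed once uniqueness is invoked). Your closing caveat is also accurate: uniqueness, or some irreducibility hypothesis, is genuinely indispensable --- the lemma statement omits it, but the paper's own proof assumes it --- and without it the argument indeed stalls at sectorwise stationarity.
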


\begin{proof}
Assume that $\kernel{\iplgens}$ is non-trivial and there exists unique, non-zero stationary point $\tilde{\sigma} \in \kernel{\iplgens}$. This implies that $\mathcal{F}(\iplgen{\tilde{\sigma}}) = 0$ since $\mathcal{F}$ is linear. Proposition \ref{prop:CovarianceProperty} (the covariance property) yields commutation relation $\mathcal{F} \circ \iplgens = \iplgens \circ \mathcal{F}$ implying $\iplgen{\mathcal{F}(\tilde{\sigma})} = 0$ and, by assumption of non-triviality of $\kernel{\iplgens}$, $\mathcal{F}(\tilde{\sigma}) = e^{-i \hav T} \tilde{\sigma} e^{i\hav T} = \tilde{\sigma}$. Equivalently, we have $\tilde{\sigma} e^{i\hav T} = e^{i\hav T} \tilde{\sigma}$, so $\comm{\tilde{\sigma}}{e^{i\hav T}} = \comm{\tilde{\sigma}}{\hav} = 0$.
\end{proof}

\begin{proposition}
If there exists a unique stationary point $\tilde{\sigma} \in \kernel{\iplgens}$, then its corresponding \schr picture counterpart $\sigma_{t}$ is a periodic limit cycle.
\end{proposition}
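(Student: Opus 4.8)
The plan is to produce the \schr-picture counterpart of $\tilde{\sigma}$ explicitly and then verify two things: that it is periodic with period $T$, and that it attracts every trajectory. First I would note that, because $\tilde{\sigma}\in\kernel{\iplgens}$, the interaction-picture semigroup fixes it, $e^{(t-t_0)\iplgens}(\tilde{\sigma}) = \tilde{\sigma}$ for all $t$. Hence, applying the dynamical map $\qdms{t}{t_0} = \usups{t}{t_0}\circ e^{(t-t_0)\iplgens}$ constructed in the main theorem, the \schr-picture counterpart is simply $\sigma_{t} = \qdm{t}{t_0}{\tilde{\sigma}} = \usup{t}{t_0}{\tilde{\sigma}} = U_{t,t_0}\,\tilde{\sigma}\,U_{t,t_0}^{-1}$.

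For periodicity I would invoke the Floquet relation of Proposition \ref{prop:EvolutionFloquetTheorem}, $U_{t+T,t_0} = U_{t,t_0}e^{-i\hav T}$, to write $\sigma_{t+T} = U_{t,t_0}\,e^{-i\hav T}\tilde{\sigma}\,e^{i\hav T}\,U_{t,t_0}^{-1} = U_{t,t_0}\,\mathcal{F}(\tilde{\sigma})\,U_{t,t_0}^{-1}$. Lemma \ref{lemma:HavSigmaCommutativity} gives $\comm{\tilde{\sigma}}{\hav}=0$, equivalently $\mathcal{F}(\tilde{\sigma})=\tilde{\sigma}$, so this collapses to $\sigma_{t+T} = U_{t,t_0}\,\tilde{\sigma}\,U_{t,t_0}^{-1} = \sigma_{t}$, establishing $T$-periodicity of the orbit $\{\sigma_{t}\}$.

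To justify that this periodic orbit is genuinely a \emph{limit cycle}, i.e. attracting, I would use that uniqueness of the stationary point of a relaxing CPTP semigroup forces $e^{\tau\iplgens}(\rho)\to\tilde{\sigma}$ in trace norm as $\tau\to\infty$ for every density operator $\rho$. Transporting this back through $\usups{t}{t_0}$, which is trace-norm preserving because it is implemented by the unitary $U_{t,t_0}$, an arbitrary trajectory $\dm{t} = \usup{t}{t_0}{e^{(t-t_0)\iplgens}(\dm{t_0})}$ obeys $\|\dm{t}-\sigma_{t}\|_{1} = \|e^{(t-t_0)\iplgens}(\dm{t_0}) - \tilde{\sigma}\|_{1}\to 0$, so every solution converges to the periodic orbit $\{\sigma_{t}\}$.

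The main obstacle is this last, attractivity step: periodicity is an immediate consequence of the Floquet structure together with the preceding lemma, but converting mere uniqueness of $\kernel{\iplgens}$ into genuine relaxation $e^{\tau\iplgens}(\rho)\to\tilde{\sigma}$ in general requires an ergodicity or primitivity input (a spectral-gap, irreducibility, or faithfulness condition on $\tilde{\sigma}$) rather than uniqueness alone. I would therefore either cite a standard relaxation theorem for quantum dynamical semigroups or state that additional hypothesis explicitly as part of the claim.
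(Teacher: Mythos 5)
Your periodicity argument is essentially the paper's own: the paper expands $\usup{t+T}{t_0}{\tilde{\sigma}}$ via the Floquet representation $U_{t,t_0}=P_{t,t_0}e^{-i\hav(t-t_0)}$ and invokes periodicity of $P_{t,t_0}$ together with Lemma \ref{lemma:HavSigmaCommutativity}, which is algebraically identical to your direct use of the monodromy relation $U_{t+T,t_0}=U_{t,t_0}e^{-i\hav T}$ followed by $\mathcal{F}(\tilde{\sigma})=\tilde{\sigma}$. Your caveat about attractivity is well placed but does not mark a divergence: the paper's proof likewise establishes only $T$-periodicity of the orbit, and defers the attractor property to the remark after the proposition, where it is made conditional on $\lim_{t\to\infty}\ipdm{t}=\tilde{\sigma}$ rather than derived from uniqueness of the stationary point.
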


\begin{proof}
\schr picture form of $\tilde{\sigma}$ is given as $\sigma_{t} = \usup{t}{t_0}{\tilde{\sigma}}$. From lemma \ref{lemma:HavSigmaCommutativity} and from Floquet representation (see prop. \ref{prop:EvolutionFloquetTheorem}) of $U_{t,t_0}$ it follows, that
\begin{align}
	\sigma_{t+T} &= \usup{t+T}{t_0}{\tilde{\sigma}} = P_{t+T,t_0} e^{-i\hav (t-t_0)} e^{-i\hav T} \tilde{\sigma} e^{i\hav T} e^{i\hav (t-t_0)} P_{t+T,t_0}^{-1} \\ 
	&= P_{t,t_0} e^{-i\hav (t-t_0)} \tilde{\sigma} e^{i\hav (t-t_0)} P_{t,t_0}^{-1} = \usup{t}{t_0}{\tilde{\sigma}} = \sigma_{t}, \nonumber
\end{align}
implying existence of periodic limit cycle of \schr picture evolution.
\end{proof}

The periodic limit cycle is attainable after sufficiently long evolution time, i.e. if $\lim\limits_{t\to\infty} \ipdm{t} = \tilde{\sigma}$, then $\sigma_{t}$ becomes an orbit-type attractor in $\traceclass$ and all trajectories $\dm{t}$ originating in points $\dm{t_0}$ from within its basin of attraction asymptotically tend to $\sigma_{t}$. The fact of existence of such state is important from point of view of quantum thermodynamics as one often works within the steady-state regime in order to define notions of heat flows and so-called \emph{local temperatures} in context of first and second law of thermodynamics\cite{Szczygielski2013,AlickiGelbwaserKurizki2013}.

\section{Applications}
\label{sec:Applications}

\subsection{Examples}

\begin{example}[\textbf{two-level system}]\label{ex:ExampleTLS}
As a first example, let us consider a simple model of two-level system with cosinusoidal modulation, which can find some application in the field of quantum thermal machines\cite{GelbwaserErezAlickiKurizki2013}. We set a time-dependent Hamiltonian of system of interest as
\begin{equation}
	\hams(t) = \frac{1}{2} \omega_0 \sigma^{3} + \lambda \Omega \sigma^{3} \cos{\Omega t},
\end{equation}
acting on $\hilberts = \complex^2$, where $\sigma^{i} : i = 1,2,3$ denotes an appropriate Pauli matrix and $\lambda > 0$ is a \emph{small} dimensionless steering parameter. System is coupled to electromagnetic field, described by symmetrized Fock space $\hilbertr = \mathscr{F}_{+}(\mathscr{H}_{\mathrm{em.}} \otimes \complex^{2}) = \overline{\bigoplus_{N=0}^{\infty} ((\mathscr{H}_{\mathrm{em.}}\otimes\complex^{2})^{\otimes N})_{+}}$, where bar denotes a Hilbert space completion and $\mathscr{H}_{\mathrm{em.}}$ is a one-photon Hilbert space. Polarization degrees of freedom are included within $\complex^2$. $\hilbertr$ can be considered as Hilbert spaces, complete with respect to inner product $\iprod{\Psi}{\Phi} = \sum_{N=0}^{\infty} \iprod{\Psi_{N}}{\Phi_{N}}$ for $\Psi = (\Psi_{1}, \Psi_{2}, ...)$, $\Phi_{N} = (\Phi_{1}, \Phi_{2}, ...)$. The interaction Hamiltonian is given as
\begin{equation}
	\hami = (\sigma^{+} + \sigma^{-}) \otimes B(f) = \sigma^{1} \otimes B(f),
\end{equation}
where $\sigma^{\pm} = \frac{1}{2} (\sigma^{1} \pm i\sigma^{2})$ is a raising (resp. lowering) operator in $\complex^2$ and $B(f)$ is a self-adjoint linear operator from CCR (Canonical Commutation Relations) \cstar over $\hilbertr$ with $f : \real^{3} \times \{-1,1\} \longrightarrow \complex$ being sufficiently smooth and compactly supported. Putting $t_0 = 0$, we easily obtain the Floquet representation of unitary propagator $U_{t} = P_{t} e^{-i\hav t}$, where
\begin{equation}
	P_{t} = \exp{\left\{ -i\lambda \Omega \sigma^{3} \sin{\Omega t} \right\}}, \qquad \hav = \frac{1}{2} \omega_{0} \sigma^{3}
\end{equation}
which yields a simple monodromy operator $U_{T} = \exp{\left\{-\frac{1}{2} i \omega_{0} \sigma^{3}\right\}}$ and averaged Hamiltonian $\hav = \frac{1}{2}\omega_0 \sigma^3$. It is diagonalized by (Floquet) eigenvectors $\phi_{1} = (1,0)^{T}$, $\phi_{2} = (0,1)^{T}$ and the quasienergies take a particularly simple form, $\epsilon_{1} = \frac{1}{2}\omega_0$, $\epsilon_2 = -\frac{1}{2}\omega_0$. Set $\{\omega\}$ of quasifrequencies is therefore $\{ 0, \pm \omega_0\}$ with 0 being 2-fold degenerated. Interaction picture counterpart of $S = \sigma^1$ is therefore
\begin{equation}
	\tilde{S}_{t} = \mathcal{U}^{-1}_{t}(\sigma^1) = \sigma^{1} \cos{\left( \omega_0 t + 2\lambda \sin{\Omega t} \right)} - \sigma^{2} \cos{\left( \omega_0 t + 2\lambda\sin{\Omega t} \right)}.
\end{equation}
Fourier expansion of $\tilde{S}_{t}$ is infinite. However, basing on the assumption of $\lambda$ being small, we expand the above in Maclaurin series with respect to $\lambda$ and neglect $\mathcal{O}(\lambda^2)$, effectively obtaining
\begin{equation}
	\tilde{S}_{t} = \sigma^1 \left(\cos{\omega_0 t} - 2\lambda \sin{\omega_0 t}\sin{\Omega t}  \right) - \sigma^{2} \left( \sin{\omega_0 t} + 2\lambda \cos{\omega_0 t} \sin{\Omega t} \right)
\end{equation}
which can be easy expanded into Fourier series using Euler formula. Effectively, there are 6 shifted quasienergies present in this model, $\{\omega + q\Omega\} = \{ \pm(\omega_0 - \Omega), \pm \omega_0, \pm (\omega_0 + \Omega) \}$ and corresponding Fourier coefficients are
\begin{align}
	&\tilde{S}(\omega_0,1) = \left(\begin{array}{cc}0 & \lambda \\ 0 & 0\end{array} \right), \qquad \tilde{S}(\omega_0,0) = \left(\begin{array}{cc}0 & 1 \\ 0 & 0\end{array} \right), \qquad \tilde{S}(\omega_0,-1) = \left(\begin{array}{cc}0 & -\lambda \\ 0 & 0\end{array} \right), \\
	&\tilde{S}(0,0) = 0, \qquad \tilde{S}(-\omega_0, 1) = \tilde{S}(\omega_0,-1)^{T}, \qquad \tilde{S}(-\omega_0, 0) = \tilde{S}(\omega_0,0)^{T}, \nonumber \\ 
	&\tilde{S}(-\omega_0,-1) = \tilde{S}(\omega_0,1)^{T}, \nonumber
\end{align}
where matrix representations of all operators are explicitly written in Floquet basis $\{\phi_{1},\,\phi_{2}\}$. Let $\ipdm{t} = [(\tilde{\rho}_{t})_{ij}]_{i,j=1}^{2}$, $(\tilde{\rho}_{t})_{ij} = \iprod{\phi_{i}}{\ipdm{t}\phi_{j}}$. Semigroup generator $\iplgens$ takes a form
\begin{align}
	&\frac{d}{dt} \ipdm{t} = \iplgen{\ipdm{t}} = \left( \begin{array}{cc} -a (\tilde{\rho}_{t})_{11} + b (\tilde{\rho}_{t})_{22} & -c (\tilde{\rho}_{t})_{12} \\ -c (\tilde{\rho}_{t})_{21} & a (\tilde{\rho}_{t})_{11} - b (\tilde{\rho}_{t})_{22} \end{array} \right), \\
	&a = \gamma(-\omega_0) + \lambda^{2} (\gamma(-\omega_0-\Omega)+\gamma(-\omega_0+\Omega)), \\
	&b = \gamma(\omega_0) + \lambda^{2}(\gamma(\omega_0-\Omega)+\gamma(\omega_0+\Omega)), \nonumber \\
	&c = \frac{1}{2} (\gamma(-\omega_0)+\gamma(\omega_0)+\lambda^{2}(\gamma(-\omega_0-\Omega)+\gamma(-\omega_0+\Omega)+\gamma(\omega_0-\Omega)+\gamma(\omega_0+\Omega))).\nonumber
\end{align}
where $\gamma(x) = Ax^{3}/ (1-e^{-\beta_{\mathrm{e}}x})$ is the spectral density function of environment (see e.g. ref. \onlinecite{Szczygielski2013}) and $\beta_{\mathrm{e}}$ is the inverse temperature of electromagnetic field, $\beta_{\mathrm{e}} = 1/T_{\mathrm{e}}$. We note that in case of equilibrium environment, the famous KMS (Kubo--Martin--Schwinger) condition\cite{AlickiLendi} $\gamma(-x) = e^{-\beta_{\mathrm{e}}x} \gamma (x)$ will hold and formulas can be further simplified. Notice $\tr{\iplgen{\ipdm{t}}} = 0$, as desired for trace-preserving. Resulting interaction picture semigroup $e^{t \iplgens}$, valid for small $\lambda$ regime, is then given by
\begin{align}
	&e^{t \iplgens} (\tilde{\rho}_0) = \left( \begin{array}{cc} e^{-(a+b)t}\left[f(t) (\tilde{\rho}_{0})_{11} + b g(t) (\tilde{\rho}_{0})_{22}\right] & e^{-ct} (\tilde{\rho}_{0})_{12} \\ e^{-ct} (\tilde{\rho}_{0})_{21} & e^{-(a+b)t} \left[a g(t) (\tilde{\rho}_{0})_{11} + h(t) (\tilde{\rho}_{0})_{22}\right] \end{array} \right), \\
	&f(t) = \frac{a+b e^{(a+b)t}}{a+b}, \qquad g(t) = \frac{e^{(a+b)t}-1}{a+b}, \qquad h(t) = \frac{b+a e^{(a+b)t}}{a+b},
\end{align}
where $\tilde{\rho}_0$ is an initial density operator of trace one. Off-diagonal terms of $\ipdm{t}$ vanish after long times and density operator decoheres into diagonal stationary state $\tilde{\sigma} = (a+b)^{-1} \mathrm{diag}\{ b,a \}$. To obtain full dynamical map $\Lambda_{t}$, simply apply $\mathcal{U}_{t}$ to obtain
\begin{equation}
	\Lambda_{t}(\dm{0}) = \left( \begin{array}{cc}
	e^{-(a+b)t} \left[ f(t)(\tilde{\rho}_{0})_{11} + b g(t) (\tilde{\rho}_{0})_{22} \right] & e^{-t(c+\omega_0 i)} e^{-2\lambda i \sin{\Omega t}} (\tilde{\rho}_{0})_{12} \\
	e^{-t(c-\omega_0 i)} e^{2\lambda i \sin{\Omega t}} (\tilde{\rho}_{0})_{21} & e^{-(a+b)t} \left[ a g(t) (\tilde{\rho}_{0})_{11} + h(t) (\tilde{\rho}_{0})_{22} \right] 
	\end{array}\right).
\end{equation}
In this example, $U_{t}$ is diagonal and commutes with $\tilde{\sigma}$; therefore $\Lambda_{t}(\dm{0}) \to \sigma_{t} = \tilde{\sigma}$ for large $t$. This is a special case of trivialized orbit as, asymptotically, the trajectory approaches to a fixed point.
\end{example}

\begin{example}[\textbf{harmonic oscillator}]\label{ex:HO}
Now, consider a one-mode harmonic oscillator with characteristic frequency $\omega$, driven by external monochromatic laser beam of frequency $\Omega$, weakly interacting with electromagnetic field. The space $\hilberts$ of system of interest is taken as bosonic, symmetrized Fock space over one-state Hilbert space $\complex$, defined as a completion $\hilberts = \mathscr{F}_{+} (\complex) = \overline{\bigoplus_{N=0}^{\infty} (\complex ^{\otimes N})_{+}}$ and environment space $\hilbertr$ is as in example \ref{ex:ExampleTLS}. From mathematical point of view, this example features a more general framework, as we explicitly introduce unbounded bosonic operators. However, such models are very common in realm of open systems theory and appropriate Lindblad-like formulas for semigroup generator still make sense and generate a valid dynamical maps despite unboundedness. It remains true also in the Floquet-related approach, therefore we provide this illustrative computation. Second-quantized Hamiltonian of system of interest is
\begin{equation}
	\hams(t) = \omega \, a^{\adj} a + g (e^{i\Omega t} a + e^{-i\Omega t} a^{\adj}),
\end{equation}
where $a$ and $a^{\adj}$ are standard creation and annihilation operators on $\mathscr{F}_{+}(\complex)$ satisfying $\comm{a}{a^{\adj}} = \id{\hilberts}$ and $g$ is a constant characterizing interaction with laser beam. Laser light is treated quasi-classically and interaction, given under rotating wave approximation (RWA) is treated as a time-dependent perturbation. System is again coupled to environment by $\hami = (a + a^{\adj}) \otimes B(f)$. The unitary propagator takes a general, time-ordered form \eqref{eq:UnitaryPropagatorTimeOrdered} which may be very hard to compute explicitly. However, one can easily check that Floquet representation $U_t = P_{t} e^{-i\hav t}$ of propagator ($t_0 = 0$) is given by
\begin{align}
	P_{t} = e^{-it \Omega \, a^{\adj}a }, \qquad \hav = \Delta \, a^{\adj} a + g(a + a^{\adj})
\end{align}
with $\Delta = \omega - \Omega$ being the detuning parameter. To verify that such particular choice indeed constitutes the Floquet representation, simply differentiate it with respect to $t$ and check that is satisfies the same differential equation as $U_{t}$, namely $\frac{d}{dt}(P_{t}e^{-i\hav t}) = -i \hams(t) \, P_{t}e^{-i\hav t}$. The corresponding map $\mathcal{U}_{t}^{-1}$ is still $\mathcal{U}_{t}^{-1}(A) = U_{t}^{-1} A U_{t}$. Monodromy operator $U_{T}$ is
\begin{equation}
	U_{T} = e^{-2\pi i \, a^{\adj}a} e^{-iT\hav} = e^{-iT\hav}
\end{equation}
as $e^{-2\pi i \, a^{\adj}a} = \id{\hilberts}$. Now one can apply a following unitary transformation of $a$ and $a^{\adj}$,
\begin{equation}
	c = a - \alpha = W_{\alpha} a W_{\alpha}^{\adj}, \qquad c^{\adj} = a^{\adj} - \alpha = W_{\alpha} a^{\adj} W_{\alpha}^{\adj},
\end{equation}
where $\alpha = -g/\Delta$ and $W_{\alpha} = \exp{\{\alpha a^{\adj} - \overline{\alpha}a\}}$ is a unitary Weyl displacement operator. New operators are subject to the same CCR-algebraic relation $\comm{c}{c^{\adj}} = \id{\mathscr{F}_{+}(\complex)}$ and therefore $\hav = \Delta (c^{\adj}c - \alpha^2 )$ and
\begin{equation}
	U_{t} = e^{-it\Omega \, a^{\adj}a} e^{-it\Delta( c^{\adj}c - \alpha^{2})}.
\end{equation}
$\hav$ has pure-point spectrum and is diagonalized by (Floquet) eigenvectors $\phi_{n} = W_{\alpha} \psi_{n}$, $n \geqslant 0$, where $\psi_{n}$ is an eigenvector of excitation number operator $a^{\adj}a$ such that $a^{\adj}a \, \psi_{n} = n\psi_{n}$. Therefore $c^{\adj}c \, \phi_{n} = n \phi_{n}$ and quasienergies are of a form $\epsilon_{n} = \Delta (n - \alpha^{2})$. Correspondingly, spectrum of $F$ is of a form $e^{-i\epsilon_{n} T} = e^{ -iT\Delta(n-\alpha^{2}) }$. Set of quasifrequencies is then
\begin{equation}
	\{\omega\} = \{ \epsilon_{n} - \epsilon_{m} : n,m\in \nat \} = \{ \Delta (n-m) \} = \{ \Delta k : k \in \integer\}.
\end{equation}
System's part of interaction Hamiltonian is $S = a+a^{\adj}$. Applying $\mathcal{U}_{t}^{-1}$, one obtains after some algebra a following expression for interaction picture $\tilde{S}_{t}$,
\begin{equation}\label{eq:StFourierTerms}
	\tilde{S}_{t} = e^{-it(\Delta + \Omega)}c + e^{it(\Delta + \Omega)} c^{\adj} + \alpha e^{-it\Omega} \id{\mathscr{F}_{+}(\complex)} + \alpha e^{it\Omega} \id{\mathscr{F}_{+}(\complex)},
\end{equation}
from which we see that there are effectively only 3 quasifrequencies present, $\{0, \pm \Delta\}$ and 4 shifted quasifrequencies, $\{\pm \Omega, \pm (\Delta + \Omega)\}$. Therefore a (diagonalized) Floquet-Lindblad generator $\iplgens$ in interaction picture becomes
\begin{equation}\label{eq:HOscGenerator}
	\iplgen{\ipdm{t}} = \gamma_1 \left(  c \, \ipdm{t} \, c^{\adj} - \frac{1}{2}\acomm{c^{\adj}c}{\ipdm{t}} \right) + \gamma_2 \left( c^{\adj}\,\ipdm{t}\,c - \frac{1}{2}\acomm{c\,c^{\adj}}{\ipdm{t}} \right)
\end{equation}
where $\gamma_{1,2} = \gamma(\pm (\Delta + \Omega))$ and $\gamma(x) = Ax^{3}/ (1-e^{-\beta_{\mathrm{R}}x})$ is the spectral density function of environment (see e.g. ref. \onlinecite{Szczygielski2013}) and $\beta_{\mathrm{R}}$ is the inverse temperature of electromagnetic field. Note that Fourier terms in \eqref{eq:StFourierTerms} corresponding to $\pm\Omega$ are deprecated since the are proportional to identity and therefore provide no contribution to $\iplgens$.
\par
This type of model is exactly solvable (see ref. \onlinecite{AlickiLendi} and references therein) via expression for \emph{dual} $\tilde{\Lambda}_{t}^{\adj}$, understood in a sense of proposition \ref{prop:DualMaps}. One can check, that $\tilde{\Lambda}_{t}^{\adj}$, defined by its action on Weyl operator $W(z)$, $z \in \complex$ as 
\begin{equation}
	\tilde{\Lambda}_{t}^{\adj}(W(z)) = \exp{\left\{ -\frac{|z|^{2}}{2} \frac{\gamma_{1}}{\gamma_{1}-\gamma_{2}} \left( 1-e^{-(\gamma_{1}-\gamma_{2})t} \right) \right\}} W\left(\frac{2i\overline{z}}{\sqrt{2}}\exp{\left\{ -\frac{1}{2}(\gamma_{1}-\gamma_{2})t \right\}}\right),
\end{equation}
indeed satisfies the equation
\begin{equation}
	\frac{d}{dt} \tilde{\Lambda}_{t}^{\adj}(W(z)) = (\aiplgens \circ \tilde{\Lambda}_{t}^{\adj})(W(z))
\end{equation}
where $\aiplgens$ is a generator dual to \eqref{eq:HOscGenerator}. It implies that $\tilde{\Lambda}_{t}^{\adj}$ is a dual to interaction picture semigroup $\tilde{\Lambda}_{t}$. This is a formal -- however exact -- solution; other formulations of dynamical semigroup are also possible\cite{AlickiLendi}. As long as $\gamma_{1} > \gamma_{2}$ (which is fulfilled via Kubo -- Martin -- Schwinger condition in case of thermally-equilibrium reservoir\cite{AlickiLendi}), $\tilde{\Lambda}_{t}^{\adj}(W(z))$ tends $W_{\infty}(z) = \exp{\{-|z|^{2} \gamma_{1} / 2(\gamma_{1}-\gamma_{2}) \}} \cdot \id{\hilbertr}$ in weak-$\star$ topology as $t \to \infty$. This implies, that $\tilde{\Lambda}_{t}(\dm{0})$ tends to unique stationary point which is identified with thermal state
\begin{equation}
	\tilde{\sigma} = \frac{e^{-\beta \Delta c^{\adj} c}}{1-e^{-\beta \Delta}} = \frac{e^{-\beta (\hav + \Delta\alpha^{2})}}{1-e^{-\beta \Delta}},
\end{equation}
where $\beta = \Delta^{-1} \ln{(\gamma_{1}/\gamma_{2})}$. In \schr picture, this thermal state yields a periodic orbit 
\begin{equation}
	\sigma_{t} = U_{t} \sigma U_{t}^{-1} = \frac{1}{1-e^{-\beta \Delta}} e^{-it\Omega \, a^{\adj}a} e^{-\beta(\hav + \Delta\alpha^{2})} e^{it\Omega \, a^{\adj}a}.
\end{equation}
Notice that when no external driving is applied ($g=0$, $\Omega = 0$, $\Delta = \omega$, $\alpha = 0$) the averaged Hamiltonian $\hav$ gets replaced by $\omega \, a^{\adj}a$ and one instead gets a stationary Gibbs state $\sigma_{\beta} = e^{-\beta \omega \, a^{\adj}a} / (1-e^{-\beta \omega})$, $\beta = \omega^{-1} \ln{(\gamma(\omega)/\gamma(-\omega))}$.
\end{example}

\begin{example}\label{ex:TwoBaths}
In ref. \onlinecite{Szczygielski2013} a model of two-level system described by time-dependent Hamiltonian
\begin{equation}
	\hams(t) = \frac{1}{2} \omega_{0} \sigma^{3} + g (e^{i\Omega t} \sigma^{-} + e^{-i\Omega t} \sigma^{+})
\end{equation}
and coupled to electromagnetic field through
\begin{equation}
	H_{\mathrm{int., e}} = \sigma^{1} \otimes B(f)
\end{equation}
was analyzed. Appropriate interaction picture dynamics was developed and various thermodynamical features of this model were addressed. In particular, two thermodynamical regimes were studied.
\par
Firstly, it was explicitly assumed that the field remains in 0 temperature and in a vacuum state. In this case, the spectral density function is modified such that
\begin{equation}
	\gamma_{\mathrm{e}}(x) = A x^{3} \chi_{[0,\infty)}(x)
\end{equation}
(negative frequencies are cut off). The time-dependent part of $\hams(t)$ describes (under rotating wave approximation) the action of monochromatic laser beam of frequency $\Omega$ and it was assumed that the detuning parameter $\Delta = \omega_0 - \Omega$ could be arbitrary. Deriving appropriate Markovian master equation in interaction picture, the general phenomenon of nonresonant fluorescence of such system was described and a formula for fluorescence power spectrum was obtained.
\par
Secondly, the field was put in equilibrium state of finite temperature $T_{\mathrm{e}}$, described by spectral density of form
\begin{equation}
	\gamma_{\mathrm{e}}(x) = \frac{Ax^3}{1-e^{\beta_{\mathrm{e}}x}},
\end{equation}
$T_{\mathrm{e}} = 1/\beta_{\mathrm{e}}$, and two-level system was coupled to additional, so-called \emph{dephasing bath} of some temperature $T_{\mathrm{d}} = 1/\beta_{\mathrm{d}}$ and spectral density $\gamma_{\mathrm{d}}(x)$. In principle, the exact form of $\gamma_{\mathrm{d}}(x)$ may be unknown, however its values at certain points $\{ \omega + q\Omega \}$ can be sometimes determined experimentally, depending on exact realization of a model. The coupling was implemented through interaction Hamiltonian of a form
\begin{equation}
	H_{\mathrm{int., d}} = \sigma^{3} \otimes F
\end{equation}
with $F = F^{\adj}$ acting on Hilbert space of states of dephasing bath. Without external driving, this Hamiltonian is responsible for pure decoherence effects (hence the name of a bath) only, leaving diagonal terms of $\dm{t}$ (\emph{populations}) unchanged. It was shown, that resulting dynamical semigroup allows to interpret the whole system as a heat pump, which generates a heat flow between baths and direction of this flow depends on $\mathrm{sgn}\, (\Delta)$. 
\end{example}

\subsection{Note on some auxiliary results}

The approach outlined in this paper was recently used several times, mainly in context of various models of quantum thermal machines. In most scenarios, a two-level system coupled to two heat baths at different temperatures and driven by external coherent light source was used as a substantial building block of simple (but effective) microscopic machine, pumping heat from one bath to another. Such system is easily described by time-dependent Hamiltonian $H(t) = \frac{1}{2} \omega(t) \sigma^{3}$, where $\sigma^{3} = \mathrm{diag}\{1,-1\}$ is a Pauli matrix and $\omega (t)$ is periodically modulated, or by more general form $H(t) = \frac{1}{2} \omega_0 \sigma^{3} + V(t)$, with $\omega_0$ being an unperturbed characteristic frequency of two-level system and $V(t)$ standing for periodic perturbation (not necessarily commuting with $\sigma^{3}$, as in example \ref{ex:TwoBaths}).
\par
Such periodically driven quantum heat machines were shown to be \emph{universal} in a sense that the are able to act as quantum engine or as quantum refrigerator, pumping heat from cold bath to hot one or vice-versa, depending on light modulation\cite{LevyAlickiKosloff2006,AlickiGelbwaserKurizki2013}. Refrigerator regime gained even more attention, also in the context of much more fundamental -- and more challenging -- issues such as unattainability of absolute zero temperature, being a consequence of Nernst' formulation of third law of thermodynamics\cite{AlickiKolarGelbwaserKurizki2012}.

\section{Concluding remarks and open problems}

It was shown that there exists a mathematically rigorous and self-consistent description of open systems governed by periodic Hamiltonians in terms of composite, trace-preserving dynamical maps. The result seems plausible at least because of the fact, that resulting time-dependent Lindbladian $\lgens{t}$ emerges from underlying, time-independent generator $\iplgens$ of \emph{ordinary} dynamical semigroup which is handled by well-known methods. Moreover, directly from Floquet theory it follows that such representation of $\qdms{t}{t_0}$ exists if only monodromy operator is normal (which is guaranteed by self-adjointness of Hamiltonian) and has discrete spectrum. From purely computational point of view, the approach to general completely positive dynamics based on Floquet theory seems to be a robust and powerful tool, allowing to find an exact form of Lindbladian. This field of research seems very promising, as quantum systems featuring a periodic modulation emerge quite naturally in quantum optics or nanotechnology. This idea of modulated quantum engine allows to rethink many statements, regarding our understanding of thermodynamics on quantum level. Naturally, spectrum of possible practical implementations of such microscopic devices seems to be rich and includes various incarnations of \emph{quantum machine} idea such as some externally modulated nanosystems, quantum dots, optically active atoms etc. as well as more complicated ones, perhaps even biological.
\par
Naturally, there are still some unanswered questions remaining. One of them is related to the more general problem of existence and uniqueness of stationary point, which was here simply assumed to be unique. There is some literature present, starting from paper by Frigerio \cite{Frigerio1978}, which addresses a general problem of existence of faithful stationary points; however, this topic still needs more attention, especially in case of infinite-dimensional spaces. An interesting direction for eventual progress is definitely the much more demanding non-periodic regime where, for example, $H(t) = H_{1}(t) + H_{2}(t)$ with $H_{1,2}(t)$ being periodic with non-commensurate periods $T_{1,2}$. This framework lays in the domain of \emph{multi-mode Floquet theory} and is difficult already in the realm of unitary dynamics; however there are some significant advances in this field, at least for bichromatic case\cite{ChuTelnov2004,ErnstSamosonMeier2005}. 

\section*{Acknowledgements}

Author is very grateful to Professor Robert Alicki for fruitful discussions and valuable suggestions and to David Gelbwaser-Klimovsky for comments. Support by University of Gdansk (via grant No. 538-5400-B166-13) and by the Foundation for Polish Science TEAM project (cofinanced by the EU European Regional Development Fund) is greatly acknowledged.
\vskip\baselineskip
This paper contains, in a large part, results of work stimulated and influenced by conference \emph{Mathematical Horizons for Quantum Physics 2}, organized jointly by Institute for Mathematical Sciences and Centre for Quantum Technologies of National University of Singapore in 2013. Author acknowledges the support received from organizers during this event.


\appendix

\section{Derivation of semigroup generator in the interaction picture}\label{app:MME}

Given Markovian master equation \eqref{eq:PrototypeMME},
\begin{equation}
	\frac{d}{dt} \ipdm{t} = -\int\limits_{t_0}^{\infty} \ptr{\hilbertr}{\comm{\iphami{t}}{\comm{\iphami{t-t^{\prime}}}{\ipdm{t}\otimes\omega}}} \, dt^{\prime},
\end{equation}
and applying equality $\comm{A}{BC}-\comm{A}{CB} = ABC-BCA + \hc$, we compute the double commutator to obtain
\begin{equation}
	\frac{d}{dt} \ipdm{t} = \int\limits_{t_0}^{t} (A_{1}-A_{2}) \, dt^{\prime} + \hc,
\end{equation}
where
\begin{subequations}
\begin{align}
&A_{1} = \ptr{\hilbertr}{\iphami{t-t^{\prime}} (\ipdm{t}\otimes\dmr)\iphami{t}^{\adj}}, \\
&A_{2} = \ptr{\hilbertr}{\iphami{t}^{\adj} \iphami{t-t^{\prime}} (\ipdm{t}\otimes\dmr)},
\end{align}
\end{subequations}
and $\iphami{t}$ was intentionally replaced by $\iphami{t}^{\adj}$. Expanding $\iphami{t}$ and $\iphami{t-t^{\prime}}$ according to \eqref{eq:IPHIexpansion} and \eqref{eq:IPSexpansion}, we get
\begin{subequations}
	\begin{align}
		A_{1} &= \lambda^{2}\sum_{\alpha\beta} r_{\beta\alpha}(t,t^{\prime}) \ips{\alpha}{t-t^{\prime}} \ipdm{t} \ips{\beta}{t}^{\adj} \\
		&= \sum_{\alpha\beta}\sum_{qq^{\prime}}\sum_{\omega\omega^{\prime}}e^{-i(\omega+q\Omega)t^{\prime}} r_{\beta\alpha}(t,t^{\prime}) e^{i(\omega-\omega^{\prime}+(q-q^{\prime})\Omega)(t-t_0)} S_{\alpha}(\omega,q) \ipdm{t} S_{\beta}(\omega^{\prime},q^{\prime})^{\adj}, \nonumber
	\end{align}
	\begin{align}
		A_{2} &= \lambda^{2}\sum_{\alpha\beta} r_{\beta\alpha}(t,t^{\prime}) \ips{\beta}{t}^{\adj} \ips{\alpha}{t-t^{\prime}}\ipdm{t} \\
		&= \sum_{\alpha\beta}\sum_{qq^{\prime}}\sum_{\omega\omega^{\prime}} e^{-i(\omega+q\Omega)t^{\prime}} r_{\beta\alpha}(t,t^{\prime}) e^{i(\omega-\omega^{\prime}+(q-q^{\prime})\Omega)(t-t_0)} S_{\beta}(\omega^{\prime},q^{\prime})^{\adj} S_{\alpha}(\omega,q)  \ipdm{t}, \nonumber
	\end{align}
\end{subequations}
where the \emph{reservoir autocorrelation function} was introduced,
\begin{equation}\label{eq:AutocorrelationFunction}
	r_{\alpha\beta}(t,s) = \tr{\dmr\ipr{\alpha}{t}\ipr{\beta}{t-s}}.
\end{equation}

\begin{proposition}\label{prop:AutocorrFunHomogenous}
If $\comm{\dmr}{\hamr} = 0$, i.e. $\dmr$ expresses a constant density operator of the environment, then the autocorrelation functions $r_{\alpha\beta}$ are homogenous in time, i.e. $r_{\alpha\beta}(t+\tau,s) = r_{\alpha\beta}(t,s)$.
\end{proposition}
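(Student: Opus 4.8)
The plan is to exploit the fact that a shift $t \mapsto t+\tau$ in the interaction-picture reservoir operators is implemented by conjugation with a single free unitary, and then to absorb that unitary by combining the cyclic property of the trace with the hypothesis $\comm{\dmr}{\hamr}=0$.

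First I would record the translation law for the operators $\ipr{\alpha}{t}$. Setting $V = e^{i\hamr\tau}$, it follows immediately from definition \eqref{eq:IPRdefinition} that
\begin{equation}
	\ipr{\alpha}{t+\tau} = e^{i\hamr(t+\tau-t_0)} R_{\alpha} e^{-i\hamr(t+\tau-t_0)} = V \ipr{\alpha}{t} V^{\adj},
\end{equation}
and, with the very same $V$,
\begin{equation}
	\ipr{\beta}{t+\tau-s} = V \ipr{\beta}{t-s} V^{\adj}.
\end{equation}

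Next I would substitute these into the definition \eqref{eq:AutocorrelationFunction}. Since the interior factor $V^{\adj}V = \id{\hilbertr}$ cancels, one obtains
\begin{equation}
	r_{\alpha\beta}(t+\tau,s) = \tr{\dmr \, V \ipr{\alpha}{t} \ipr{\beta}{t-s} V^{\adj}},
\end{equation}
and by cyclicity of the trace this equals $\tr{V^{\adj}\dmr V \, \ipr{\alpha}{t} \ipr{\beta}{t-s}}$. The hypothesis $\comm{\dmr}{\hamr}=0$ means that $\dmr$ commutes with every bounded function of $\hamr$, in particular with $V = e^{i\hamr\tau}$, so that $V^{\adj}\dmr V = \dmr$. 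Hence
\begin{equation}
	r_{\alpha\beta}(t+\tau,s) = \tr{\dmr \, \ipr{\alpha}{t} \ipr{\beta}{t-s}} = r_{\alpha\beta}(t,s),
\end{equation}
which is the asserted homogeneity.

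There is no genuine obstacle here; the argument is essentially substitution followed by the cyclic property of the trace. The only step deserving a word of justification is the identity $V^{\adj}\dmr V = \dmr$, which rests on the elementary fact that commutation of $\dmr$ with the self-adjoint generator $\hamr$ propagates, through the functional calculus, to commutation with the one-parameter unitary group $e^{i\hamr\tau}$ it generates.
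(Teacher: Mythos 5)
Your proposal is correct and rests on exactly the same three ingredients as the paper's own proof: the group law of the free evolution $e^{-i\hamr(t-t_0)}$, cyclicity of the trace, and the fact that $\comm{\dmr}{\hamr}=0$ propagates to $\comm{\dmr}{e^{i\hamr\tau}}=0$. Your version, which isolates the single shift unitary $V=e^{i\hamr\tau}$ and conjugates, is if anything a slightly cleaner packaging of the same argument.
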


\begin{proof}
Let $U_{t,t_0} = e^{-i\hamr(t-t_0)}$ be a two-parameter unitary group. From definition \eqref{eq:AutocorrelationFunction} and \eqref{eq:IPRdefinition} it follows, that $\ipr{\alpha}{t} = U_{t,t_0}^{-1} R_{\alpha} U_{t,t_0}$, which implies
\begin{align}
	r_{\alpha\beta}(t,s) &= \ptr{\hilbertr}{\dmr U_{t,t_0}^{-1} R_{\alpha} U_{t,t_0}U_{t-s,t_0}^{-1} R_{\beta} U_{t-s,t_0}} \\
	&= \ptr{\hilbertr}{\dmr U_{t-s,t_0} U_{t,t_0}^{-1} R_{\alpha} U_{t,t_0}U_{t-s,t_0}^{-1} R_{\beta}} \nonumber \\
	&= \ptr{\hilbertr}{\dmr U_{t-s+\tau,t_0} U_{t+\tau,t_0}^{-1} R_{\alpha} U_{t+\tau,t_0}U_{t-s+\tau,t_0}^{-1} R_{\beta}} \nonumber \\
	&= \ptr{\hilbertr}{\dmr U_{t+\tau,t_0}^{-1} R_{\alpha} U_{t+\tau,t_0}U_{t-s+\tau,t_0}^{-1} R_{\beta} U_{t-s+\tau,t_0}} \nonumber \\
	&= \ptr{\hilbertr}{\dmr \ipr{\alpha}{t+\tau} \ipr{\beta}{t-s+\tau}} \nonumber \\
	&= r_{\alpha\beta}(t+\tau,s) \nonumber 
\end{align}
which is implied by cyclic property of trace, $\tr{ABC} = \tr{CAB} = \tr{BCA}$, properties of $U_{t,t_0}$ as a two-parameter unitary group, $U_{t_{1},t_{2}} U_{t_{3},t_{4}} = U_{t_{1}+t_{3},t_{2}+t_{4}}$, $U_{t,t_0}^{-1} = U_{t_0,t}$ and the assumption of $\dmr$ being constant, $\comm{\dmr}{U_{t,t_0}} = 0$.
\end{proof}
\noindent Markovian master equation now takes a form
\begin{subequations}
\begin{align}\label{eq:MMEnoSecular}
	\frac{d}{dt} \ipdm{t} &= \lambda^{2}\sum_{\alpha\beta}\sum_{\omega\omega^{\prime}}\sum_{qq^{\prime}} \Gamma_{\beta\alpha}(\omega+q\Omega) e^{i(\omega-\omega^{\prime} + (q-q^{\prime})\Omega)(t-t_0)} Z(\alpha,\beta,\omega,\omega^{\prime},q,q^{\prime})(\ipdm{t}) \\
	&+ \hc, \nonumber \\
	&Z(\alpha,\beta,\omega,\omega^{\prime},q,q^{\prime})(\ipdm{t}) = S_{\alpha}(\omega,q) \ipdm{t} S_{\beta}(\omega^{\prime},q^{\prime})^{\adj} - S_{\beta}(\omega^{\prime},q^{\prime})^{\adj} S_{\alpha}(\omega,q)\ipdm{t},
\end{align}
\end{subequations}
and $\Gamma_{\alpha\beta}$ are given by one-sided Fourier transforms,
\begin{equation}
	\Gamma_{\alpha\beta}(x) = \int\limits_{t_0}^{\infty} e^{-ixt^{\prime}} \tr{\dmr \ipr{\alpha}{t}\ipr{\beta}{t-t^{\prime}}} \, dt^{\prime}.
\end{equation}
The next step one wants to make in order to simplify this expression is usually referred as \emph{secular approximation}, which states, roughly, that \emph{nonsecular terms} in \eqref{eq:MMEnoSecular}, i.e. those for which $\omega \neq \omega^{\prime}$ and $q \neq q^{\prime}$, may be neglected since the oscillate very rapidly\cite{AlickiLidarZanardi2006,BreuerPetruccione2002,AlickiLendi}. This can be justified, since one is averaging over sufficiently long times, $t - t_0 \gg \max_{\omega\neq\omega^{\prime},m\in\integer}\{|\omega-\omega^{\prime}+m\Omega|^{-1}\}$ and only slowly varying terms remain\cite{AlickiLidarZanardi2006}. We estimate the typical intrinsic evolution time $\tau_{\mathrm{S}}$ of system $\mathrm{S}$ to be comparable with $\max_{\omega\neq\omega^{\prime},m\in\integer}\{|\omega-\omega^{\prime}+m\Omega|^{-1}\}$ and much larger than both relaxation time $\tau_{\mathrm{R}}$ of $\mathrm{S}$, e.g. a time during which system's state $\dm{t}$ changes sufficiently, and correlation decay time $\tau_{\mathrm{R}}$ of reservoir, $\tau_{\mathrm{S}} \gg \tau_{\mathrm{R}} \gg \tau_{\mathrm{B}}$. We are left with
\begin{equation}\label{eq:MMEnotordered}
	\frac{d}{dt} \ipdm{t} = \lambda^{2}\sum_{\alpha\beta}\sum_{\{\omega\}}\sum_{q\in\integer} \Gamma_{\beta\alpha}(\omega+q\Omega) \left(S_{\alpha}(\omega,q) \ipdm{t} S_{\beta}(\omega,q)^{\adj} - S_{\beta}(\omega,q)^{\adj} S_{\alpha}(\omega,q)\ipdm{t} \right) + \hc
\end{equation}

\begin{proposition}\label{prop:GammaMatrixResolution}
There exists a function $\Gamma_{\alpha\beta}^{\prime}$ such that $\Gamma_{\alpha\beta}$ may be expressed as
\begin{equation}\label{eq:GammaResolution}
	\Gamma_{\alpha\beta}(x) = \frac{1}{2}g_{\alpha\beta}(x) + i \Delta_{\alpha\beta}(x)
\end{equation}
where:
\begin{enumerate}
	\item $g_{\alpha\beta}(x) = \Gamma_{\alpha\beta}(x) + \Gamma^{\prime}_{\alpha\beta}(x)$ and $\Delta_{\alpha\beta}(x) = \frac{1}{2i} \left(\Gamma_{\alpha\beta}(x) - \Gamma^{\prime}_{\alpha\beta}(x)\right)$, \label{property:property1}
	\item $\Delta_{\alpha\beta}(x) = \sigma_{\alpha\beta}(x) + \pi_{\alpha\beta}(x)$ where $[\sigma_{\alpha\beta}(x)]$ and $[\pi_{\alpha\beta}(x)]$ is respectively hermitian and antihermitian matrix,\label{property:property2}
	\item $[g_{\alpha\beta}(x)+2i\pi_{\alpha\beta}(x)]$ is hermitian and positive.  \label{property:property3}
\end{enumerate}
\end{proposition}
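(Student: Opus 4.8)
The plan is to exhibit an explicit $\Gamma_{\alpha\beta}^{\prime}$ and then dispatch the three properties in turn, the first two being purely algebraic and the third carrying all of the analytic content. Taking $t_0 = 0$ as in \eqref{eq:SpectralDensityFunctions_t0}, I would set
\[
	\Gamma^{\prime}_{\alpha\beta}(x) = \overline{\Gamma_{\beta\alpha}(x)} = \int\limits_{-\infty}^{0} e^{-ixt^{\prime}} \, r_{\alpha\beta}(t^{\prime}) \, dt^{\prime} ,
\]
the two expressions coinciding because self-adjointness of the $R_{\alpha}$ and of $\dmr$, together with cyclicity of the trace, yield the reflection identity $\overline{r_{\beta\alpha}(t^{\prime})} = r_{\alpha\beta}(-t^{\prime})$ by a manipulation analogous to the one in proposition \ref{prop:AutocorrFunHomogenous}. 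With this choice $g_{\alpha\beta} = \Gamma_{\alpha\beta} + \Gamma^{\prime}_{\alpha\beta}$ becomes exactly the two-sided Fourier transform, i.e. the spectral density $\gamma_{\alpha\beta}$ of \eqref{eq:SpectralDensityFunctions_t0}, and property \ref{property:property1} then follows immediately on substituting the definitions of $g_{\alpha\beta}$ and $\Delta_{\alpha\beta}$ into \eqref{eq:GammaResolution} and simplifying.

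Property \ref{property:property2} is nothing more than the canonical splitting of the matrix $[\Delta_{\alpha\beta}(x)]$ into its hermitian and antihermitian parts, $\sigma_{\alpha\beta} = \tfrac{1}{2}(\Delta_{\alpha\beta} + \overline{\Delta_{\beta\alpha}})$ and $\pi_{\alpha\beta} = \tfrac{1}{2}(\Delta_{\alpha\beta} - \overline{\Delta_{\beta\alpha}})$. I would also record here that the same reflection identity forces $\overline{\Delta_{\beta\alpha}} = \Delta_{\alpha\beta}$, so that under the present self-adjoint couplings $[\Delta_{\alpha\beta}]$ is already hermitian and $\pi_{\alpha\beta} \equiv 0$; the antihermitian piece is retained only to keep the statement in its general form, and the term $2i\pi_{\alpha\beta}$ in property \ref{property:property3} then simply drops out.

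The substance is property \ref{property:property3}. Hermiticity of $[g_{\alpha\beta} + 2i\pi_{\alpha\beta}]$ is immediate, since $g_{\alpha\beta} = \Gamma_{\alpha\beta} + \overline{\Gamma_{\beta\alpha}}$ satisfies $\overline{g_{\beta\alpha}} = g_{\alpha\beta}$ and $2i$ times an antihermitian matrix is hermitian. The positivity is where the work lies, and the plan is to deduce it from \emph{Bochner}'s theorem applied to the matrix-valued reservoir correlation. Fixing coefficients $c_{\alpha} \in \complex$ and writing $A(t^{\prime}) = \sum_{\alpha} c_{\alpha} \ipr{\alpha}{t^{\prime}}$, I would exploit self-adjointness of the $\ipr{\alpha}{t^{\prime}}$ to recognize the quadratic form as a single Fourier transform,
\[
	\sum_{\alpha\beta} \overline{c_{\alpha}} \, g_{\alpha\beta}(x) \, c_{\beta} = \int\limits_{-\infty}^{\infty} e^{-ixt^{\prime}} \, \tr{\dmr \, A(t^{\prime})^{\adj} A(0)} \, dt^{\prime} ,
\]
of the scalar correlation $h(t^{\prime}) = \tr{\dmr \, A(t^{\prime})^{\adj} A(0)}$. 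Stationarity of $\dmr$ (proposition \ref{prop:AutocorrFunHomogenous}) makes $h$ a function of $t^{\prime}$ alone, and $h$ is of \emph{positive type}: for arbitrary times $t_{j}$ and scalars $z_{j}$ one has $\sum_{jk} \overline{z_{j}} z_{k} \, h(t_{j}-t_{k}) = \tr{\dmr \, C^{\adj} C}$ with $C = \sum_{j} z_{j} A(t_{j})$, which is nonnegative because $\dmr \geqslant 0$ and $C^{\adj} C \geqslant 0$. Bochner's theorem then guarantees that the Fourier transform of $h$ is pointwise nonnegative, which is precisely positivity of the quadratic form above.

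The main obstacle I anticipate is this positivity step: it is the only place where one must go beyond formal algebra and invoke an honest analytic theorem, and care is needed to justify interchanging the trace with the Fourier integral, to ensure convergence of the (in general improper) transforms, and to handle the unbounded reservoir operators occurring in the bosonic examples, where one must still argue that $\dmr \, A(t^{\prime})^{\adj} A(0)$ is trace class. Everything else reduces to the reflection identity and to the elementary hermitian/antihermitian decomposition.
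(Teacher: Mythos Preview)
Your strategy coincides with the paper's: exhibit $\Gamma'_{\alpha\beta}$ as the complementary half of the Fourier integral, dispatch properties~\ref{property:property1} and~\ref{property:property2} algebraically, and invoke Bochner's theorem for property~\ref{property:property3}. The substantive difference is that you specialise to $t_0 = 0$ from the outset, whereas the paper keeps $t_0$ arbitrary. In the general case the paper splits $\Gamma'_{\alpha\beta} = \tilde{\Gamma}_{\alpha\beta} + A_{\alpha\beta}$, where $\tilde{\Gamma}_{\alpha\beta}$ is the integral over $(-\infty,-t_0]$ (this piece equals your $\overline{\Gamma_{\beta\alpha}}$) and $A_{\alpha\beta}$ is the residual integral over $[-t_0,t_0]$. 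The antihermitian part is then $\pi_{\alpha\beta} = -\tfrac{1}{2i}A_{\alpha\beta}$, so your observation that $\pi\equiv 0$ is exactly the statement $A_{\alpha\beta}\equiv 0$ when $t_0 = 0$; for $t_0\neq 0$ the proposition genuinely has a nonvanishing antihermitian contribution, and $g_{\alpha\beta}+2i\pi_{\alpha\beta} = g_{\alpha\beta}-A_{\alpha\beta}$ is the Fourier transform of $r_{\alpha\beta}(t')(1-\chi_{[-t_0,t_0]}(t'))$ rather than of $r_{\alpha\beta}$ itself. So your argument proves the proposition only in the $t_0=0$ case the paper singles out as the common convention; extending it to arbitrary $t_0$ requires carrying the $A_{\alpha\beta}$ term through. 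On the other hand, your handling of the Bochner step---contracting with $c_\alpha$ first to reduce the matrix-valued correlation to the scalar function $h(t') = \tr{\dmr A(t')^{\adj}A(0)}$ and then verifying positive-definiteness via $\tr{\dmr C^{\adj}C}\geq 0$---is cleaner and more transparent than the paper's treatment, which works with the matrix-valued correlation directly.
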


\begin{proof}
Let us assume a particular form of $\Gamma_{\alpha\beta}^{\prime}$,
\begin{equation}\label{eq:GammaDefinition}
	\Gamma_{\alpha\beta}^{\prime} (x) = \int\limits_{-\infty}^{t_0} e^{-ixt^{\prime}} \tr{\dmr \ipr{\alpha}{t}\ipr{\beta}{t-t^{\prime}}} \, dt^{\prime}.
\end{equation}
We will show that such a function satisfies all needed requirements. Property \ref{property:property1} is trivial -- just substitute proposed definitions of $g_{\alpha\beta}$ and $\Delta_{\alpha\beta}$ to \eqref{eq:GammaResolution}. For property \ref{property:property2}, let us split $\Gamma_{\alpha\beta}^{\prime}(x)$ into $\Gamma_{\alpha\beta}^{\prime}(x) = \tilde{\Gamma}_{\alpha\beta}(x) + A_{\alpha\beta}(x)$ such that
\begin{subequations}
\begin{align}
	&\tilde{\Gamma}_{\alpha\beta}(x) = \int\limits_{-\infty}^{-t_0} e^{-ixt^{\prime}} \tr{\dmr \ipr{\alpha}{t}\ipr{\beta}{t-t^{\prime}}} \, dt^{\prime}, \\ 
	&A_{\alpha\beta}(x) = \int\limits_{-t_0}^{t_0} e^{-ixt^{\prime}} \tr{\dmr \ipr{\alpha}{t}\ipr{\beta}{t-t^{\prime}}} \, dt^{\prime}.\label{eq:ADefinition}
\end{align}
\end{subequations}
Using cyclicity of trace and property $\overline{\tr{A}} = \tr{A^{\adj}}$ one has
\begin{align}
	\overline{\Gamma_{\alpha\beta}(x)} &= \int\limits_{t_0}^{\infty} e^{ixt^{\prime}} \tr{\dmr \ipr{\beta}{t-t^{\prime}}\ipr{\alpha}{t}} \, dt^{\prime}.
\end{align}
Putting $u = -t^{\prime}$, $du = -dt^{\prime}$ and applying prop. \ref{prop:AutocorrFunHomogenous},
\begin{align}
	\overline{\Gamma_{\alpha\beta}(x)} &= \int\limits_{-\infty}^{-t_0} e^{-ixu} \, \tr{\dmr \ipr{\beta}{t+u}\ipr{\alpha}{t}} \, du \\
	&= \int\limits_{-\infty}^{-t_0} e^{-ixu} \, \tr{\dmr \ipr{\beta}{t}\ipr{\alpha}{t-u}} \, du = \tilde{\Gamma}_{\beta\alpha}(x), \nonumber
\end{align}
so one obtains $\Gamma^{\prime}_{\alpha\beta}(x) = \overline{\Gamma_{\beta\alpha}(x)} + A_{\alpha\beta}(x)$. Calculating similarly one shows that $[A_{\alpha\beta}(x)]$ is hermitian. Take
\begin{equation}
	\sigma_{\alpha\beta}(x) = \frac{1}{2i}\left(\Gamma_{\alpha\beta}(x)-\overline{\Gamma}_{\beta\alpha}(x)\right), \qquad \pi_{\alpha\beta}(x) = -\frac{1}{2i}A_{\alpha\beta}(x)
\end{equation}
and conclude, by elementary calculations, that $\overline{\sigma_{\alpha\beta}(x)} = \sigma_{\beta\alpha}(x)$ and $\overline{\pi_{\alpha\beta}(x)} = -\pi_{\beta\alpha}(x)$, i.e. we have defined a hermitian and anit-hermitian matrix. Of course $\Delta_{\alpha\beta}(x) = \sigma_{\alpha\beta}(x) + \pi_{\alpha\beta}(x)$ as intended and property \ref{property:property2} is proved. For property \ref{property:property3}, let us first note, that $g_{\alpha\beta}(x) + 2i\pi_{\alpha\beta}(x) = g_{\alpha\beta}(x) - A_{\alpha\beta}(x)$. Showing hermiticity is trivial, since $[A_{\alpha\beta}(x)]$ is easily shown to be hermitian and $g_{\alpha\beta}(x)$ is hermitian as well,
\begin{equation}
	g_{\alpha\beta}(x) = \Gamma_{\alpha\beta}(x) + \overline{\Gamma_{\beta\alpha}(x)} + A_{\alpha\beta}(x), \qquad \overline{g_{\alpha\beta}(x)} = \Gamma_{\beta\alpha}(x) + \overline{\Gamma_{\alpha\beta}(x)} + A_{\beta\alpha}(x) = g_{\beta\alpha}(x).
\end{equation}
Function $g_{\alpha\beta}(x)$ is, by \eqref{eq:GammaDefinition} and \eqref{eq:ADefinition}, actually equal to Fourier transform of time-dependent correlation function of reservoir,
\begin{equation}
	g_{\alpha\beta}(x) = \int\limits_{-\infty}^{\infty} e^{-ixt^{\prime}} \tr{\dmr \ipr{\alpha}{t}\ipr{\beta}{t-t^{\prime}}} \, dt^{\prime}
\end{equation}
and
\begin{align}
	g_{\alpha\beta}(x) - A_{\alpha\beta}(x) = \int\limits_{-\infty}^{\infty} e^{-ixt^{\prime}} \tr{\dmr \ipr{\alpha}{t}\ipr{\beta}{t-t^{\prime}}}\left(1-\chi_{[-t_0,t_0]}(t^{\prime})\right) \, dt^{\prime}
\end{align}
i.e. $g_{\alpha\beta}(x) - A_{\alpha\beta}(x)$ may be also considered as a Fourier transform of autocorrelation function multiplied by appropriate indicator. It remains to show that this is a positive-definite function. We say that $f(x)$ is \emph{positive-definite function} if for arbitrary sequence $\{x_{k}\} \subset \domain{f}$, $k = 1, \, 2, \, ... \, , \, n$ the corresponding matrix $[f(x_{k}-x_{l})]_{k,l=1}^{n}$ is positive semi-definite\cite{AttalJoyePillet2006}, i.e. $\sum_{k,l=1}^{n} f(x_{k}-x_{l})\overline{z_k} z_{l} \geqslant 0$ for all non-zero $z \in \complex^{n}$ and for all $n \in \nat_{+}$.  Let $f(x) = r_{\alpha\beta}(x)\left(1-\chi_{[-t_0,t_0]}(x)\right)$, where $r_{\alpha\beta}(x) = \tr{\dmr \ipr{\alpha}{t}\ipr{\beta}{t-x}}$. It is known that autocorrelation functions are positive-definite\cite{AlickiLendi,BreuerPetruccione2002}, $[r_{\alpha\beta}(x_{k}-x_{l})]_{kl} \geqslant 0$ for all $\alpha,\,\beta$, so if such a sequence $\{x_{k}\}$ is chosen that $x_{k}-x_{l} \notin [-t_0, t_0]$, we have $f(x_{k}-x_{l}) = r_{\alpha\beta}(x_{k}-x_{l})$ and $[f(x_{k}-x_{l})]_{kl}$ is positive semi-definite. In other cases (i.e. some of $x_{k} - x_{l} \in [-t_0, t_0]$), situation is similar. Let $\delta x _{kl} = x_{k}-x_{l}$ and denote by $\mathcal{D}$ a set of all $\delta x_{kl}$. Define a subset $\mathcal{I} \subseteq \mathcal{D}$, $\mathcal{I} = \mathcal{D} \cap [-t_0, t_0]$. Then,
\begin{align}
	\sum_{k,l=1}^{n} f(x_{k}-x_{l})\overline{z_k} z_{l} &= \sum_{\{\delta x_{kl} \in \mathcal{I}\}} f(\delta x_{kl})\overline{z_k} z_{l} + \sum_{\{\delta x_{kl} \in \mathcal{D}\setminus\mathcal{I}\}} f(\delta x_{kl})\overline{z_k} z_{l} \\
	&= \sum_{\{\delta x_{kl} \in \mathcal{D}\setminus\mathcal{I}\}} r_{\alpha\beta}(\delta x_{kl})\overline{z_k} z_{l} \geqslant 0 \nonumber
\end{align}
since $f(\delta x_{kl}) = 0$ for $\delta x_{kl} \in \mathcal{I}$. Thus, $[f(x_{k}-x_{l})]$ is positive semi-definite for every sequence $\{x_{k}\}$ and $f(x) = \tr{\dmr \ipr{\alpha}{t}\ipr{\beta}{t-x}}\left(1-\chi_{[-t_0,t_0]}(x)\right)$ is positive-definite. Due to the Bochner's theorem, Fourier transform of $f(x)$ must create a positive function, hence $g_{\alpha\beta}(x) - A_{\alpha\beta}(x)$ is positive and the proof is complete.
\end{proof}

Applying property \ref{property:property1} from above proposition and reordering  \eqref{eq:MMEnotordered}, one obtains after some effort,
\begin{align}
	\frac{d}{dt}\ipdm{t} = &\frac{\lambda^{2}}{2} \sum_{\alpha\beta}\sum_{\{\omega\}}\sum_{q\in\integer} g_{\beta\alpha}(\omega+q\Omega) \Big( \comm{S_{\alpha}(\omega,q)\ipdm{t}}{S_{\beta}(\omega,q)^{\adj}} + \comm{S_{\alpha}(\omega,q)}{\ipdm{t}S_{\beta}(\omega,q)^{\adj}} \Big) \\
	&+ i \lambda^{2} \sum_{\alpha\beta}\sum_{\{\omega\}}\sum_{q\in\integer}\Big( \Delta_{\beta\alpha}(\omega+q\Omega)\comm{S_{\alpha}(\omega,q)\ipdm{t}}{S_{\beta}(\omega,q)^{\adj}} - \overline{\Delta_{\alpha\beta}(\omega+q\Omega)}\comm{S_{\alpha}(\omega,q)}{\ipdm{t}S_{\beta}(\omega,q)^{\adj}} \Big) \nonumber
\end{align}
The second sum may be rewritten by applying property \ref{property:property2}, namely $\Delta_{\beta\alpha}(x) = \sigma_{\beta\alpha}(x) - \frac{1}{2i}A_{\beta\alpha}(x)$ where $[\sigma_{\beta\alpha}(x)]$ and $[A_{\beta\alpha}(x)]$ were shown to be hermitian and we obtain
\begin{align}
	\frac{d}{dt}\ipdm{t} = &\lambda^{2} \sum_{\alpha\beta}\sum_{\{\omega\}}\sum_{q\in\integer} \gamma_{\beta\alpha}(\omega+q\Omega) \left( S_{\alpha}(\omega,q)\ipdm{t}S_{\beta}(\omega,q)^{\adj} - \frac{1}{2}\acomm{S_{\beta}(\omega,q)^{\adj}S_{\alpha}(\omega,q)}{\ipdm{t}}\right) \\
	&- i \lambda^{2} \sum_{\alpha\beta}\sum_{\{\omega\}}\sum_{q\in\integer} \sigma_{\alpha\beta}(\omega+q\Omega)\comm{S_{\alpha}(\omega,q)^{\adj} S_{\beta}(\omega,q)}{\ipdm{t}} \nonumber
\end{align}
where $\gamma_{\alpha\beta}(x) = g_{\alpha\beta}(x) - A_{\alpha\beta}(x)$ is hermitian and positive, according to property \ref{property:property3}. Defining new operator
\begin{equation}
	\delta H = \lambda^{2}\sum_{\alpha\beta}\sum_{\{\omega\}}\sum_{q\in\integer} \sigma_{\alpha\beta}(\omega+q\Omega)S_{\alpha}(\omega,q)^{\adj} S_{\beta}(\omega,q) 
\end{equation}
which by inspection is self-adjoint, we cast the obtained equation into familiar, Lindblad-like form
\begin{align}
	&\frac{d}{dt}\ipdm{t} = \iplgen{\ipdm{t}} = -i \comm{\delta H}{\ipdm{t}} + \tilde{\mathcal{D}}^{\prime}(\ipdm{t}), \\
	&\tilde{\mathcal{D}}^{\prime}(\ipdm{t}) = \lambda^{2} \sum_{\alpha\beta}\sum_{\{\omega\}}\sum_{q\in\integer} \gamma_{\beta\alpha}(\omega+q\Omega) \left( S_{\alpha}(\omega,q)\ipdm{t}S_{\beta}(\omega,q)^{\adj} - \frac{1}{2}\acomm{S_{\beta}(\omega,q)^{\adj}S_{\alpha}(\omega,q)}{\ipdm{t}}\right) .
\end{align}
Operator $\delta H$ is commonly known as \emph{Lamb-shift Hamiltonian}, which expresses the influence of the environment on the evolution of system of interest. By positivity of $\gamma_{\alpha\beta}$, the map $\iplgens = -i \comm{\delta H}{\,\cdot\,} + \tilde{\mathcal{D}}^{\prime}$ is explicitly time-independent generator of quantum dynamical semigroup in interaction picture, $\ipqdms{t}{t_0} = e^{(t-t_0)\iplgens}, \, t \geqslant t_0$. By general results\cite{GKS76,Lindblad76}, such maps are CPTP.
\vskip\baselineskip
In common approach, it is usually assumed that $t_0 = 0$. In such a case most of derived formulas simplify and one gets $A_{\alpha\beta} = 0$ and $\gamma_{\alpha\beta} = g_{\alpha\beta}$ is just a Fourier-transformed autocorrelation function,
\begin{equation}
	\gamma_{\alpha\beta}(x) = \int\limits_{-\infty}^{\infty} e^{-ixt^{\prime}} \tr{\dmr \ipr{\alpha}{t}\ipr{\beta}{t-t^{\prime}}} \, dt^{\prime}.
\end{equation}


\begin{thebibliography}{26}%
\makeatletter
\providecommand \@ifxundefined [1]{%
 \@ifx{#1\undefined}
}%
\providecommand \@ifnum [1]{%
 \ifnum #1\expandafter \@firstoftwo
 \else \expandafter \@secondoftwo
 \fi
}%
\providecommand \@ifx [1]{%
 \ifx #1\expandafter \@firstoftwo
 \else \expandafter \@secondoftwo
 \fi
}%
\providecommand \natexlab [1]{#1}%
\providecommand \enquote  [1]{``#1''}%
\providecommand \bibnamefont  [1]{#1}%
\providecommand \bibfnamefont [1]{#1}%
\providecommand \citenamefont [1]{#1}%
\providecommand \href@noop [0]{\@secondoftwo}%
\providecommand \href [0]{\begingroup \@sanitize@url \@href}%
\providecommand \@href[1]{\@@startlink{#1}\@@href}%
\providecommand \@@href[1]{\endgroup#1\@@endlink}%
\providecommand \@sanitize@url [0]{\catcode `\\12\catcode `\$12\catcode
  `\&12\catcode `\#12\catcode `\^12\catcode `\_12\catcode `\%12\relax}%
\providecommand \@@startlink[1]{}%
\providecommand \@@endlink[0]{}%
\providecommand \url  [0]{\begingroup\@sanitize@url \@url }%
\providecommand \@url [1]{\endgroup\@href {#1}{\urlprefix }}%
\providecommand \urlprefix  [0]{URL }%
\providecommand \Eprint [0]{\href }%
\providecommand \doibase [0]{http://dx.doi.org/}%
\providecommand \selectlanguage [0]{\@gobble}%
\providecommand \bibinfo  [0]{\@secondoftwo}%
\providecommand \bibfield  [0]{\@secondoftwo}%
\providecommand \translation [1]{[#1]}%
\providecommand \BibitemOpen [0]{}%
\providecommand \bibitemStop [0]{}%
\providecommand \bibitemNoStop [0]{.\EOS\space}%
\providecommand \EOS [0]{\spacefactor3000\relax}%
\providecommand \BibitemShut  [1]{\csname bibitem#1\endcsname}%
\let\auto@bib@innerbib\@empty
\bibitem [{\citenamefont {Lindblad}(1976)}]{Lindblad76}%
  \BibitemOpen
  \bibfield  {author} {\bibinfo {author} {\bibfnamefont {G.}~\bibnamefont
  {Lindblad}},\ }\href@noop {} {\bibfield  {journal} {\bibinfo  {journal}
  {Commun. Math. Phys.}\ }\textbf {\bibinfo {volume} {48}},\ \bibinfo {pages}
  {119} (\bibinfo {year} {1976})}\BibitemShut {NoStop}%
\bibitem [{\citenamefont {Gorini}, \citenamefont {Kossakowski},\ and\
  \citenamefont {Sudarshan}(1976)}]{GKS76}%
  \BibitemOpen
  \bibfield  {author} {\bibinfo {author} {\bibfnamefont {V.}~\bibnamefont
  {Gorini}}, \bibinfo {author} {\bibfnamefont {A.}~\bibnamefont {Kossakowski}},
  \ and\ \bibinfo {author} {\bibfnamefont {E.~C.~G.}\ \bibnamefont
  {Sudarshan}},\ }\href@noop {} {\bibfield  {journal} {\bibinfo  {journal} {J.
  Math. Phys.}\ }\textbf {\bibinfo {volume} {17}},\ \bibinfo {pages} {821}
  (\bibinfo {year} {1976})}\BibitemShut {NoStop}%
\bibitem [{\citenamefont {Davies}(1974)}]{Davies74}%
  \BibitemOpen
  \bibfield  {author} {\bibinfo {author} {\bibfnamefont {E.~B.}\ \bibnamefont
  {Davies}},\ }\href@noop {} {\bibfield  {journal} {\bibinfo  {journal}
  {Commun. Math. Phys.}\ }\textbf {\bibinfo {volume} {39}},\ \bibinfo {pages}
  {91} (\bibinfo {year} {1974})}\BibitemShut {NoStop}%
\bibitem [{\citenamefont {Davies}(1976)}]{Davies1976}%
  \BibitemOpen
  \bibfield  {author} {\bibinfo {author} {\bibfnamefont {E.~B.}\ \bibnamefont
  {Davies}},\ }\href@noop {} {\emph {\bibinfo {title} {Quantum Theory of Open
  Systems}}}\ (\bibinfo  {publisher} {Academic Press},\ \bibinfo {address}
  {London},\ \bibinfo {year} {1976})\BibitemShut {NoStop}%
\bibitem [{\citenamefont {Alicki}, \citenamefont {Lidar},\ and\ \citenamefont
  {Zanardi}(2006)}]{AlickiLidarZanardi2006}%
  \BibitemOpen
  \bibfield  {author} {\bibinfo {author} {\bibfnamefont {R.}~\bibnamefont
  {Alicki}}, \bibinfo {author} {\bibfnamefont {D.~A.}\ \bibnamefont {Lidar}}, \
  and\ \bibinfo {author} {\bibfnamefont {P.}~\bibnamefont {Zanardi}},\
  }\href@noop {} {\bibfield  {journal} {\bibinfo  {journal} {Phys. Rev. A}\
  }\textbf {\bibinfo {volume} {73}} (\bibinfo {year} {2006})}\BibitemShut
  {NoStop}%
\bibitem [{\citenamefont {Kre{\v{\i}}n}(1972)}]{Krein72}%
  \BibitemOpen
  \bibfield  {author} {\bibinfo {author} {\bibfnamefont {S.~G.}\ \bibnamefont
  {Kre{\v{\i}}n}},\ }\href@noop {} {\emph {\bibinfo {title} {Liner differential
  equations in Banach spaces}}}\ (\bibinfo  {publisher} {American Mathematical
  Society},\ \bibinfo {year} {1972})\ \bibinfo {note} {translated from
  Russian}\BibitemShut {NoStop}%
\bibitem [{\citenamefont {Chicone}(1999)}]{Chicone99}%
  \BibitemOpen
  \bibfield  {author} {\bibinfo {author} {\bibfnamefont {C.}~\bibnamefont
  {Chicone}},\ }\href@noop {} {\emph {\bibinfo {title} {Ordinary Differential
  Equations with Applications}}}\ (\bibinfo  {publisher} {Springer},\ \bibinfo
  {address} {New York},\ \bibinfo {year} {1999})\BibitemShut {NoStop}%
\bibitem [{\citenamefont {Tanabe}(1959)}]{Tanabe59}%
  \BibitemOpen
  \bibfield  {author} {\bibinfo {author} {\bibfnamefont {H.}~\bibnamefont
  {Tanabe}},\ }\href@noop {} {\bibfield  {journal} {\bibinfo  {journal} {Osaka
  Math. J.}\ }\textbf {\bibinfo {volume} {11}},\ \bibinfo {pages} {121}
  (\bibinfo {year} {1959})}\BibitemShut {NoStop}%
\bibitem [{\citenamefont {Tanabe}(1960)}]{Tanabe60}%
  \BibitemOpen
  \bibfield  {author} {\bibinfo {author} {\bibfnamefont {H.}~\bibnamefont
  {Tanabe}},\ }\href@noop {} {\bibfield  {journal} {\bibinfo  {journal} {Osaka
  Math. J.}\ }\textbf {\bibinfo {volume} {12}},\ \bibinfo {pages} {363}
  (\bibinfo {year} {1960})}\BibitemShut {NoStop}%
\bibitem [{\citenamefont {Rivas}\ and\ \citenamefont
  {Huelga}(2012)}]{HuelgaRivas2012}%
  \BibitemOpen
  \bibfield  {author} {\bibinfo {author} {\bibfnamefont {{\'{A}}.}~\bibnamefont
  {Rivas}}\ and\ \bibinfo {author} {\bibfnamefont {S.~F.}\ \bibnamefont
  {Huelga}},\ }\href@noop {} {\emph {\bibinfo {title} {Open Quantum Systems: An
  Introduction}}}\ (\bibinfo  {publisher} {Springer},\ \bibinfo {year}
  {2012})\BibitemShut {NoStop}%
\bibitem [{\citenamefont {Floquet}(1883)}]{Floquet1883}%
  \BibitemOpen
  \bibfield  {author} {\bibinfo {author} {\bibfnamefont {M.~G.}\ \bibnamefont
  {Floquet}},\ }\href@noop {} {\bibfield  {journal} {\bibinfo  {journal} {Ann.
  Ec. Norm. Suppl.}\ }\textbf {\bibinfo {volume} {12}},\ \bibinfo {pages} {47}
  (\bibinfo {year} {1883})}\BibitemShut {NoStop}%
\bibitem [{\citenamefont {Massera}\ and\ \citenamefont
  {Sch{\"{a}}ffer}(1959)}]{MasseraSchaffer59}%
  \BibitemOpen
  \bibfield  {author} {\bibinfo {author} {\bibfnamefont {J.~L.}\ \bibnamefont
  {Massera}}\ and\ \bibinfo {author} {\bibfnamefont {J.~J.}\ \bibnamefont
  {Sch{\"{a}}ffer}},\ }\href@noop {} {\bibfield  {journal} {\bibinfo  {journal}
  {Ann. Math.}\ }\textbf {\bibinfo {volume} {69}},\ \bibinfo {pages} {88}
  (\bibinfo {year} {1959})}\BibitemShut {NoStop}%
\bibitem [{\citenamefont {Sch{\"{a}}ffer}(1964)}]{Schaffer1964}%
  \BibitemOpen
  \bibfield  {author} {\bibinfo {author} {\bibfnamefont {J.~J.}\ \bibnamefont
  {Sch{\"{a}}ffer}},\ }\href@noop {} {\bibfield  {journal} {\bibinfo  {journal}
  {Bull. Amer. Math. Soc.}\ }\textbf {\bibinfo {volume} {70}},\ \bibinfo
  {pages} {243} (\bibinfo {year} {1964})}\BibitemShut {NoStop}%
\bibitem [{\citenamefont {Alicki}\ and\ \citenamefont
  {Lendi}(2006)}]{AlickiLendi}%
  \BibitemOpen
  \bibfield  {author} {\bibinfo {author} {\bibfnamefont {R.}~\bibnamefont
  {Alicki}}\ and\ \bibinfo {author} {\bibfnamefont {K.}~\bibnamefont {Lendi}},\
  }\href@noop {} {\emph {\bibinfo {title} {Quantum Dynamical Semigroups and
  Applications}}}\ (\bibinfo  {publisher} {Springer},\ \bibinfo {year}
  {2006})\BibitemShut {NoStop}%
\bibitem [{\citenamefont {Breuer}\ and\ \citenamefont
  {Petruccione}(2002)}]{BreuerPetruccione2002}%
  \BibitemOpen
  \bibfield  {author} {\bibinfo {author} {\bibfnamefont {H.-P.}\ \bibnamefont
  {Breuer}}\ and\ \bibinfo {author} {\bibfnamefont {F.}~\bibnamefont
  {Petruccione}},\ }\href@noop {} {\emph {\bibinfo {title} {The Theory of Open
  Quantum Systems}}}\ (\bibinfo  {publisher} {Oxford University Press},\
  \bibinfo {address} {New York},\ \bibinfo {year} {2002})\BibitemShut {NoStop}%
\bibitem [{\citenamefont {Attal}, \citenamefont {Joye},\ and\ \citenamefont
  {Pillet}(2006)}]{AttalJoyePillet2006}%
  \BibitemOpen
  \bibfield  {author} {\bibinfo {author} {\bibfnamefont {S.}~\bibnamefont
  {Attal}}, \bibinfo {author} {\bibfnamefont {A.}~\bibnamefont {Joye}}, \ and\
  \bibinfo {author} {\bibfnamefont {C.-A.}\ \bibnamefont {Pillet}},\
  }\href@noop {} {\emph {\bibinfo {title} {Open Quantum Systems II: The
  Markovian Approach}}}\ (\bibinfo  {publisher} {Springer},\ \bibinfo {year}
  {2006})\BibitemShut {NoStop}%
\bibitem [{\citenamefont {Alicki}(1979)}]{Alicki79}%
  \BibitemOpen
  \bibfield  {author} {\bibinfo {author} {\bibfnamefont {R.}~\bibnamefont
  {Alicki}},\ }\href@noop {} {\bibfield  {journal} {\bibinfo  {journal} {J.
  Phys. A: Math. Gen.}\ }\textbf {\bibinfo {volume} {12}},\ \bibinfo {pages}
  {L103} (\bibinfo {year} {1979})}\BibitemShut {NoStop}%
\bibitem [{\citenamefont {D{\"{u}}mke}\ and\ \citenamefont
  {Spohn}(1979)}]{DumkeSpohn1979}%
  \BibitemOpen
  \bibfield  {author} {\bibinfo {author} {\bibfnamefont {R.}~\bibnamefont
  {D{\"{u}}mke}}\ and\ \bibinfo {author} {\bibfnamefont {H.}~\bibnamefont
  {Spohn}},\ }\href@noop {} {\bibfield  {journal} {\bibinfo  {journal} {Z.
  Phys. B}\ }\textbf {\bibinfo {volume} {34}},\ \bibinfo {pages} {419}
  (\bibinfo {year} {1979})}\BibitemShut {NoStop}%
\bibitem [{\citenamefont {Szczygielski}, \citenamefont {Gelbwaser-Klimovsky},\
  and\ \citenamefont {Alicki}(2013)}]{Szczygielski2013}%
  \BibitemOpen
  \bibfield  {author} {\bibinfo {author} {\bibfnamefont {K.}~\bibnamefont
  {Szczygielski}}, \bibinfo {author} {\bibfnamefont {D.}~\bibnamefont
  {Gelbwaser-Klimovsky}}, \ and\ \bibinfo {author} {\bibfnamefont
  {R.}~\bibnamefont {Alicki}},\ }\href@noop {} {\bibfield  {journal} {\bibinfo
  {journal} {Phys. Rev. E}\ }\textbf {\bibinfo {volume} {87}} (\bibinfo {year}
  {2013})}\BibitemShut {NoStop}%
\bibitem [{\citenamefont {Gelbwaser-Klimovsky}, \citenamefont {Alicki},\ and\
  \citenamefont {Kurizki}(2013)}]{AlickiGelbwaserKurizki2013}%
  \BibitemOpen
  \bibfield  {author} {\bibinfo {author} {\bibfnamefont {D.}~\bibnamefont
  {Gelbwaser-Klimovsky}}, \bibinfo {author} {\bibfnamefont {R.}~\bibnamefont
  {Alicki}}, \ and\ \bibinfo {author} {\bibfnamefont {G.}~\bibnamefont
  {Kurizki}},\ }\href@noop {} {\bibfield  {journal} {\bibinfo  {journal} {Phys.
  Rev. E}\ }\textbf {\bibinfo {volume} {87}},\ \bibinfo {pages} {012140}
  (\bibinfo {year} {2013})}\BibitemShut {NoStop}%
\bibitem [{\citenamefont {Gelbwaser-Klimovsky}\ \emph
  {et~al.}(2013)\citenamefont {Gelbwaser-Klimovsky}, \citenamefont {Erez},
  \citenamefont {Alicki},\ and\ \citenamefont
  {Kurizki}}]{GelbwaserErezAlickiKurizki2013}%
  \BibitemOpen
  \bibfield  {author} {\bibinfo {author} {\bibfnamefont {D.}~\bibnamefont
  {Gelbwaser-Klimovsky}}, \bibinfo {author} {\bibfnamefont {N.}~\bibnamefont
  {Erez}}, \bibinfo {author} {\bibfnamefont {R.}~\bibnamefont {Alicki}}, \ and\
  \bibinfo {author} {\bibfnamefont {G.}~\bibnamefont {Kurizki}},\ }\href@noop
  {} {\bibfield  {journal} {\bibinfo  {journal} {Phys. Rev. A}\ }\textbf
  {\bibinfo {volume} {88}},\ \bibinfo {pages} {022112} (\bibinfo {year}
  {2013})}\BibitemShut {NoStop}%
\bibitem [{\citenamefont {Alicki}, \citenamefont {Levy},\ and\ \citenamefont
  {Kosloff}(2012)}]{LevyAlickiKosloff2006}%
  \BibitemOpen
  \bibfield  {author} {\bibinfo {author} {\bibfnamefont {R.}~\bibnamefont
  {Alicki}}, \bibinfo {author} {\bibfnamefont {A.}~\bibnamefont {Levy}}, \ and\
  \bibinfo {author} {\bibfnamefont {R.}~\bibnamefont {Kosloff}},\ }\href@noop
  {} {\bibfield  {journal} {\bibinfo  {journal} {Phys. Rev. E}\ }\textbf
  {\bibinfo {volume} {85}},\ \bibinfo {pages} {061126} (\bibinfo {year}
  {2012})}\BibitemShut {NoStop}%
\bibitem [{\citenamefont {Kol{\'a\v{r}}}\ \emph {et~al.}(2012)\citenamefont
  {Kol{\'a\v{r}}}, \citenamefont {Gelbwaser-Klimovsky}, \citenamefont
  {Alicki},\ and\ \citenamefont {Kurizki}}]{AlickiKolarGelbwaserKurizki2012}%
  \BibitemOpen
  \bibfield  {author} {\bibinfo {author} {\bibfnamefont {M.}~\bibnamefont
  {Kol{\'a\v{r}}}}, \bibinfo {author} {\bibfnamefont {D.}~\bibnamefont
  {Gelbwaser-Klimovsky}}, \bibinfo {author} {\bibfnamefont {R.}~\bibnamefont
  {Alicki}}, \ and\ \bibinfo {author} {\bibfnamefont {G.}~\bibnamefont
  {Kurizki}},\ }\href@noop {} {\bibfield  {journal} {\bibinfo  {journal} {Phys.
  Rev. Lett.}\ }\textbf {\bibinfo {volume} {109}},\ \bibinfo {pages} {090601}
  (\bibinfo {year} {2012})}\BibitemShut {NoStop}%
\bibitem [{\citenamefont {Frigerio}(1978)}]{Frigerio1978}%
  \BibitemOpen
  \bibfield  {author} {\bibinfo {author} {\bibfnamefont {A.}~\bibnamefont
  {Frigerio}},\ }\href@noop {} {\bibfield  {journal} {\bibinfo  {journal}
  {Comm. Math. Phys.}\ }\textbf {\bibinfo {volume} {63}},\ \bibinfo {pages}
  {269} (\bibinfo {year} {1978})}\BibitemShut {NoStop}%
\bibitem [{\citenamefont {Chu}\ and\ \citenamefont
  {Telnov}(2004)}]{ChuTelnov2004}%
  \BibitemOpen
  \bibfield  {author} {\bibinfo {author} {\bibfnamefont {S.-I.}\ \bibnamefont
  {Chu}}\ and\ \bibinfo {author} {\bibfnamefont {D.~A.}\ \bibnamefont
  {Telnov}},\ }\href@noop {} {\bibfield  {journal} {\bibinfo  {journal} {Phys.
  Rep.}\ }\textbf {\bibinfo {volume} {390}},\ \bibinfo {pages} {1} (\bibinfo
  {year} {2004})}\BibitemShut {NoStop}%
\bibitem [{\citenamefont {Ernst}, \citenamefont {Samoson},\ and\ \citenamefont
  {Meier}(2005)}]{ErnstSamosonMeier2005}%
  \BibitemOpen
  \bibfield  {author} {\bibinfo {author} {\bibfnamefont {M.}~\bibnamefont
  {Ernst}}, \bibinfo {author} {\bibfnamefont {A.}~\bibnamefont {Samoson}}, \
  and\ \bibinfo {author} {\bibfnamefont {B.~H.}\ \bibnamefont {Meier}},\
  }\href@noop {} {\bibfield  {journal} {\bibinfo  {journal} {J. Chem. Phys.}\
  }\textbf {\bibinfo {volume} {123}},\ \bibinfo {pages} {064102} (\bibinfo
  {year} {2005})}\BibitemShut {NoStop}%
\end{thebibliography}

\providecommand{\noopsort}[1]{}\providecommand{\singleletter}[1]{#1}%

\end{document}